\newcommand{\myCaption}[1]{\caption*{{\bf Algorithm} #1}}
\newcommand{\ie}{{\em i.e.,~\xspace}}
\newcommand{\eg}{{\em e.g.,~\xspace}}
\newcommand{\preDiscr}{preadjusted discretization\xspace}
\title{Bandits with Knapsacks%
\footnote{
An extended abstract of this paper \citep{BwK-focs13} was published in \emph{IEEE FOCS 2013}.
\newline \indent
This paper has undergone several rounds of revision since the original ``full version" has been published on {\tt arxiv.org} in May 2013. Presentation has been thoroughly revised throughout, from low-level edits to technical intuition and details to high-level restructuring of the paper. Some of the results have been improved: a stronger regret bound in one of the main results (Theorem~\ref{thm:Balance}), and a more general example of \preDiscr for dynamic pricing (Theorem~\ref{thm:DynProcurement-discretization}). Introduction discusses a significant amount of follow-up work, and is up-to-date regarding open questions.
\newline \indent Parts of this research have been done while A.~Badanidiyuru was a research intern at Microsoft Research and a graduate student at Cornell University, and while R.~Kleinberg was a Consulting Researcher at Microsoft Research.
\newline \indent A.~Badanidiyuru was partially supported by NSF grant IIS-0905467. R.~Kleinberg was partially supported by NSF grants CCF-0643934, IIS-0905467 and AF-0910940, a Microsoft Research New Faculty Fellowship, and a Google Research Grant.}}
\author{Ashwinkumar Badanidiyuru
\footnote{Google Research, Mountain View CA, USA.
Email: {\tt ashwinkumarbv@gmail.com}. }
\and Robert Kleinberg
\footnote{Department of Computer Science, Cornell University, Ithaca NY, USA. Email: {\tt rdk@cs.cornell.edu}.}
\and Aleksandrs Slivkins
\footnote{Microsoft Research, New York NY, USA. Email: {\tt slivkins@microsoft.com}.}
}
\date{May 2013\\This revision: September 2017}
\begin{document}

\maketitle

\newcommand{\OptPolicy}{optimal dynamic policy\xspace}
\newcommand{\PreDiscr}{Preadjusted discretization\xspace}

\renewcommand{\F}{\mathbf{\Delta}} 

\renewcommand{\eqref}[1]{Equation~(\ref{#1})}
\renewcommand{\E} {\operatornamewithlimits{\ensuremath{\mathbb{E}}}} 
\renewcommand{\Re}{\mathbb{R}}
\newcommand{\D}{\mathcal{D}}
\newcommand{\mM}{\mathcal{M}} 
\newcommand{\mF}{\mathcal{F}} 

\newcommand{\AlgFont}[1]{\ensuremath{\mathtt{#1}}}
\newcommand{\ALG}{\AlgFont{ALG}\xspace} 
\newcommand{\BwK}{\AlgFont{BwK}\xspace} 
\newcommand{\pdbwk}{\AlgFont{PrimalDualBwK}\xspace}
\newcommand{\kMAB}{\AlgFont{BalancedExploration}\xspace}

\newcommand{\arms}{X}
\newcommand{\domain}{\mathcal{M}_{\mathtt{feas}}} 
\newcommand{\OPT}{\mathtt{OPT}}
\newcommand{\LPOPT}[1][\mathtt{LP}]{\OPT_{#1}}
\newcommand{\Rew}{\mathtt{REW}} 
\newcommand{\LP}{\mathtt{LP}}
\newcommand{\UCB}{\mathtt{UCB}}
\newcommand{\LCB}{\mathtt{LCB}}



\newcommand{\SalesRate}{F}  


\newcommand{\empir}[1]{\widehat{#1}} 
\newcommand{\aveD}{\bar{\D}}    
\newcommand{\rad}{\mathtt{rad}} 
\newcommand{\chernoffC}{C_{\mathtt{rad}}} 

\newcommand{\fbt}{\frac{B}{T}} 
\newcommand{\tfbt}{\tfrac{B}{T}}

\newcommand{\Dpx}{\D_{p,x}}
\newcommand{\Dqx}{\D_{q,x}}

\newcommand{\empirdtx}[2]{\empir{\D}_{#1,#2}}
\newcommand{\avedtx}[2]{\aveD_{#1,#2}}



\newcommand{\narms}{{m}}
\newcommand{\armset}{{X}}
\newcommand{\arm}{{x}}
\newcommand{\nrsc}{{d}}
\newcommand{\rscset}{{\mathscr R}}
\newcommand{\constraint}{{\mathscr P}}
\newcommand{\optrwd}{{\OPT}}
\newcommand{\optlp}{{\LPOPT}}
\newcommand{\optucb}{{\Rew_{\mathtt{UCB}}}}
\newcommand{\algrwd}{{\Rew}}
\newcommand{\budg}{{B}}
\newcommand{\horizon}{{T}}

\newcommand{\splx}[1]{{\mathbf{\Delta}[#1]}}
\newcommand{\vctr}[1]{{\mathbf{#1}}}
\newcommand{\sbv}{{\vctr{e}}} 
\newcommand{\stime}{{\tau}}
\newcommand{\trans}{{\intercal}}
\newcommand{\ones}{{\mathbf{1}}}
\newcommand{\diag}[1]{{\operatorname{Diag} \{ #1 \}}}

\newenvironment{lparray}%
{\begin{array}{l@{\hspace{8mm}}l@{\hspace{8mm}}l}}%
{\end{array}}
\newlength{\lplb}
\setlength{\lplb}{3mm}

\newcommand{\npul}{N}
\newcommand{\vnpul}{{\vctr{\npul}}}
\newcommand{\crlz}[1]{{C}_{#1}}
\newcommand{\cemp}[1]{{\overline{C}}_{#1}}
\newcommand{\crad}[1]{{W}_{#1}}
\newcommand{\clcb}[1]{{L}_{#1}} 
\newcommand{\cdif}[1]{{E}_{#1}}
\newcommand{\rrlz}[1]{{r}_{#1}}
\newcommand{\remp}[1]{{\overline{r}_{#1}}}
\newcommand{\rrad}[1]{{w}_{#1}}
\newcommand{\rucb}[1]{{u}_{#1}} 
\newcommand{\rdif}[1]{{\delta}_{#1}}
\newcommand{\aemp}[1]{{\overline{a}_{#1}}}
\newcommand{\arad}[1]{{w}_{#1}}
\newcommand{\asum}{{A}}
\newcommand{\vrad}{{\overrightarrow{\rad}}}

\newcommand{\expthree}{{\texttt{Exp3}}}
\newcommand{\hedge}{{\texttt{{Hedge}}}}
\newcommand{\ucbone}{{\texttt{{UCB1}}}}

\newcommand{\x}{{z}}
\newcommand{\adv}[1]{\ensuremath{\mathcal{I}_{#1}}\xspace}
\newcommand{\alladv}{\mathcal{A}}
\newcommand{\reg}{\ensuremath{\mathtt{REG}}}
\newcommand{\Null}{\ensuremath{\mathtt{null}}}
\newcommand{\ucblp}{\ensuremath{\mathfrak{M}}} 


\newcommand{\Otilde}{\widetilde{O}}

\begin{abstract}

Multi-armed bandit problems are the predominant theoretical model of exploration-exploitation tradeoffs in learning, and they have countless applications ranging from medical trials, to communication networks, to Web search and advertising. In many of these application domains the learner may be constrained by one or more supply (or budget) limits, in addition to the customary limitation on the time horizon. The literature lacks a general model encompassing these sorts of problems. We introduce such a model, called \emph{bandits with knapsacks}, that combines bandit learning with aspects of stochastic integer programming. In particular, a bandit algorithm needs to solve a stochastic version of the well-known \emph{knapsack problem}, which is concerned with packing items into a limited-size knapsack. A distinctive feature of our problem, in comparison to the existing regret-minimization literature, is that the optimal policy for a given latent distribution may significantly outperform the policy that plays the optimal fixed arm. Consequently, achieving sublinear regret in the bandits-with-knapsacks problem is significantly more challenging than in conventional bandit problems.

We present two algorithms whose reward is close to the information-theoretic optimum: one is based on a novel ``balanced exploration'' paradigm, while the other is a primal-dual algorithm that uses multiplicative updates. Further, we prove that the regret achieved by both algorithms is optimal up to polylogarithmic factors. We illustrate the generality of the problem by presenting applications in a number of different domains, including electronic commerce, routing, and scheduling. As one example of a concrete application, we consider the problem of dynamic posted pricing with limited supply and obtain the first algorithm whose regret, with respect to the \OptPolicy, is sublinear in the supply.

\end{abstract}

\newpage
\setcounter{tocdepth}{2}
\tableofcontents

\newpage

\section{Introduction}
\label{sec:intro}

For more than fifty years, the multi-armed bandit problem (henceforth, \emph{MAB}) has
been the predominant theoretical model for sequential decision
problems that embody the tension between exploration and
exploitation, ``the conflict between taking
actions which yield immediate reward and taking actions
whose benefit (\eg acquiring information or preparing the ground) will come only later,'' to quote Whittle's
apt summary~\citep{whittle80}. Owing to the universal
nature of this conflict, it is not surprising that
MAB algorithms have found diverse applications
ranging from medical trials, to
communication networks, to Web search and advertising.

A common feature in many of these application domains is the presence of one or more limited-supply resources that are consumed during the decision process. For example, scientists experimenting with alternative medical treatments may be limited not only by the number of patients participating in the study but also by the cost of materials used in the treatments. A website experimenting with displaying advertisements is constrained not only by the number of users who visit the site but by the advertisers' budgets. A retailer engaging in price experimentation faces inventory limits along with a limited number of consumers. The literature on MAB problems lacks a general model that encompasses these sorts of decision problems with supply limits. Our paper contributes such a model, called \emph{bandits with knapsacks} (henceforth \BwK), in which a bandit algorithm needs to solve a stochastic, multi-dimensional version of the well-known \emph{knapsack problem}. We present algorithms whose regret (normalized by the payoff of the optimal policy) converges to zero as the resource budget and the optimal payoff tend to infinity. In fact, we prove that this convergence takes place at the information-theoretically optimal rate.

\subsection{Our model: bandits with knapsacks (\BwK)}
\label{sec:problem-description}

\xhdr{Problem definition.} A learner has a fixed set of potential actions, a.k.a. \emph{arms}, denoted
by $\armset$ and called \emph{action space}. (In our main results,
$\armset$ will be finite, but we will also consider extensions with an infinite set of arms, see Section~\ref{sec:apps} and Section~\ref{sec:discretization}.)
There are $d$ resources being consumed by the learner.
Over a sequence of time steps,
the learner chooses an arm and observes two things:
a \emph{reward} and a \emph{resource consumption vector}.
Rewards are scalar-valued, whereas resource consumption
vectors are $\nrsc$-dimensional:
the $i$-th component represents consumption of resource $i$.
For each resource $i$ there is a pre-specified
\emph{budget} $B_i$ representing the maximum amount
that may be consumed, in total.
The process stops at the first time $\stime$ when the total
consumption of some resource exceeds its budget. The objective is to maximize
the total reward received before time $\stime$.

We assume that the environment does not change over time. Formally,
the observations for a fixed arm $\arm$ in each time step
(\ie the reward and resource consumption vector) are independent samples
from a fixed joint distribution
 on $[0,1] \times [0,1]^d$,
called the \emph{latent distribution} for arm $\arm$.

There is a known, finite time horizon $T$. We model it as one of the resources, one unit of which is deterministically consumed in each decision period, and the budget is $T$.

\xhdr{Notable examples.}
The conventional MAB problem, with a finite time horizon $T$, naturally fits into this framework. A more interesting example is the \emph{dynamic pricing} problem faced by a retailer selling $B$ items to a population of $T$ unit-demand consumers who arrive sequentially. Modeling this as a \BwK problem, rounds correspond to consumers, and arms correspond to the possible prices which may be offered to a consumer. Reward is the revenue from a sale, if any.  Resource consumption vectors express the number of items sold and consumers seen, respectively. Thus, if a price $p$ is offered and accepted, the reward is $p$ and the resource consumption is {\tiny $\begin{bmatrix} 1 \\ 1 \end{bmatrix}$}. If the offer is declined, the reward is $0$ and the resource consumption is {\tiny $\begin{bmatrix} 0 \\ 1 \end{bmatrix}$}.


\asedit{A “dual” problem of dynamic pricing is \emph{dynamic procurement}, where the algorithm is “dynamically buying” rather than “dynamically selling”. The reward refers to the number of bought items, and the budget constraint $B$ now applies to the amount spent (which is why the two problems are not merely identical up to sign reversal).  If a price $p$ is offered and accepted, the reward is $1$ and the resource consumption is
    {\tiny $\begin{bmatrix} p \\ 1 \end{bmatrix}$}.
If the offer is declined, the reward is $0$ and the resource consumption is
    {\tiny $\begin{bmatrix} 0 \\ 1  \end{bmatrix}$}.
This problem is also relevant to the domain of crowdsourcing: the items “bought” then correspond to microtasks ordered on a crowdsourcing platform such as Amazon Mechanical Turk.

Another simple example concerns \emph{dynamic ad allocation} for pay-per-click ads with unknown click probabilities. There is one advertiser with several ads and budget $B$ across all ads, and $T$ users to show the ads to. The ad platform allocates one ad to a new user in each round. Whenever a given ad $x$ is chosen and clicked on, the advertiser pays a known amount $\pi_x$. To model this as a \BwK problem, arms correspond to ads, rewards are the advertiser's payments, and resource consumption refers to the amount spent by the advertiser and the number of users seen. Thus, if ad $x$ is chosen and clicked, the reward is $\pi_x$ and the resource consumption is
    {\tiny $\begin{bmatrix} \pi_x \\ 1 \end{bmatrix}$};
otherwise, the reward is $0$ and the resource consumption is
    {\tiny $\begin{bmatrix} 0 \\ 1  \end{bmatrix}$}.

All three examples can be easily generalized to multiple resource constraints: resp., to selling multiple products, procuring different types of goods, and allocating ads from multiple advertisers.
} 

\xhdr{Benchmark and regret.}
The performance of an algorithm will be measured by its
\emph{regret}: the worst case, over all possible tuples of
latent distributions, of the
difference between $\OPT$ and the algorithm's expected total reward. Here $\OPT$ is the
expected total reward of the benchmark: an \emph{\OptPolicy}, an algorithm that maximizes expected total reward given foreknowledge of the latent distributions.

In a conventional MAB problem,
the \OptPolicy is to play a fixed arm, namely the one with the highest expected
reward. In the \BwK problem, the \OptPolicy is more complex, as
the choice of an arm in a given round depends on the remaining supply of each resource.
In fact, we doubt there is a polynomial-time algorithm to compute the
\OptPolicy given the latent distributions; similar problems in optimal control have long been
known to be PSPACE-hard~\citep{PapaTsit99}.

It is easy to see that the \OptPolicy may significantly out-perform the best fixed arm. To take a simple example, consider a problem instance with $d$ resources and $d$ arms such that pulling arm $i$ deterministically produces a reward of $1$, consumes one unit of resource $i$, and does not consume any other resources. We are given an initial endowment of $\budg$ units of each resource. Any policy that plays a fixed arm $i$ in each round is limited to a total reward of $\budg$ before running out of its budget of resource $i$. Whereas an algorithm that alternates arms in a round-robin fashion achieves reward $d\budg$: $d$ times larger. Similar, but somewhat more involved examples can be found for application domains of interest, see Appendix~\ref{ap:example}. Interestingly, in all these examples it suffices to consider a \emph{time-invariant mixture} of arms, \ie a policy that samples in each period from a fixed probability distribution over arms regardless of the remaining resource supplies. In particular, in the simple example above it suffices to consider a uniform distribution.

\ascomment{The following xhdr is moved from elsewhere.}

\xhdr{Alternative definitions.}
More generally we could model the budget constraints as a downward-closed polytope $\constraint \subset \R_+^{\nrsc}$ such that the process stops when the sum of resource consumption vectors is no longer in $\constraint$. However, our assumption that $\constraint$ is a box constraint is virtually without loss of generality. If $\constraint$ is instead specified by a system of inequalities $\{A x \preceq b\}$, we can redefine the resource consumption vectors to be $Ax$ instead of $x$ and then the budget constraint is the box constraint defined by the vector $b$. The only potential downside of this transformation is that it increases the dimension of the resource vector space, when the constraint matrix $A$ has more rows than columns. However, one of our algorithms has regret depending only logarithmically on $\nrsc$, so this increase typically has only a mild effect on regret.

Our stopping condition halts the algorithm as soon as any budget is exceeded. Alternatively, we could restrict the algorithm to actions that cannot possibly violate any constraint if chosen in the current round, and stop if there is no such action. This alternative is essentially equivalent to the original version: each budget constraint changes by at most one, which does not affect our regret bounds in any significant way.

\subsection{Main results}

We seek regret bounds that are sublinear in $\optrwd$, whereas in analyzing MAB algorithms one typically expresses regret bounds as a sublinear  function of the time horizon $T$.   This is because a regret guarantee of the form $o(T)$ may be unacceptably weak for the \BwK problem because supply limits prevent the \OptPolicy from achieving a reward close to $T$. An illustrative example is the dynamic pricing problem with supply $B \ll T$: the seller can only sell $B$ items, each at a price of at most 1, so bounding the regret by any number greater than $B$ is worthless. To achieve sublinear regret, the algorithm must be able to explore each arm a significant number of times without exhausting its resource budgets. Accordingly, we parameterize our regret bound by $\budg=\min_i \budg_i$, the smallest budget constraint.%

\xhdr{Algorithms.}
We present an algorithm, called \pdbwk, whose regret is sublinear in
$\optrwd$ as both $\optrwd$ and $\budg$ tend to infinity.
More precisely, denoting the number of arms by $\narms$,
our algorithm's regret is
\begin{align}\label{eq:intro-regret}
\Otilde\left( \sqrt{\narms\, \optrwd} + \optrwd \sqrt{\narms/\budg}
\;\;\right),
\end{align}
where the $\Otilde()$ notation hides logarithmic factors.
Note that without resource constraints, \ie setting $B=T$, we recover regret
    $\Otilde(\sqrt{\narms \OPT})$,
which is optimal up to $\log$ factors~\citep{bandits-exp3}.  In fact, we prove a slightly stronger regret bound which has an optimal scaling property: if all budget constraints, including the time horizon, are increased by the factor of $\alpha$, then the regret bound scales as $\sqrt{\alpha}$.%
\footnote{The square-root scaling is optimal even for the basic MAB problem, as proved in~\citet{bandits-exp3}.}
The algorithm is computationally efficient, in a strong sense: with machine word size of $\log T$ bits or more, the per-round running time is $O(\narms \nrsc)$.  Moreover, if each arm $j$ consumes only $\nrsc_j$ resources that are known in advance, then the per-round running time is
    $O(\narms + \nrsc+\sum_j \nrsc_j)$.

We also present another algorithm, called \kMAB, whose regret bound is the same up to logarithmic factors for $d=O(1)$. \asedit{The regret bounds for the two algorithms are incomparable: while \pdbwk achieves a better dependence on $d$, \kMAB performs better in some special cases, see Appendix~\ref{app:Balance-wins} for a simple example. While \pdbwk is very computationally efficient, the specification of \kMAB involves a mathematically well-defined optimization step for which we do not provide a specific implementation, see Remark~\ref{rem:Balance-implementation} fur further discussion.
} 

\xhdr{Lower bound.} We provide a matching lower bound: we prove that the regret bound~\refeq{eq:intro-regret} is optimal up to polylogarithmic factors; moreover, this holds \emph{for any given tuple of parameters}.  Specifically, we show that for any given tuple $(\narms,\budg,\OPT)$, any algorithm for \BwK must incur regret
\begin{align}\label{eq:LB-intro}
\Omega\left( \min\left( \OPT,\;
    \OPT \sqrt{\narms/\budg}
    +\sqrt{\narms\,\OPT}\right)\right),
\end{align}
in the worst-case over all instances of \BwK with these $(\narms,\budg,\OPT)$. We also show that this dependence on the \emph{smallest} budget constraint is inevitable in the worst case.

\xhdr{Applications and special cases.}
\asedit{We derive corollaries for the three examples outlined in Section~\ref{sec:problem-description}:
\begin{itemize}
\item We obtain regret $\Otilde(B^{2/3})$ for the basic version of dynamic pricing. This is optimal for each $(B,T)$ pair \citep{DynPricing-ec12}. Prior work \citep{DynPricing-ec12,Wang-OR14} achieved $\Otilde(B^{2/3})$ regret w.r.t. the best fixed price, and $\Otilde(\sqrt{B})$ regret assuming ``regularity".%
    \footnote{``Regularity" is a standard (but limiting) condition which states that the mapping from prices to expected rewards is concave.} The former result is much weaker than ours, see Appendix~\ref{ap:example} for a simple example, and the latter result is incomparable.

\item We obtain regret $\Otilde(T/B^{1/4})$ for the basic version of dynamic procurement. Prior work \citep{DynProcurement-ec12} achieves a constant-factor approximation to the optimum with a prohibitively large constant (at least in the tens of thousands), so our result is a big improvement unless
        $\OPT\gg T/B^{1/4}$.

\item We obtain regret $\Otilde(\sqrt{B})$ for the basic version of dynamic ad allocation. This is optimal when $B=T$ (\ie when the budget constraint is void), by the basic $\sqrt{T}$ lower bound for MAB.
\end{itemize}

Our model admits numerous generalizations of these three examples, as well as applications to several other domains. To emphasize the generality of our contributions, we systematically discuss applications and corollaries in Section~\ref{sec:apps}.
Pointers to prior work on special cases of \BwK can be found in Section~\ref{sec:related-work}.
} 

\subsection{Challenges and techniques}
\label{sec:intro-techniques}

\ascomment{New subsection: assembled content from elsewhere.}

\xhdr{Challenges.}
As with all MAB problems, a central issue in \BwK is the tradeoff between exploration and exploitation. A na\"ive way to resolve this tradeoff is to separate exploration and exploitation: before the algorithm starts, the rounds are partitioned into ``exploration rounds" and ``exploitation rounds", so that the arms chosen in the former does not depend on the feedback, and the feedback from the latter is discarded.%
\footnote{While the intuition behind this definition has been well-known for some time, the precise definition is due to \citet{MechMAB-ec09,DevanurK09}.}
For example, an algorithm may pick an arm uniformly at random for a pre-defined number of rounds, then choose the best arm given the observations so far, and stick to this arm from then on. However, it tends to be much more efficient to combine exploration and exploitation by adapting the exploration schedule to observations. Typically in such algorithms all but the first few rounds serve both exploration and exploitation. Thus, one immediate challenge is to implement this approach in the context of \BwK.

The \BwK problem is significantly more difficult to solve than conventional MAB problems for the following three reasons. First, in order to estimate the performance of a given time-invariant policy, one needs to estimate the expected \emph{total} reward of this policy, rather than the per-round expected reward (because the latter does not account for resource constraints). Second, since exploration consumes resources other than time, the negative effect of exploration is not limited to the rounds in which it is performed. Since resource consumption is stochastic, this negative effect is not known in advance, and can only be estimated over time. Finally, and perhaps most importantly, the \OptPolicy can significantly outperform the best fixed arm, as mentioned above. In order to compete with the \OptPolicy, an algorithm needs, essentially, to search over mixtures of arms rather than over arms themselves, which is a much larger search space. In particular, our algorithms improve over the performance of the best fixed arm, whereas algorithms for explore-exploit learning problems typically do not.%
\footnote{A few notable exceptions are in \citep{bandits-exp3,BanditSurveys-colt13,BesbesZeevi-or12,DynProcurement-ec12}. Of these, \citet{BesbesZeevi-or12} and \citet{DynProcurement-ec12} are on special cases of \BwK, and are discussed later.
}

\xhdr{Our algorithms.} Algorithm \kMAB explicitly optimizes over mixtures of arms, based on a simple idea: balanced exploration inside confidence bounds. The design principle underlying many confidence-bound based algorithms for stochastic MAB, including the famous $\ucbone$ algorithm~\citep{bandits-ucb1} and our algorithm \pdbwk, is generally, ``Exploit as much as possible, but use confidence bounds that are wide enough to encourage some exploration.'' The design principle in \kMAB, in contrast, could be summarized as, ``Explore as much as possible, but use confidence bounds that are narrow enough to eliminate obviously suboptimal alternatives.'' Our algorithm balances exploration across arms, exploring \emph{each arm} as much as possible given the confidence bounds. More specifically, there are designated rounds when the algorithm picks a mixture that approximately maximizes the probability of choosing this arm, among the mixtures that are not obviously suboptimal given the current confidence bounds.

\OMIT{The exploration of this arm is \emph{balanced} -- not too little and not too much -- as the probability of selecting this arm is sandwiched between the maximum (among the mixtures that are not obviously suboptimal) and half of that value.}

\OMIT{ 
Intriguingly, the algorithm
matches the logarithmic regret bound of $\ucbone$ for stochastic
MAB (up to constant factors) and
achieves qualitatively similar bounds to \pdbwk for \BwK, despite being based on a design
principle that is the polar opposite of those algorithms.
We believe that the ``balanced exploration'' principle
underlying \kMAB is of independent interest.
} 

Algorithm \pdbwk is a primal-dual algorithm based
on the multiplicative weights update method. It
maintains a vector of ``resource costs'' that is
adjusted using multiplicative updates. In every
period it estimates each arm's expected reward and
expected resource consumption, using upper confidence
bounds for the former and lower confidence bounds for
the latter; then it plays the most ``cost-effective''
arm, namely the one with the highest ratio of estimated
resource consumption to estimated resource cost, using
the current cost vector. Although confidence bounds
and multiplicative updates are the bread and butter of
online learning theory, we consider this way of combining
the two techniques to be quite novel. In particular,
previous multiplicative-update algorithms in online
learning theory --- such as the $\expthree$ algorithm
for MAB~\citep{bandits-exp3} or the
weighted majority~\citep{LittWarm94} and $\hedge$~\citep{FS97}
algorithms for learning from expert
advice --- applied multiplicative updates to the
probabilities of choosing different arms (or experts).
Our application of multiplicative updates to the
dual variables of the LP relaxation of \BwK is
conceptually quite a different usage of this technique.

\asedit{Having alternative techniques to solve the same problem is generally useful in a rich problem space such as MAB. Indeed, one often needs to apply techniques beyond the original models for which they were designed, perhaps combining them with techniques that handle other facets of the problem. When pursuing such extensions, some alternatives may be more suitable than others, in particular because they are more compatible with the other techniques. We already see examples of that in the follow-up work: \citet{AgrawalDevanur-ec14} and \citet{cBwK-colt14} use some of the techniques from \kMAB and \pdbwk, resp., see Section~\ref{sec:further} for more details.}

\xhdr{LP-relaxation.} In order to compare our algorithms to $\OPT$, we compare both to a more tractable benchmark given by time-invariant mixtures of arms. More precisely, we define a linear programming relaxation for the expected total reward achieved by a time-invariant mixture of arms, and prove that the optimal value $\LPOPT$ achieved by this LP-relaxation is an upper bound for $\OPT$. Therefore it suffices to relate our algorithms to the time-invariant mixture of arms that achieves $\LPOPT$, and bound their regret with respect to $\LPOPT$.

\xhdr{Lower bounds.}
The lower bound \refeq{eq:LB-intro} is based on a simple example in which all arms have reward $1$ and 0-1 consumption of a single resource, and one arm has slightly smaller expected resource consumption than the rest. To analyze this example, we apply the KL-divergence technique from the MAB lower bound in \citet{bandits-exp3}. Some technical difficulties arise,  compared to the derivation in \citet{bandits-exp3}, because the arms are different in terms of the expected consumption rather than expected reward, and because we need to match the desired value for $\OPT$.

\xhdr{Discretization.}
In some applications, such as dynamic pricing and dynamic procurement, the action space $X$ is very large or infinite, so our main algorithmic result is not immediately applicable. However, the action space has some structure that our algorithms can leverage: \eg a price is just a number in some fixed interval. To handle such applications, we discretize the action space: we apply a \BwK algorithm with a restricted, finite action space $S\subset X$, where $S$ is chosen in advance. Immediately, we obtain a bound on regret with respect to the \OptPolicy restricted to $S$. \asedit{Further, we select $S$ so as to balance the tradeoff between $|S|$ and the \emph{discretization error}: the decrease in the performance benchmark due to restricting the action space to $S$. We call this approach \emph{\preDiscr}. While it has been used in prior work, the key step of bounding the discretization error is now considerably more difficult, as one needs to take into account resource constraints and argue about mixtures of arms rather than individual arms.

We bound discretization error for subset $S$ which satisfies certain axioms, and apply this result to handle dynamic pricing with a single product and dynamic procurement with a single budget constraint. While the former application is straightforward, the latter takes some work and uses a non-standard mesh of prices. Bounding the discretization error for more than one resource constraints (other than time) appears to be much more challenging; we only achieve this for a special case.
} 

\OMIT{ 
Algorithm \kMAB is \emph{domain-aware}, in the sense that it explicitly optimizes over all latent distributions that are feasible for a given ``BwK domain", such as dynamic pricing with limited supply. We do not provide a computationally efficient implementation for this optimization. In fact, it is not clear what model of computation would be appropriate to characterize access to the domain knowledge. Efficient implementation of \kMAB may be possible for some specific BwK domains, although we do not pursue that direction in this paper. (Recall that \pdbwk, on the other hand, is very computationally efficient.)
} 

\OMIT{ 
The regret bounds for the two algorithms are incomparable: while \pdbwk achieves a better dependence on $d$, \kMAB performs better in some special cases by virtue of being domain-aware. We illustrate this on a version of the simple deterministic example from Section~\ref{sec:problem-description} (see Appendix~\ref{app:Balance-wins}).
} 

\subsection{Follow-up work and open questions}
\label{sec:further}

Since the \BwK problem provides a novel general problem formulation in online learning, it lends itself to a rich set of research questions in a similar way as the stochastic MAB problem did following \citet{Lai-Robbins-85} and \citet{bandits-ucb1}. Some of these questions were researched in the follow-up work.

\xhdr{Follow-up work.}
Following the conference publication of this paper \citep{BwK-focs13}, there have been several developments directly \asedit{inspired by} \BwK.

\citet{AgrawalDevanur-ec14} extend \BwK from hard resource constraints and additive rewards to a more general model that allows penalties and diminishing returns. In particular, the time-averaged outcome vector $\bar{v}$ is constrained to lie in an arbitrary given convex set, and the total reward can be an arbitrary concave, Lipschitz-continuous function of $\bar{v}$. They provide several algorithms for this model whose regret scales optimally as a function of the time horizon. \asedit{Remarkably, these algorithms specialize to \emph{three} new algorithms for \BwK, based on different ideas. One of these new \BwK algorithms follows the ``optimism under uncertainty" approach from \citep{bandits-ucb1} (with an additional trick of rescaling the resource constraints). Despite the apparent simplicity, it is shown to satisfy our main regret bound \refeq{eq:intro-regret}.}

\citet{cBwK-colt14} extend \BwK to \emph{contextual bandits}: a bandit model where in each round the ``context" is revealed (\eg a user profile), then the algorithm selects an arm, and the resulting outcome (in our case, reward and resource consumption) depends on both the chosen arm and the context. \citet{cBwK-colt14} merge \BwK and \emph{contextual bandits with policy sets} \citep{Langford-nips07}, a well-established, very general model for contextual bandits. They achieve regret that scales optimally in terms of the time horizon and the number of policies (resp., square-root and logarithmic). Akin to \kMAB, their algorithm is not computationally efficient.

Both \citet{AgrawalDevanur-ec14} and \citet{cBwK-colt14} take advantage of various techniques developed in this paper. First, both papers use (a generalization of) linear relaxations from Section~\ref{sec:lp}. In fact, the two claims in Section~\ref{sec:lp} are directly used in~\citet{cBwK-colt14} to derive the corresponding statements for the contextual version. Second, \citet{cBwK-colt14} build on the design and analysis of \kMAB, and merging them with a technique from prior work on contextual bandits \citep{policy_elim}. Third, the analysis of one of the algorithms in \citet{AgrawalDevanur-ec14} relies on the bound on error terms (Lemma~\ref{lem:vector-conf}) from our analysis of \pdbwk. Fourth, the analysis of discretization errors in \citet{cBwK-colt14} uses a technique from Section~\ref{sec:discretization}.

Two recent developments, \citet{CBwK-colt16} and \citet{CBwK-nips16}, concern the contextual version of \BwK. \citet{CBwK-colt16} consider a common generalization of the extended \BwK model in \citep{AgrawalDevanur-ec14} and the contextual \BwK model in  \citep{cBwK-colt14}. In particular, for the latter model they achieve the same regret as \citet{cBwK-colt14}, but with a computationally efficient algorithm, resolving the main open question in that paper. On a technical level, their work combines ideas from \citep{AgrawalDevanur-ec14} and a recent break-through in contextual bandits \citep{monster-icml14}. \citet{CBwK-nips16} extend the model in \citep{AgrawalDevanur-ec14} to contextual bandits with a linear dependence on contexts (\eg see \cite{Reyzin-aistats11-linear}), achieving an algorithm with optimal dependence on the time horizon and the dimensionality of contexts.%
\footnote{\citet{AgrawalDevanur-ec14} prove a similar result for a special case when contexts do not change over time. They also claimed an extension to time-varying contexts, which has subsequently been retracted (see Footnote 1 in \citet{CBwK-nips16}).}

\xhdr{Open questions (current status).}
While the general regret bound in \eqref{eq:intro-regret} is optimal up to logarithmic factors,
better algorithms may be possible for various special cases. To rule out a domain-specific result that improves upon the general regret bound, one would need to prove a lower bound which, unlike the one in \eqref{eq:LB-intro}, is specific to that domain. Currently domain-specific lower bounds are known only for the basic $K$-armed bandit problem and for dynamic pricing.

For problems with infinite multi-dimensional action spaces, such as dynamic pricing with multiple products and dynamic procurement with multiple budgets, we are limited by the lack of a general approach to upper-bound the discretization error and choose the \preDiscr in a principled way.  A similar issue arises in the contextual extension of \BwK studied in \citet{cBwK-colt14} and \citet{CBwK-colt16}, even for a single resource constraint. To obtain regret bounds that do not depend on a specific choice of \preDiscr, one may need to go beyond \preDiscr.

The study of multi-armed bandit problems with
large strategy sets
has been a very fruitful line of investigation.
It seems likely that some of the techniques introduced
here could be wedded with the techniques from
that literature. In particular, it would be intriguing
to try combining our primal-dual algorithm \pdbwk with
confidence-ellipsoid algorithms for stochastic linear optimization
(\eg see \citet{DaniHK-colt08}), or enhancing the
\kMAB algorithm with the technique of adaptively
refined discretization, as in the zooming
algorithm of \citet{LipschitzMAB-stoc08}.

It is tempting to ask about a version of \BwK in which the rewards and resource consumptions are chosen by an adversary. Achieving sublinear regret bounds for this version appears hopeless even for the  fixed-arm benchmark. In order to make progress in the positive direction, one may require  a more subtle notion of benchmark and/or restrictions on the power of the adversary.

\OMIT{ 
Finally, some open questions concern the contextual extension of \BwK studied in \citet{cBwK-colt14} and \citet{CBwK-colt16}. First, it is not clear how to bound discretization error and choose \preDiscr for contextual dynamic pricing/procurement.  Second, the regret bounds in~\cite{cBwK-colt14} and \citet{CBwK-colt16}, while optimal in the worst case, may be significantly sub-optimal for some application domains and/or for some (classes of) tuples $(\narms,\budg,\OPT)$.
} 

\subsection{Related work}
\label{sec:related-work}

The study of prior-free algorithms for stochastic
MAB problems was initiated by \citet{Lai-Robbins-85} and \citet{bandits-ucb1}. Subsequent work
supplied algorithms for stochastic MAB
problems in which the set of arms can be infinite and the
payoff function is linear,
concave,
or Lipschitz-continuous;
see a recent survey \citep{Bubeck-survey12} for more background.
Confidence bound techniques have been an integral part
of this line of work, and they remain integral to ours.

As explained earlier, stochastic MAB problems
constitute a very special
case of bandits with knapsacks, in which there is only one type
of resource and it is consumed deterministically at rate 1.
Several papers have considered the
natural generalization in which there is a single resource (other than time), with
deterministic consumption, but different arms consume the resource
at different rates. \citet{GuhaM-icalp09}
gave a constant-factor
approximation algorithm for the Bayesian case of this problem,
which was later generalized by \citet{GuptaKMR-focs11}
to settings in which the arms' reward processes need not be martingales.
\citet{TranThanh-aaai10,TranThanh-aaai12} presented prior-free algorithms for this problem; the best such algorithm achieves a regret guarantee qualitatively similar to
that of the $\ucbone$ algorithm.

\asedit{ 
Several recent papers study models that, in hindsight, can be cast as special cases of \BwK:
\begin{itemize}
\item The two papers \citep{TranThanh-aaai10,TranThanh-aaai12} mentioned above and \citet{Qin-aaai13} consider models with a single resource and unlimited time.

\item Dynamic pricing with limited supply has been studied in \citep{BZ09,DynPricing-ec12,BesbesZeevi-or12,Wang-OR14}.%
    \footnote{The earlier papers~\citep{Blum03,Bobby-focs03} focus on the special case of unlimited supply. While we only cited papers that pursue regret-minimizing formulation of dynamic pricing, Bayesian and parametric formulations versions have a rich literature in Operations Research and Economics, see \citet{Boer-survey15} for a literature review.}

\item The basic version of dynamic procurement (as per Section~\ref{sec:problem-description}) has been studied in  \citep{DynProcurement-ec12,Krause-www13}.%
    \footnote{The regret bound in \citep{Krause-www13} is against the best-fixed-price benchmark, which may be much smaller than $\OPT$, see Appendix~\ref{ap:example} for a simple example. Benchmarks aside, one cannot directly compare our regret bound and theirs, because they do not derive a worst-case regret bound. \citep{Krause-www13} is simultaneous work w.r.t. our conference publication.}
    More background on the connection to crowdsourcing can be found in the survey \citet{Crowdsourcing-PositionPaper13}.

    \item Dynamic ad allocation (without budget constraints) and various extensions thereof that incorporate user/webpage context have received a considerable attention, starting with \citep{yahoo-bandits07,yahoo-bandits-icml07,Langford-nips07}. In fact, the connection to pay-per-click advertising has been one of the main drivers for the recent surge of interest in MAB.

    \item \citep{AminK-uai12,TranThanh-uai14} study repeated bidding on a budget, and \citet{RepeatedAuctions-soda13} study adjusting a repeated auction (albeit without inventory constraints); see Section~\ref{sec:apps} for more details on these special cases.

    \item Perhaps the earliest paper on resource consumption in MAB is \citet{Gyorgy-ijcai07}. They consider a contextual bandit model where the only resource is time, consumed at different rate depending on the context and the chosen arm. The restriction to a single context is a special case of \BwK.

\end{itemize}

\PreDiscr has been used in prior work on MAB on metric spaces (\eg \citep{Bobby-nips04,Hazan-colt07,LipschitzMAB-stoc08,Pal-Bandits-aistats10}) and dynamic pricing
(\eg \citep{Bobby-focs03,Blum03,BZ09,DynPricing-ec12}). However, bounding the discretization error in \BwK is much more difficult.
} 

\asedit{Our \kMAB algorithm extends the ``active arms elimination" algorithm \citep{EvenDar-colt02} for the stochastic MAB problem, where one iterates over arms that are not obviously suboptimal given the current confidence bounds . The novelty is that our algorithm chooses over \emph{mixtures} of arms, and the choice is ``balanced" across arms. ``Policy elimination" algorithm of \citet{policy_elim} extends ``active arms elimination" in a different direction: to contextual bandits. Like \kMAB, policy elimination algorithm makes a ``balanced" choice among objects that are more complicated than arms, and this choice is not computationally efficient; however, the technical details are very different.}

While \BwK is primarily an online learning problem, it also has
elements of a stochastic packing problem. The literature on
prior-free algorithms for stochastic packing has flourished
in recent years, starting
with prior-free algorithms for the stochastic AdWords
problem \citep{DevanurH-ec09}, and continuing with a series
of papers extending these results from AdWords to more
general stochastic packing
integer programs while also achieving stronger performance
guarantees \citep{AgrawalWY-OR14,DevanurJSW-ec11,FeldmanHKMS-esa10,MolinaroR-icalp12}. A running theme of these papers (and also of the primal-dual
algorithm in this paper) is the idea of estimating of an optimal dual vector
from samples, then using this dual to guide subsequent primal decisions.
Particularly relevant to our work is the algorithm
of \citet{DevanurJSW-ec11}, in which the dual vector
is adjusted using multiplicative updates, as we do in our algorithm.
However, unlike the \BwK problem, the stochastic packing
problems considered in prior work are not learning problems:
they are full information problems in which the costs and
rewards of decisions in the past and present are fully known.
The only uncertainty is about the future.) As such, designing
algorithms for \BwK requires a substantial departure from past
work on stochastic packing. Our primal-dual algorithm depends upon
a hybrid of confidence-bound techniques from online learning
and primal-dual techniques from the literature on solving packing
LPs; combining them requires
entirely new techniques for bounding the magnitude of the error
terms that arise in the analysis. Moreover, our
\kMAB algorithm manages to
achieve strong regret guarantees without even computing
a dual solution.

\section{Preliminaries}
\label{sec:prelims}

{\noindent \bf \BwK: problem formulation.} There is a fixed and known, finite set of $\narms$ \emph{arms} (possible actions), denoted $\arms$. There are $\nrsc$ resources being consumed. The time proceeds in $T$ rounds, where $T$ is a finite, known time horizon. In each round $t$, an algorithm picks an arm $x_t\in \arms$, receives reward $r_t\in [0,1]$, and consumes some amount $c_{t,i} \in [0,1]$ of each resource $i$. The values $r_t$ and $c_{t,i}$ are revealed to the algorithm after the round. There is a hard constraint $B_i\in \Re_+$ on the consumption of each resource $i$; we call it a \emph{budget} for resource $i$.
The algorithm stops at the earliest time $\stime$ when one or more
budget constraint is violated; its total reward is equal to the sum
of the rewards in all rounds strictly preceding $\stime$.
The goal of the algorithm is to maximize the expected total reward.

The vector $(r_t; c_{t,1}, c_{t,2} \LDOTS c_{t,d})\in [0,1]^{d+1}$ is called the \emph{outcome vector} for round $t$. We assume \emph{stochastic outcomes}: if an algorithm picks arm $x$, the outcome vector is chosen independently from some fixed distribution $\pi_x$ over $[0,1]^{d+1}$.
The distributions $\pi_x$, $x\in \arms$ are not known to the algorithm. The tuple $(\pi_x:\, x\in X)$ comprises all latent information in the problem instance. A particular \BwK setting (such as ``dynamic pricing with limited supply'') is defined by the set of all feasible tuples $(\pi_x:\, x\in X)$. This set, called the \emph{BwK domain}, is known to the algorithm.

We compare the performance of our algorithms to the expected total reward of the \OptPolicy given all the latent information, which we denote by $\OPT$. (Note that $\OPT$ depends on the latent information, and therefore is a latent quantity itself.) \emph{Regret} is defined as $\OPT$ minus the expected total reward of the algorithm.

\xhdr{W.l.o.g. assumptions.}
For technical convenience, we make several assumptions that are w.l.o.g.

We express the time horizon as a resource constraint: we model time as a specific resource, say resource $1$, such that every arm deterministically consumes $B_1/T$ units of this resource whenever it is picked.
W.l.o.g., $B_i \leq T$ for every resource $i$.

We assume there exists an arm, called the \emph{null arm} which yields no reward and no consumption of any resource other than time. Equivalently, an algorithm is allowed to spend a unit of time without doing anything. Any algorithm $\ALG$ that uses the null arm can be transformed, without loss in expected total reward, to an algorithm $\ALG'$ that does not use the null arm. Indeed, in each round $\ALG'$ runs $\ALG$ until it selects a non-null arm $x$ or halts. In the former case, $\ALG'$ selects $x$ and returns the observe feedback to $\ALG$. After $\ALG$ halts, $\ALG'$ selects arms arbitrarily.

We say that the budgets are \emph{uniform} if $B_i=B$ for each resource $i$. Any \BwK instance can be reduced to one with uniform budgets by dividing all consumption values for every resource $i$ by $B_i/B$, where $B=\min_i B_i$. (That is tantamount to changing the units in which we measure consumption of resource $i$.) Our technical results are for \BwK with uniform budgets. We will assume uniform budgets $B$ from here on.

\xhdr{Useful notation.}
Let $\mu_x = \E[\pi_x] \in [0,1]^{d+1}$ be the expected outcome vector for each arm $x$, and denote $\mu = (\mu_x:\, x\in X)$. We call $\mu$ the \emph{latent structure} of a problem instance. The BwK domain induces a set of feasible latent structures, which we denote $\domain$.

For notational convenience, we will write
    $\mu_x = \left(\; r(x,\mu);\; c_1(x,\mu) \LDOTS c_d(x,\mu) \; \right)$.
Also, we will write the expected consumption as a vector
    $c(x,\mu) = \left(\; c_1(x,\mu) \LDOTS c_d(x,\mu) \; \right)$.

If $\D$ is a distribution over arms, let
    $r(\D,\mu) = \sum_{x\in \arms} \D(x)\, r(x,\mu)$
and
    $c(\D,\mu) = \sum_{x\in \arms} \D(x)\,c(x,\mu)$
be, respectively, the expected reward and expected resource consumption in a single round if an arm is sampled from distribution $\D$. Let $\Rew(\D,\mu)$ denote the expected total reward of the time-invariant
policy that uses distribution $\D$.

\OMIT{ 
When discussing our primal-dual algorithm, it will be useful
to represent the latent values and the algorithm's decisions
as matrices and vectors. For this purpose, we will number the
arms as $\arm_1,\ldots,\arm_{\narms}$ and let $r \in \R^{\narms}$
denote the vector whose $j^{\mathrm{th}}$ component is $r(\arm_j,\mu)$.
Similarly we will
let $C \in \R^{\nrsc \times \narms}$ denote the matrix whose
$(i,j)^{\mathrm{th}}$ entry is $c_i(\arm_j,\mu)$.
} 


\xhdr{High-probability events.}
We will use the following expression, which we call the \emph{confidence radius}.
\begin{align}\label{eq:conf-rad-defn}
 \rad(\nu,N) =
        \sqrt{\frac{\chernoffC\, \nu }{N}} + \frac{\chernoffC}{N}.
\end{align}
Here $\chernoffC=\Theta(\log (d\,T|X|))$ is a parameter which we will fix later; we will keep it implicit in the notation. The meaning of \eqref{eq:conf-rad-defn} and $\chernoffC$ is explained by the following tail inequality from \citep{LipschitzMAB-stoc08,DynPricing-ec12}.%
\footnote{Specifically, this follows from Lemma 4.9 in the full version of \citet{LipschitzMAB-stoc08}, and Theorem 4.8 and Theorem 4.10 in the full version of \citet{DynPricing-ec12} (both full versions can be found on arxiv.org).}

\begin{theorem}[\cite{LipschitzMAB-stoc08,DynPricing-ec12}]
\label{thm:conf-rad}
Consider some distribution with values in $[0,1]$ and expectation $\nu$. Let $\empir{\nu}$ be the average of $N$ independent samples from this distribution. Then
\begin{align}\label{eq:thm:conf-rad}
\Pr\left[ \; |\nu - \empir{\nu}|
    \leq \rad(\empir{\nu},N)
    \leq 3\,\rad(\nu,N) \; \right] \geq 1-e^{-\Omega(\chernoffC)},
    \quad \text{for each $\chernoffC>0$}.
\end{align}
More generally, \eqref{eq:thm:conf-rad} holds if $X_1, \ldots, X_N \in [0,1]$ are random variables,
    $\empir{\nu} = \tfrac{1}{N} \sum_{t=1}^N X_t$
is the sample average, and
    $\nu = \tfrac{1}{N} \sum_{t=1}^N\; \E[X_t \,|\, X_1,\,\ldots, X_{t-1}]$.
\end{theorem}

If the expectation $\nu$ is a latent quantity, \eqref{eq:thm:conf-rad} allows us to estimate $\nu$ by a high-confidence interval
\begin{align}\label{eq:nu-conf-int}
    \nu \in [\empir{\nu} - \rad(\empir{\nu},N),\; \empir{\nu} + \rad(\empir{\nu},N)],
\end{align}
whose endpoints are observable (known to the algorithm). This estimate is on par with the one provided by Azuma-Hoeffding inequality (up to constant factors), but is much sharper for small $\nu$.%
\footnote{Essentially, Azuma-Hoeffding inequality states that
$|\nu - \empir{\nu}| \leq O(\sqrt{\chernoffC/N})$,
whereas by Theorem~\ref{thm:conf-rad} for small $\nu$ it holds with high probability that
    $\rad(\empir{\nu},N)\sim \chernoffC/N$.}

It is sometimes useful to argue about \emph{any} $\nu$ which lies in the high-confidence interval \refeq{eq:nu-conf-int}, not just the latent $\nu = \E[\empir{\nu}]$. We use the following claim which is implicit in \citet{LipschitzMAB-stoc08}.

\begin{claim}[\cite{LipschitzMAB-stoc08}]\label{cl:rad-to-rad}
For any $\nu,\empir{\nu}\in [0,1]$, \eqref{eq:nu-conf-int} implies that
    $\rad(\empir{\nu},N) \leq 3\,\rad(\nu,N)$.
\end{claim}

\section{LP relaxation for policy value}
\label{sec:lp}

$\OPT$ --- the expected reward of the \OptPolicy given foreknowledge of the
distribution of outcome vectors --- is typically difficult to characterize
exactly. In fact, even for a time-invariant policy,
it is difficult to give an exact
expression for the expected reward due to the dependence of the reward
on the random stopping time when the resource budget is exhausted.
To approximate these quantities, we consider the fractional relaxation of
\BwK in which the number of rounds in which a given arm is selected (and also the total number of rounds) can be
fractional, and the reward and resource consumption per unit time are
deterministically equal to the corresponding expected values in the original
instance of \BwK.

The following linear program constitutes our fractional relaxation of the \OptPolicy.
\begin{align}\label{lp:primal}
\begin{array}{lrcll}
	\max         & \sum_{x\in X}\, \xi_x\, r(x,\mu) & &            & \text{in $\xi_x \in \R$, for each $x\in X$} \\
	\text{s.t.}  & \sum_{x\in X} \xi_x\, c_i(x,\mu)  & \leq & B  & \text{for each resource $i$} \\
			     & \xi_x     & \geq & 0                            & \text{for each arm $x$}.
\end{array}
\tag{\texttt{LP-primal}}
\end{align}

\noindent The variables $\xi_x$ represent the fractional relaxation for the number of rounds in which a given arm $x$ is selected. This is a bounded LP
(because $\sum_x\, \xi_x\, r(x,\mu) \leq  \sum_x\, \xi_x \leq T$).
The optimal value of this LP is denoted by $\LPOPT$. We will also use the dual LP, shown below.
\begin{align}\label{lp:dual}
\begin{array}{lrcll}
	\min         & B\,\sum_{i}\, \eta_i & &                    & \text{in $\eta_i \in \R$, for each resource $i$} \\
	\text{s.t.}  & \sum_{i} \eta_i\, c_i(x,\mu)  &\geq& r(x,\mu)  & \text{for each arm $x\in X$} \\
			     & \eta_i     & \geq & 0                           & \text{for each resource $i$}.
\end{array}
\tag{\texttt{LP-dual}}
\end{align}
The dual variables $\eta_i$ can be interpreted as a unit cost for the corresponding resource $i$.

\OMIT{ 
\begin{figure}[h]
\centering
\begin{minipage}[t]{0.25\linewidth}\centering
\begin{equation}
\begin{lparray}
\max & r^\trans \xi \\[\lplb]
\mbox{s.t.} & C \xi \preceq B \ones \\[\lplb]
& \xi \succeq 0
\end{lparray}
\tag{P}
\label{lp:primal}
\end{equation}
\end{minipage}
\hspace{0.9cm}
\begin{minipage}[t]{0.25\linewidth}\centering
\begin{equation}
\begin{lparray}
\min & B \ones^\trans \eta \\[\lplb]
\mbox{s.t.} & C^\trans \eta \succeq r \\[\lplb]
& \eta \succeq 0
\end{lparray}
\tag{D}
\label{lp:dual}
\end{equation}
\end{minipage}
\end{figure}
}

\begin{lemma} \label{lem:lp-relax}
$\LPOPT$ is an upper
bound on the value of the \OptPolicy: $\LPOPT \geq \OPT$.
\end{lemma}

One way to prove this lemma is to define $\xi_x$ to be the expected
number of times arm $x$ is played by the \OptPolicy,
and argue that the vector $(\xi_x, x\in X)$ is primal-feasible and that
    $\sum_x \xi_x\, r(x,\mu)$
is the expected reward of the \OptPolicy. We instead present a
simpler proof using \refeq{lp:dual} and a martingale argument. 
\asedit{A similar lemma (but for a technically different setting of online stochastic packing problems) was proved in \citet{DevanurJSW-ec11}.}

\begin{proof}[Proof of Lemma~\ref{lem:lp-relax}]
Let
    $\eta^* = (\eta^*_1 \LDOTS \eta^*_d)$
denote an optimal solution to~\refeq{lp:dual}.
Interpret each $\eta^*_i$ as a unit cost for the corresponding resource $i$. By
strong LP duality, we have
$B\,\sum_{i}\, \eta^*_i = \LPOPT$.
Dual feasibility implies that for each
arm $x$, the expected cost of resources consumed when $x$
is pulled exceeds the expected reward produced. Thus, if we let
$Z_t$ denote the sum of rewards gained in rounds $1,\ldots,t$ of the \OptPolicy,
plus the cost of the remaining resource endowment after round $t$,
then the stochastic process $Z_0,Z_1,\ldots,Z_T$ is a supermartingale.
Let $\stime$ be the stopping time of the algorithm, i.e. the total number of rounds.
Note that
    $Z_0 = B\, \sum_{i}\, \eta^*_i = \LPOPT$,
and
$Z_{\stime-1}$ equals the algorithm's total payoff, plus the cost of
the remaining (non-negative) resource supply at the start of
round $\stime$.
By Doob's optional stopping theorem, $Z_0 \geq \E[Z_{\stime-1}]$
and the lemma is proved.
\end{proof}

\begin{remark}\label{rem:LP-intuition}
\asedit{Implicit in this proof is a simple, but powerful observation that for any algorithm,
\[ \LPOPT-\Rew \geq 
    \textstyle \E \left[ \sum_t r(x_t,\mu) - c(x_t,\mu)\cdot \eta^* \right].
\]
Each summand on the right-hand side is non-negative, and equals 0 if and only if the arm $x_t$ lies in the support of the primal solution.
We use this observation to motivate the design of our primal-dual algorithm.}
\end{remark}

\begin{remark}\label{rem:regret-LPOPT}
For each of the two main algorithms, we prove a regret bound of the form
\begin{align}\label{eq:regret-LPOPT}
\LPOPT - \Rew \leq f(\LPOPT),
\end{align}
where $\Rew$ is the expected total reward of the algorithm, and $f()$ depends only on parameters $(\budg,\narms,\nrsc)$. This regret bound has an optimal scaling property, highlighted in the Introduction:  if all budget constraints, including the time horizon, are increased by the factor of $\alpha$, then the regret bound $f(\LPOPT)$ scales as $\sqrt{\alpha}$.

Regret bound \refeq{eq:regret-LPOPT} implies the claimed regret bounds relative to $\OPT$ because
\begin{align}\label{eq:regret-LPOPT-OPT}
\Rew \geq \LPOPT - f(\LPOPT) \geq \OPT-f(\OPT),
\end{align}
where the second inequality follows trivially because
    $g(x) = \max(x-f(x),0)$
is a non-decreasing function of $x$ for $x\geq 0$, and $\LPOPT\geq \OPT$.
\end{remark}

Let us apply a similar LP-relaxation to a time-invariant policy that uses distribution $\D$ over arms. We approximate the expected total reward of this policy in a similar way: we define a linear program in which the only variable $t$ represents the expected stopping time of the algorithm.
\begin{align}\label{eq:LP-D}
\begin{array}{lrcll}
    \max         & t\, r(\D,\mu) &     & & \text{in $t \in \R$} \\
	\text{s.t.}  & t\, c_i(\D,\mu)    & \leq & B & \text{for each resource $i$} \\
			             & t     & \geq & 0.
\end{array}
\tag{\texttt{LP-distr}}
\end{align}
The optimal value to~\refeq{eq:LP-D}, which we call the \emph{LP-value} of $\D$, is
\begin{align}\label{eq:LP-value}
\LP(\D,\mu)
    = r(\D,\mu)\; \min_i\; \frac{B}{c_i(\D,\mu)}.
\end{align}
Observe that $t$ is feasible for \refeq{eq:LP-D} if and only if
$\xi = t \D$ is feasible for~\refeq{lp:primal}. Therefore
    $$\LPOPT = \sup_\D \LP(\D,\mu).$$
This supremum is attained by any distribution
    $\D^* = \xi/ \, \|\xi\|_1$
such that $\xi = (\xi_x: x\in X)$ is an optimal solution to~\refeq{lp:primal}.
A distribution
    $\D^*\in \argmax_\D \LP(\D,\mu)$
is called \emph{LP-optimal} for $\mu$.

\begin{claim}
\label{cl:LP-properties}
For any latent structure $\mu$, there exists a distribution $\D$ over arms which is LP-optimal for $\mu$ and moreover satisfies the following three properties:
\begin{OneLiners}
\item[(a)] $c_i(\D,\mu) \leq B/T$ for each resource $i$.
\item[(b)] $\D$ has a support of size at most $d$.
\item[(c)] If $\D$ has a support of size exactly $2$ then for some resource $i$ we have $c_i(\D,\mu) = B/T$.
\end{OneLiners}
(Such distribution $\D$ will be called {\bf\em LP-perfect} for $\mu$.)
\end{claim}

\begin{proof}
Fix the latent structure $\mu$. It is a well-known fact that for any linear program there exists an optimal solution whose support has size that is exactly equal to the number of constraints that are tight for this solution. Take any such optimal solution $\xi = (\xi_x: x\in X)$ for \refeq{lp:primal}, and take the corresponding LP-optimal distribution
    $\D = \xi/ \|\xi\|_1$.
Since there are $d$ constraints in~\refeq{lp:primal}, distribution $\D$ has support of size at most $d$. If it satisfies (a), then it also satisfies (c) (else it is not optimal), and we are done.

Suppose property (a) does not hold for $\D$. Then there exists a resource $i$ such that $c_i(\D,\mu) > B/T$. Since the $i$-th constraint in \refeq{lp:primal} can be restated as
    $\|\xi\|_1\, c_i(\D,\mu) \leq B$,
it follows that $\|\xi\|_1 <T$. Therefore the constraint in \refeq{lp:primal} that expresses the time horizon is not tight. Consequently, at most $d-1$ constraints in \refeq{lp:primal} are tight for $\xi$, so the support of $\D$ has size at most $d-1$.

Let us modify $\D$ to obtain another LP-optimal distribution $\D'$ which satisfies properties (a-c). W.l.o.g., pick $i$ to maximize $c_i(\D,\mu)$ and let $\alpha = \tfrac{B}{T}/c_i(\D,\mu)$. Define $\D'(x) = \alpha\, \D(x)$ for each non-null arm $x$ and place the remaining probability in $\D'$ on the null arm. This completes the definition of $\D'$.

Note that
    $c_j(\D',\mu) = \alpha\, c_j(\D,\mu) \leq B/T$
for each resource $j$, with equality for $j=i$. Hence, $\D'$ satisfies properties (a) and (c). Also,
    $r(\D',\mu) = \alpha\, r(\D,\mu)$,
and so
\begin{align*}
\LP(\D',\mu)
    = r(\D',\mu)\; \tfrac{B}{c_i(\D',\mu)}
    = r(\D,\mu) \; \tfrac{B}{c_i(\D,\mu)}
    = \LP(\D,\mu).
\end{align*}
Therefore $\D'$ is LP-optimal. It satisfies property (b) because it adds at most one to the support of $\D$.
\end{proof}

\section{Algorithm \kMAB}
\label{sec:Balance}

This section presents and analyzes \kMAB, one of the two main algorithms. The design principle behind \kMAB is to explore as much as possible while avoiding obviously suboptimal strategies.  On a high level, the algorithm is very simple. The goal is to converge on an LP-perfect distribution. The time is divided into phases of $|X|$ rounds each. In the beginning of each phase $p$, the algorithm prunes away all distributions $\D$ over arms that with high confidence are not LP-perfect given the observations so far. The remaining distributions over arms are called \emph{potentially perfect}. Throughout the phase, the algorithm chooses among the potentially perfect distributions. Specifically, for each arm $x$, the algorithm chooses a potentially perfect distribution $\Dpx$ which approximately maximizes $\Dpx(x)$, and ``pulls" an arm sampled independently from this distribution. This choice of $\Dpx$ is crucial; we call it the \emph{balancing step}. The algorithm halts as soon as the time horizon is met, or any of the constraints is exhausted. The pseudocode is given in Algorithm~\ref{alg:Balance}.

\newcommand{\TAB}{\hspace{4mm}}
\floatname{algorithm}{Algorithm}
\begin{algorithm}[h]
\caption{\kMAB}
\label{alg:Balance}
\begin{algorithmic}[1]
\STATE {\bf For} each phase $p=0,1,2,\, \ldots$ {\bf do}
\STATE \TAB Recompute the set $\F_p$ of potentially perfect distributions $\D$ over arms.
\STATE \TAB Over the next $|X|$ rounds, for each $x\in X$:
\STATE \TAB\TAB pick any distribution $\D=\Dpx\in \F_p$ such that
                    $\D(x) \geq \tfrac12\, \max_{\D'\in \F_p} \D'(x)$.%
\STATE \TAB\TAB choose an arm to ``pull" as an independent sample from $\D$.
\STATE \TAB\TAB {\bf halt} if time horizon is met or one of the resources is exhausted.
\end{algorithmic}
\end{algorithm}

We believe that \kMAB, like $\ucbone$ \citep{bandits-ucb1}, is a very general design principle and has the potential to be a meta-algorithm for solving stochastic online learning problems.

\begin{theorem}\label{thm:Balance}
Consider an instance of \BwK with $d$ resources, $m=|X|$ arms, and the smallest budget $B = \min_i B_i$. Algorithm \kMAB achieves regret
\begin{align}\label{eq:Balance-regret}
\LPOPT - \Rew \leq O(\log(\horizon) \log(\horizon / \narms))
\left(
    \sqrt{\nrsc \narms \LPOPT \,} + \LPOPT
        \sqrt{\frac{\nrsc \narms}{\budg}} \;
\right).
\end{align}
Moreover, \eqref{eq:regret-LPOPT-OPT} holds with $f(\LPOPT)$ equal to the right-hand side of \eqref{eq:Balance-regret}.
\end{theorem}

\begin{remark}\label{rem:Balance-implementation}
The specification of \kMAB involves a mathematically well-defined step --- approximate optimization over potentially perfect distributions --- for which we do not provide a specific implementation. Yet, \kMAB is a bandit algorithm in the sense that it is a well-defined mapping from histories to actions. We prove an ``information-theoretic" statement: there is an algorithm with the claimed regret. Such results are not uncommon in the literature, \eg \citep{LipschitzMAB-stoc08,DichotomyMAB-soda10,monster-icml14}, typically as first solutions for new, broad problem formulations, and are meaningful as proof-of-concept for the corresponding regret bounds and techniques.

\OMIT{The said optimization step involves optimization over all latent distributions that are feasible for a given ``BwK domain", such as dynamic pricing with limited supply. It is not even clear what model of computation (\eg which oracle) would be appropriate to characterize access to the domain knowledge. Efficient implementation of the said optimization step may be possible for some specific BwK domains (but we do not pursue this direction).}
\end{remark}

%


\xhdr{Remaining details of the specification.}
In the beginning of each phase $p$, the algorithm recomputes a ``confidence interval" $I_p$ for the latent structure $\mu$, so that (informally) $\mu\in I_p$ with high probability. Then the algorithm determines which distributions $\D$ over arms can potentially be LP-perfect given that $\mu\in I_p$. Specifically, let $\F_p$ be set of all distributions $\D$ that are LP-perfect for some latent structure $\mu'\in I_p$; such distributions are called \emph{potentially perfect} (for phase $p$).

\OMIT{
\begin{align}\label{eq:phase-F}
 \F_p
    = \bigcup_{\mu'\in I_p} \{ \text{distributions } \}
    \argmax_{\D\in \allD} \LP(\D,\mu').
\end{align}
}

It remains to define the confidence intervals $I_p$. For phase $p=0$, the confidence interval $I_0$ is simply $\domain$, the set of all feasible latent structures.
For each subsequent phase $p\geq 1$, the confidence interval $I_p$ is defined as follows. For each arm $x$, consider all rounds before phase $p$ in which this arm has been chosen. Let $N_p(x)$ be the number of such rounds, let $\empir{r}_p(x)$ be the time-averaged reward in these rounds, and let $\empir{c}_{p,i}(x)$ be the time-averaged consumption of resource $i$ in these rounds. We use these averages to estimate $r(x,\mu)$ and $c_i(x,\mu)$ as follows:
\begin{align}
|r(x,\mu) - \empir{r}_p(x)|
    &\leq \rad\left(\, \empir{r}_p(x), N_p(x) \,\right)
    \label{eq:algo-estimate-r}\\
|c_i(x,\mu) - \empir{c}_{p,i}(x)|
    &\leq \rad\left(\, \empir{c}_{p,i}(x), N_p(x) \,\right)
    \quad \text{for each resource $i$}
    \label{eq:algo-estimate-c}
\end{align}
The confidence interval $I_p$ is the set of all latent structures $\mu'\in I_{p-1}$ that are consistent with these estimates. This completes the specification of \kMAB.

For each phase of \kMAB, the round in which an arm is sampled from distribution $\Dpx$ will be called \emph{designated} to arm $x$. We need to use approximate maximization to choose $\Dpx$, rather than exact maximization, because an exact maximizer
    $\argmax_{\D\in \F_p} \D(x)$
is not guaranteed to exist.

\xhdr{Proof overview.}
\asedit{We start with some properties of the algorithm that follow immediately from the specification and hold deterministically (with probability 1). Then we identify several properties that the algorithm satisfies with very high probability. The rest of the analysis focuses on a ``clean execution" of the algorithm: an execution in which all these properties hold. We analyze the ``error terms" that arise due to the uncertainty on the latent structure, and use the resulting ``error bounds" to argue about the algorithm's performance.}

\subsection{Deterministic properties of \kMAB}

First, we show that any two latent structures in the confidence interval $I_p$ correspond to similar consumptions and rewards, for each arm $x$. This follows deterministically from the specification of $I_p$.

\begin{claim}
\label{cl:bdd-error-x}
Fix any phase $p$, any two latent structures $\mu',\mu''\in I_p$, an arm $x$, and a resource $i$. Then
\begin{align}
\ | c_i(x, \mu') - c_i(x, \mu'')|&\leq 6\; \rad\left( c_i(x,\mu'), N_p(x) \right) \label{eq:cl:bdd-error-x-c}\\
 | r(x, \mu') - r(x, \mu'')|&\leq 6\; \rad\left( r(x,\mu'), N_p(x) \right).\label{eq:cl:bdd-error-x-r}
\end{align}
\end{claim}
\begin{proof}
We prove \eqref{eq:cl:bdd-error-x-c}; \eqref{eq:cl:bdd-error-x-r} is proved similarly.

Let $N = N_p(x)$.
By specification of \kMAB, any $\mu'\in I_p$ is consistent with estimate~\refeq{eq:algo-estimate-c}:
$$|c_i(x,\mu') - \empir{c}_{p,i}(x)|
    \leq \rad\left(\, \empir{c}_{p,i}(x), N \,\right).
$$
It follows that
$$ | c_i(x, \mu') - c_i(x, \mu'')| \leq 2\; \rad\left( \empir{c}_{p,i}(x), N \right).
$$
Finally, we observe that by Claim~\ref{cl:rad-to-rad},
$$\rad\left( \empir{c}_{p,i}(x), N \right) \leq 3\,\rad\left( c_i(x,\mu'), N \right).
\qedhere
$$
\end{proof}

For each phase $p$ and arm $x$, let
    $\avedtx{p}{x}=\frac{1}{p}\sum_{q< p}\Dqx(x)$
be the average of probabilities for arm $x$ among the distributions in the preceding phases that are designated to arm $x$. Because of the balancing step in \kMAB, we can compare this quantity to $\D(x)$, for any $\D\in \F_p$. (Here we also use the fact that the confidence intervals $I_p$ are non-increasing from one phase to another.)

\begin{claim}
\label{cl:Dpx}
$\avedtx{p}{x}\geq \tfrac12\,\D(x)$
for each phase $p$, each arm $x$ and any distribution $\D\in \F_p$.
\end{claim}
\begin{proof}
Fix arm $x$. Recall that
    $\avedtx{p}{x}=\frac{1}{p}\sum_{q< p}\Dqx(x)$,
where $\Dqx$ is the distribution chosen in the round in phase $q$ that is designated to arm $x$. Fix any phase $q<p$. Because of the balancing step, $\Dqx(x) \geq \tfrac12\, \D'(x)$ for any distribution $\D'\in \F_q$. Since the confidence intervals $I_q$ are non-increasing from one phase to another, we have $I_p\subset I_q$ for any $q\leq p$, which implies that $\F_p \subset \F_q$. Consequently, $\Dqx(x) \geq \tfrac12\, \D(x)$ for each $q< p$, and the claim follows.
\end{proof}

\subsection{High-probability events}

We keep track of several quantities: the averages $\empir{r}_p(x)$ and $\empir{c}_{p,i}(x)$ defined above, as well as several other quantities that we define below.

Fix phase $p$ and arm $x$. Recall that $N_p(x)$ is the number of rounds before phase $p$ in which arm $x$ is chosen. Now, let us consider all rounds before phase $p$ that are \emph{designated} to arm $x$. Let $n_p(x)$ denote the number of times arm $x$ has been chosen in these rounds. Let $\empirdtx{p}{x}=n_t(x)/p$ be the corresponding empirical probability of choosing $x$. We compare this to $\avedtx{p}{x}$.

\OMIT{Let
    $\avedtx{p}{x}=\frac{1}{p}\sum_{q< p}\Dqx(x)$
be the average of probabilities for arm $x$ among the distributions chosen for these rounds.}

Further, consider all rounds in phases $q< p$. There are $N = p|X|$ such rounds. The average distribution chosen by the algorithm in these rounds is
    $\aveD_p = \tfrac{1}{N} \sum_{q< p,\, x\in X} \Dqx $.
We are interested in the corresponding quantities
    $r(\aveD_p, \mu)$ and $c_i(\aveD_p, \mu)$,
We compare these quantities to
    $\empir{r}_p = \tfrac{1}{N} \sum_{t=1}^N r_t$
and
    $\empir{c}_{p,i} = \tfrac{1}{N} \sum_{t=1}^N c_{t,i}$,
the average reward and the average resource-$i$ consumption in phases $q<p$.

We consider several high-probability events which follow from applying Theorem~\ref{thm:conf-rad} to the various quantities defined above. All these events have a common shape: some quantities $\nu,\empir{\nu}$ satisfy \eqref{eq:nu-conf-int} for some $N$. If this is the case, we that $\empir{\nu}$ is an \emph{$N$-strong estimator} for $\nu$.

\begin{lemma}\label{lm:high-prob-events}
For each phase $p$, arm $x$, and resource $i$, with probability $e^{-\Omega(\chernoffC)}$ it holds that:
\begin{itemize}
\item[(a)] $\empir{r}_p(x)$ is an $N_p(x)$-strong estimator for $r(x,\mu)$, and
$\empir{c}_{p,i}(x)$ is an $N_p(x)$-strong estimator for $c_i(x,\mu)$.

\item[(b)] $\avedtx{p}{x}$ is an $p$-strong estimator for $\empirdtx{p}{x}$.

\item[(c)] $r(\aveD_p,\, \mu)$  is an $(p|X|)$-strong estimator for $\empir{r}_p$,
and $c_i(\aveD_p,\, \mu)$ is an $(p|X|)$-strong estimator for
 $\empir{c}_{p,i}$.
\end{itemize}
\end{lemma}

We rely on several properties of the confidence radius $\rad()$, which we summarize below. (We omit the easy proofs.)

\begin{claim}
\label{cl:conf-rad-props}
The confidence radius $\rad(\nu,N)$, defined in \eqref{eq:conf-rad-defn}, satisfies the following properties:
\begin{OneLiners}
\item[(a)] monotonicity: $\rad(\nu,N)$ is non-decreasing in $\nu$ and non-increasing in $N$.
\item[(b)] concavity: $\rad(\nu,N)$ is concave in $\nu$, for any fixed $N$.
\item[(c)] $\max(0,\; \nu - \rad(\nu,N))$ is non-decreasing in $\nu$.
\item[(d)] $\nu- \rad(\nu,N) \geq \tfrac14\, \nu$ whenever $4\tfrac{\chernoffC}{N} \leq \nu \leq 1$.
\item[(e)] $\rad(\nu,N) \leq 3 \tfrac{\chernoffC}{N}$ whenever $\nu\leq 4\tfrac{\chernoffC}{N}$.
\item[(f)] $\rad(\nu,\alpha N) = \tfrac{1}{\alpha}\, \rad(\alpha\nu,\, N)$,
    for any $\alpha\in (0,1]$.
\item[(g)]
    $\frac{1}{N}\sum_{\ell=1}^{N} \rad(\nu, \ell)
        \leq O(\log N)\;\rad(\nu,N)$.
\end{OneLiners}
\end{claim}

\subsection{Clean execution analysis}

It is convenient to focus on a \emph{clean execution} of the algorithm: an execution in which all events in Lemma~\ref{lm:high-prob-events} hold. We assume a clean execution in what follows. Also, we fix an arbitrary phase $p$ in such execution.

\asedit{Clean execution analysis falls into two parts. First, we analyze the ``error terms": we look at the LP-value (resp., expected reward, or expected resource consumption) of a given distribution, and upper-bound the difference in this quantity between different latent structures $\mu,\mu'$ in the confidence interval $I_p$, or between different potentially perfect distributions $D',D''\in\F_p$. The culmination is Lemma~\ref{lm:balanceBwK-errors}, which upper-bounds the difference
    $|\LP(\D',\mu') - \LP(\D'',\mu'')|$
in terms of parameters $\tfrac{p}{d}$, $B$, $T$, and $\LPOPT$. Second, we apply these error bounds to reason about the algorithm itself. The key quantities of interest are LP-values of the chosen distributions, average reward/consumption, and the stopping time.}

\subsubsection{Bounding the error terms}

Since a clean execution satisfies the event in Claim~\ref{lm:high-prob-events}(a), it immediately follows that:

\begin{claim}
The confidence interval $I_p$ contains the (actual) latent structure $\mu$. Therefore, $\D^*\in \F_p$ for any distribution $\D^*$ that is LP-perfect for $\mu$.
\end{claim}

\begin{claim}
\label{claim:bounderror-Alex}
Fix any latent structures $\mu',\mu''\in I_p$ and any distribution $\D\in \F_p$. Then for each resource $i$,
\begin{align}
 | c_i(\D, \mu') - c_i(\D, \mu'')|&\leq O(1)\; \rad\left( c_i(\D,\mu'),\, p/d \right) \label{eq:claim:bounderror-Alex-c}\\
 | r(\D, \mu') - r(\D, \mu'')|&\leq O(1)\; \rad\left( r(\D,\mu'),\, p/d \right).\label{eq:claim:bounderror-Alex-r}
\end{align}
\end{claim}

\begin{proof}
We prove \eqref{eq:claim:bounderror-Alex-c}; \eqref{eq:claim:bounderror-Alex-r} is proved similarly.
Let us first prove the following:
\begin{align}\label{eq:inter:errorbound-Alex}
\forall x\in X,\quad
\D(x)\, | c_i(x,\mu') - c_i(x,\mu'') |\leq
    O(1)\; \rad(\D(x)\, c_i(x,\mu'),p).
\end{align}
Intuitively, in order to argue that we have good estimates on quantities related to arm $x$, it helps to prove that this arm has been chosen sufficiently often. Using the definition of clean execution and Claim~\ref{cl:Dpx}, we accomplish this as follows:
\begin{align*}
\tfrac{1}{p}\, N_p(x)
    &\geq \tfrac{1}{p}\, n_p(x)
    = \empirdtx{p}{x} \\
    &\geq  \avedtx{p}{x} - \rad(\avedtx{p}{x}, p) & \text{(by clean execution)} \\
    &\geq  \tfrac12\,\D(x) - \rad(\tfrac12\,\D(x), p)
        & \text{(by Claim~\ref{cl:Dpx} and Claim~\ref{cl:conf-rad-props}(c))}.
\end{align*}
Consider two cases depending on $\D(x)$. For the first case, assume
    $\D(x) \geq 8 \tfrac{\chernoffC}{p} $.
Using Claim~\ref{cl:conf-rad-props}(d) and the previous equation, it follows that
    $N_p(x) \geq \tfrac{1}{8}\, p\,\D(x)$.
Therefore:
\begin{align*}
\D(x)\; |c_i(x,\mu')-c_i(x,\mu'')|
&\leq 6\;\D(x)\, \rad(c_i(x,\mu'),\,N_p(x) )
    & \text{(by Claim~\ref{cl:bdd-error-x})}\\
&\leq 6\;\D(x)\, \rad(c_i(x,\mu'),\,\tfrac{1}{8}\, p\,\D(x) )
    & \text{(by monotonicity of $\rad$)}\\
&= 48\; \rad(\D(x)\, c_i(x,\mu_x'), p)
    & \text{(by Claim~\ref{cl:conf-rad-props}(f))}.
\end{align*}
The second case is that
    $\D(x) < 8 \tfrac{\chernoffC}{p} $.
Then \eqref{eq:inter:errorbound-Alex} follows simply because
    $\tfrac{\chernoffC}{p} \leq \rad(\cdot\,, p)$.

We have proved \eqref{eq:inter:errorbound-Alex}. We complete the proof of \eqref{eq:claim:bounderror-Alex-c} using concavity of $\rad(\cdot, p)$ and the fact that, by the specification of \kMAB, $\D$ has support of size at most $d$.
\begin{align*}
| c_i(\D, \mu') - c_i(\D, \mu'')|
    & \leq \textstyle \sum_{x\in X} \D(x)\, | c_i(x,\mu') - c_i(x,\mu'') | \\
    &\leq  \textstyle \sum_{x\in X,\D(x)>0} O(1)\; \rad(\D(x)\, c_i(x,\mu'),p) \\
    &\leq O(d)\; \rad\left( \tfrac{1}{d} \,\textstyle \sum_{x\in X} \D(x)\, c_i(x,\mu'),\; p \right)\\
    &= O(d)\; \rad\left( \tfrac{1}{d} \; c_i(\D,\mu'),\; p \right) \\
    &\leq O(1)\; \rad\left( c_i(\D,\mu'),\; \tfrac{p}{d} \right)
        \qquad \text{(by Claim~\ref{cl:conf-rad-props}(f))}. \qedhere
\end{align*}
\end{proof}

In what follows, we will denote
    $\ucblp_p    = \max_{\D\in \F_p,\;\mu\in I_p}\LP(\D,\mu)$.

\begin{claim}
\label{cl:lpconvergence-Alex}
Fix any latent structures $\mu',\mu''\in I_p$ and any distribution $\D\in \F_p$. Then
\begin{align}\label{eq:cl:lpconvergence-Alex}
|\LP(\D,\mu')-\LP(\D,\mu'')|
    &\leq O(T)\, \rad\left( \ucblp_p/T,\; \tfrac{p}{d} \right)
        + O(\ucblp_p\, \tfrac{T}{B})\,\rad\left(\tfbt,\tfrac{p}{d}\right).
\end{align}
\end{claim}

\begin{proof}
Since $\D\in \F_p$, it is LP-perfect for some latent structure $\mu$. Then
    $\LP(\D,\mu) = T\, r(\D,\mu)$.
Therefore:
\begin{align}
\LP(\D,\mu')-\LP(\D,\mu)
&\leq\;T\,\left( r(\D,\mu')- r(\D,\mu) \right) & \nonumber \\
&\leq\;O(T)\;\rad\left( r(\D,\mu), \tfrac{p}{d} \right)
    & \text{(by Claim~\ref{claim:bounderror-Alex})}.
        \label{eq:cl:lpconvergence-Alex-easy}
\end{align}

We need a little more work to bound the difference in the $\LP$ values in the other direction.

Consider
    $t_0 = \LP(\D,\mu')/  r(\D,\mu')$;
this is the value of the variable t in the optimal solution to the linear program~\refeq{eq:LP-D}. Let us obtain a lower bound on this quantity. Assume $t_0<T$. Then one of the budget constraints in~\refeq{eq:LP-D} must be tight, i.e. $t_0\, c_i(\D,\mu') = B$ for some resource $i$.
\begin{align*}
c_i(\D,\mu')
    &\leq c_i(\D,\mu)+O(1)\,\rad\left(c_i(\D,\mu),\tfrac{p}{d} \right)
        & \text{(by Claim~\ref{claim:bounderror-Alex})} \\
    &\leq \tfbt+O(1)\,\rad\left(\tfbt,\tfrac{p}{d}\right)
\end{align*}
Let $\Psi = \rad\left(\tfbt,\tfrac{p}{d}\right) $.
It follows that
    $t_0 = B/c_i(\D,\mu')  \geq T(1- O(\tfrac{T}{B}\,\Psi)).$
Therefore:
\begin{align*}
\LP(\D,\mu)-\LP(\D,\mu')
    &=\;T\, r(\D,\mu)-t_0\,r(\D,\mu') \\
    &\leq T\,r(\D,\mu)- \left[ T(1- O(\tfrac{T}{B}\,\Psi)) \right] \;  r(\D,\mu') \\
    &\leq T\, \left[ r(\D,\mu)- r(\D,\mu') \right]
        + O(\tfrac{T}{B}\,\Psi) \,T\, r(\D,\mu)  \\
    &\leq O(T)\, \rad\left( r(\D,\mu), \tfrac{p}{d} \right)
        + O(\tfrac{T}{B}\,\Psi)\,T\, r(\D,\mu)
    & \text{(by Claim~\ref{claim:bounderror-Alex})}.
\end{align*}
Using \eqref{eq:cl:lpconvergence-Alex-easy} and noting that
    $r(\D,\mu) = \LP(\D,\mu) / T \leq \ucblp_p/T$,
we conclude that
\begin{align*}
|\LP(\D,\mu)-\LP(\D,\mu')|
    &\leq O(T)\, \rad\left( \ucblp_p/T,\; \tfrac{p}{d} \right)
        + O(\ucblp_p\, \tfrac{T}{B})\,\rad\left(\tfbt,\tfrac{p}{d}\right).
\end{align*}
We obtain the same upper bound on
    $|\LP(\D,\mu)-\LP(\D,\mu'')|$,
and the claim follows.
\end{proof}

We will use $\Phi_p(\ucblp_p)$ to denote the right-hand side of \eqref{eq:cl:lpconvergence-Alex} as a function of $\ucblp_p$.

\begin{claim}
\label{cl:LP-of-Dt}
~
\begin{itemize}
\item[(a)]
Fix any latent structure $\mu^*\in I_p$, and any distributions $\D',\D''\in \F_p$.
Then
$$|\LP(\D',\mu^*) - \LP(\D'',\mu^*)|\leq 2 \Phi_p(\ucblp_p)$$
\item[(b)]
Fix any latent structure $\mu',\mu''\in I_p$, and any distributions $\D',\D''\in \F_p$.
Then
$$|\LP(\D',\mu') - \LP(\D'',\mu'')|\leq 3 \Phi_p(\ucblp_p)$$
\end{itemize}
\end{claim}

\begin{proof}
{\bf (a).} Since $\D',\D''\in \F_p$, it holds that $\D'$ and $\D''$ are LP-perfect for some latent structures $\mu'$ and $\mu''$. Further, pick a distribution $\D^*$ that is LP-perfect for $\mu^*$. Then:
\begin{align*}
\LP(\D',\mu^*)
    &\geq \LP(\D',\mu')- \Phi_p(\ucblp_p)
    &\textrm{(by Lemma~\ref{cl:lpconvergence-Alex} with $\D = \D'$)} &\\
    &\geq\LP(\D^*,\mu')- \Phi_p(\ucblp_p) \\
    &\geq \LP(\D^*,\mu^*)- 2\Phi_p(\ucblp_p)
    &\textrm{(by Lemma~\ref{cl:lpconvergence-Alex} with $\D =\D^*$)} &\\
    &\geq \LP(\D'',\mu^*)- 2\Phi_p(\ucblp_p). &&
\end{align*}

{\bf (b).} Follows easily from part (a) and Lemma~\ref{cl:lpconvergence-Alex}. \qedhere
\end{proof}

The following claim will allow us to replace $\Phi_p(\ucblp_p)$ by $\Phi_p(\LPOPT)$.%

\begin{claim}\label{cl:LP-phip-substitute}
$\Phi_p(\LPOPT)\geq \Omega(\asedit{\min}(\LPOPT,\Phi_p(\ucblp_p)))$.
\end{claim}
\begin{proof}
Consider the two summands in $\Phi_p(\ucblp_{p})$:
\begin{align*}
S_1(\ucblp_{p}) &= O(T)\, \rad\left( \ucblp_p/T,\; \tfrac{p}{d} \right), \\
S_2(\ucblp_{p}) &= O(\ucblp_p\, \tfrac{T}{B})\,\rad\left(\tfbt,\tfrac{p}{d}\right).
\end{align*}

We consider the following three cases. The first case is that
    $S_1(\ucblp_{p}) \geq \ucblp_p/12$.
Solving for $\ucblp_p$, we obtain $\ucblp_p\leq O(\frac{Td\chernoffC}{p})$, which implies that
$$ \Phi_p(\LPOPT)\geq \Omega(\ucblp_p)\geq \Omega(\LPOPT).
$$
The second case is that
    $S_2(\ucblp_{p}) \geq \ucblp_p/12$.
Then
    $$ \Phi_p(\LPOPT) \geq S_2(\LPOPT) \geq \LPOPT/12$$
In remaining case,
    $\Phi_p(\ucblp_p)\leq \frac{\ucblp_p}{6}$.
Then from Claim~\ref{cl:LP-of-Dt}(b) we get that
    $\ucblp_p\leq\, 2\,\LPOPT$.
Noting that $\Phi_p(M)$ is a non-decreasing function of $M$, we obtain
$$\Phi_p(\ucblp_p)\leq \Phi_p(2\,\LPOPT) \leq 2\,\Phi_p(\LPOPT). \qquad \qedhere
$$
\end{proof}

Claim~\ref{cl:LP-phip-substitute} and Claim~\ref{cl:LP-of-Dt} imply our main bound on the error terms:

\begin{lemma}
\label{lm:balanceBwK-errors}
Fix any latent structure $\mu',\mu''\in I_p$, and any distributions $\D',\D''\in \F_p$.
Then
$$|\LP(\D',\mu') - \LP(\D'',\mu'')|\leq O(\Phi_p(\LPOPT)).$$
\end{lemma}

\subsubsection{Performance of the algorithm}

The remainder of the analysis deals with rewards and resource consumption of the algorithm. We start with lower-bounding the LP-value for the chosen distributions.

\begin{claim}\label{cl:LP-phip-substitute-b}
For each distribution $\Dpx$ chosen by the algorithm in phase $p$,
$$\LP(\Dpx,\mu) \geq \LPOPT - O(\Phi_p(\LPOPT)).$$
\end{claim}

\begin{proof}
The claim follows easily from Lemma~\ref{lm:balanceBwK-errors}, noting that $\Dpx\in \F_p$.
\end{proof}

The following corollary lower-bounds the average reward; once we have it, it essentially remains to lower-bound the stopping time of the algorithm.

\begin{corollary}\label{cor:BasicBwK-Rew}
$\empir{r}_p \geq \tfrac{1}{T}\,(\LPOPT - O(\log p)\,\Phi_p(\LPOPT))$.
\end{corollary}

\begin{proof}
Throughout this proof, denote
    $\Phi_p\triangleq \Phi_p(\LPOPT)$.
By Claim~\ref{cl:LP-phip-substitute-b}, for each distribution $\Dqx$ chosen by the algorithm in phase $q<p$ it holds that
$$ r(\Dqx,\mu) \geq \tfrac{1}{T}\, \LP(\Dqx,\mu)
\geq \tfrac{1}{T}\, (\LPOPT - O(\Phi_q)).$$
Averaging the above equation over all rounds in phases $q<p$, we obtain
\begin{align*}
r(\aveD_p,\mu)
    &\geq \tfrac{1}{T}\, \left(
\LPOPT - \textstyle \tfrac{1}{p} \sum_{q<p} O(\Phi_q)
    \right) \\
    &\geq \tfrac{1}{T}\, \left(
        \LPOPT - O(\Phi_p \log p)
    \right).
\end{align*}
For the last inequality, we used Claim~\ref{cl:conf-rad-props}(fg) to average the confidence radii in $\Phi_q$.

Using the high-probability event in Claim~\ref{lm:high-prob-events}(c):
$$ \empir{r}_p
    \geq r(\aveD_p,\mu) - \rad( r(\aveD_p,\mu), p|X| ).
$$
Now using the monotonicity of $\nu - \rad(\nu,N)$ (Claim~\ref{cl:conf-rad-props}(c)) we obtain
\begin{align*}
\empir{r}_p
    &\geq \tfrac{1}{T}\,(\LPOPT - O(\Phi_p))
    - \rad\left(\;  \tfrac{1}{T}\,(\LPOPT - O(\Phi_p)),\; p|X| \;\right)\\
    &\geq \tfrac{1}{T}\,(\LPOPT - O(\Phi_p))
    - \rad\left(\;  \LPOPT/T,\; p|X| \;\right)\\
    &\geq \tfrac{1}{T}\,(\LPOPT - O(\Phi_p)).
\end{align*}
For the last equation, we use the fact that
$\Phi_p/T
    \geq \Omega(\rad(\LPOPT/T, \tfrac{p}{d} ))
    \geq \Omega(\rad(\;  \LPOPT/T,\; p|X| \;)).$
\end{proof}

The following two claims help us to lower-bound the stopping time of the algorithm.

\begin{claim}
\label{cl:BasicBwK-supply}
$ c_i(\Dpx,\mu) \leq \tfbt + O(1)\,\rad\left(\tfbt,\; \tfrac{p}{d} \right)$
for each resource $i$.
\end{claim}

\begin{proof}
By the algorithm's specification, $\Dpx\in \F_p$, and moreover there exists a latent structure $\mu'\in I_p$ such that $\Dpx$ is LP-perfect for $\mu'$. Apply Claim~\ref{claim:bounderror-Alex}, noting that $c_i(\Dpx,\mu') \leq \fbt$ by LP-perfectness.
\end{proof}

\begin{corollary} \label{cor:BasicBwK-supply}
    $\empir{c}_{p,i} \leq \tfbt + O(\log p)\; \rad(\tfbt, \tfrac{p}{d}) $
for each resource $i$.
\end{corollary}
\begin{proof}
Using a property of the clean execution, namely the event in Claim~\ref{lm:high-prob-events}(c), we have
\begin{align}\label{eq:cor:BasicBwK-supply-empir}
\empir{c}_{p,i}
    \leq c_i(\aveD, \mu) + \rad\left( c_i(\aveD, \mu),\; p\right).
\end{align}

Consider all rounds preceding phase $p$.
\begin{align}
 c_i(\aveD_p, \mu)
    &=  \frac{1}{p|X|}  \sum_{q<p,\, x\in X} c_i(\Dqx,\mu) & \nonumber\\
    &\leq \fbt + \frac{O(1)}{p|X|} \sum_{q<p,\, x\in X} \rad\left(\fbt, \frac{p}{d}\right)
        &\text{(by Claim~\ref{cl:BasicBwK-supply})} \nonumber\\
    &\leq \tfbt+O(\log p)\, \rad\left(\tfbt, \tfrac{p}{d}\right)
        &\text{(by Claim~\ref{cl:conf-rad-props}(fg))}.
        \label{eq:cor:BasicBwK-supply-ave}
\end{align}
For the last inequality, we used Claim~\ref{cl:conf-rad-props}(fg) to average the confidence radii.

Using the upper bound on $c_i(\aveD, \mu)$ that we derived above,
\begin{align*}
\rad\left( c_i(\aveD, \mu),\; \tfrac{p}{d}\right)
    \leq O(\log p)\; \rad\left( \tfbt+ \rad\left(\tfbt, \tfrac{p}{d}\right),\; \tfrac{p}{d} \right).
\end{align*}
Using a general property of the confidence radius that
$$ \rad(\nu+\rad(\nu,N),\,N) \leq O(\rad(\nu,N)),
$$
we conclude that
\begin{align}\label{eq:cor:BasicBwK-supply-rad}
 \rad\left( c_i(\aveD, \mu),\; \tfrac{p}{d}\right)
    \leq O(\log p)\; \rad(\tfbt, \tfrac{p}{d}).
\end{align}
We obtain the claim by plugging the upper bounds \refeq{eq:cor:BasicBwK-supply-ave} and \refeq{eq:cor:BasicBwK-supply-rad} into \refeq{eq:cor:BasicBwK-supply-empir}.
\end{proof}

We are ready to put the pieces together and derive the performance guarantee for a clean execution of \kMAB.

\begin{lemma}\label{lm:pf-Balance:almost-done}
Consider a clean execution of \kMAB. Then the total reward
\[ \Rew \geq \LPOPT - O\left( \log \tfrac{T}{|X|}\right)\; \Phi_{T/|X|}(\LPOPT).\]
\end{lemma}
\begin{proof}
Throughout this proof, denote
    $\Phi_p\triangleq \Phi_p(\LPOPT)$.
Let $p$ be the last phase in the execution of the algorithm, and let $T_0$ be the stopping time. Letting $m=|X|$, note that
    $pm <T_0 \leq (p+1)m$.

We can use Corollary \ref{cor:BasicBwK-Rew} to bound $\Rew$ from below:
\begin{align}\label{eq:pf-Balance-Rew}
\Rew
= T_0\,\empir{r}_{p+1}  > p\,m \,\empir{r}_{p+1} \geq
    \tfrac{p\,m}{T}\,(\LPOPT - O(\Phi_p \log p)).
\end{align}

Let us bound $\frac{p\,m}{T}$ from below.  The algorithm stops either when it runs out of time or if it runs out of resources during phase $p$. In the former case, $p=\flr{T/m}$. In the latter case,
    $B= T_0\,\empir{c}_{p+1,\,i} $
for some resource $i$, so
    $B\leq m(p+1)\;\empir{c}_{p+1,\,i}$.
Using Corollary \ref{cor:BasicBwK-supply}, we obtain the following lower bound on $p$:
$$ \frac{p\, m}{T}
    \geq 1 - O\left( \frac{p\, m\log p}{B} \right)\;
        \rad\left( \fbt,\, \frac{p}{d} \right).
$$
Plugging this into \eqref{eq:pf-Balance-Rew}, we conclude:
\begin{align*}
\Rew &\geq \LPOPT
    - O\left( \frac{p\, m\log p}{B} \right)\;
            \rad\left( \fbt,\, \frac{p}{d} \right) \LPOPT
    - O\left( \frac{p\,m\log p}{T} \right) \Phi_p \\
    &\geq \LPOPT - O\left( \frac{p\,m\log p}{T} \right)
        \left( \Phi_p+ \frac{T}{B}\, \rad\left( \fbt,\, \frac{p}{d} \right) \LPOPT\right)\\
  &\geq \LPOPT -\frac{p\,m\log p}{T}\; O(\Phi_p).
\end{align*}
To complete the proof, we observe that $(p\,\Phi_p \log p)$ is increasing in $p$ (by definition of $\Phi_p$), and plug in a trivial upper bound $p\leq T/m$.
\end{proof}

\asedit{To finish the proof of Theorem~\ref{thm:Balance}, we write down the definition of $\Phi_{T/m}(\LPOPT)$, $m=|X|$, and plug in the definition of the confidence radius \refeq{eq:conf-rad-defn}:
\begin{align*}
\Phi_{T/m}(\LPOPT)
    &\triangleq O(T)\, \rad\left( \LPOPT/T,\; \tfrac{T}{d|X|} \right)
        + O(\LPOPT\, \tfrac{T}{B})\,\rad\left(\tfbt,\tfrac{T}{dm}\right) \\
    &\leq O(\log(\horizon)
\left(
    \sqrt{\nrsc \narms \LPOPT \,} + \LPOPT
        \sqrt{\frac{\nrsc \narms}{\budg}} \;
\right).
\end{align*}
} 

\OMIT{ 
\ascomment{I left some Ashwin's stuff below just in case}

\ascomment{Ashwin's theorem statement}

\begin{theorem}\label{lm:basicBwK-totalrevenue}
The algorithm has total revenue
\begin{align}\label{eq-lm:basicBwK-totalrevenue}
\Rew \geq \LPOPT - O(d\, \chernoffC) \left(\frac{\maxLP}{B}\,\left(\sqrt{B |X|}+|X|\right)+\sqrt{\maxLP |X|}+|X|\right)
\end{align}
\end{theorem}

\ascomment{the last computation in Ashwin's proof}

\begin{align}\label{eq-lm:basicBwK-interface-1}
\Rew
&\geq \frac{p\, n}{T}\; \LPOPT - O(d\, \chernoffC\, n) \left(\frac{\maxLP}{B}\,\left(\sqrt{\frac{B\,p}{T}}+1\right)+\sqrt{\frac{\maxLP\,p}{T}}+1\right) \\
    &\geq \LPOPT - O(d\, \chernoffC\, n) \left(\frac{\maxLP}{B}\,\left(\sqrt{\frac{B\,p}{T}}+1\right)+\sqrt{\frac{\maxLP\,p}{T}}+1\right) \\
 &\geq \LPOPT - O(d\, \chernoffC) \left(\frac{\maxLP}{B}\,\left(\sqrt{B n}+n\right)+\sqrt{\maxLP n}+n\right)
\end{align}
} 

\section{Algorithm \pdbwk}
\label{sec:pdbwk}

%

This section develops an algorithm, called \pdbwk, that solves the \BwK problem using a very natural and intuitive idea: greedily select arms with the greatest estimated ``bang per buck,'' i.e.\ reward per unit of resource consumption. One of the main difficulties with this idea is that there is no such thing as a known ``unit of resource consumption'': there are $\nrsc$ different resources, and it is unclear how to trade off consumption of one resource versus another. The dual LP in Section~\ref{sec:lp} gives some insight into how to quantify this trade-off: an optimal dual solution $\eta^*$ can be interpreted as a vector of unit costs for resources, such that for every arm the expected reward is less than or equal to the expected cost of resources consumed. \asedit{Then the bang-per-buck ratio for a given arm $x$ can be defined as $r(x,\mu)/(\eta^* \cdot c(x,\mu))$, where the denominator represents the expected cost of pulling this arm. The arms in the support of the optimal distribution $\xi^*$ are precisely the arms with a maximal bang-per-buck ratio (by complimentary slackness), and pulling any other arm necessarily increases regret relative to $\LPOPT$ (by Remark~\ref{rem:LP-intuition}).}


To estimate the bang-per-buck ratios, our algorithm will try to learn an optimal dual vector $\eta^*$ in tandem with learning the latent structure $\mu$. Borrowing an idea from \citep{PlotkinST,GargK,AroraHK}, we use the multiplicative weights update method to learn the optimal dual vector. This method raises the cost of a resource exponentially as it is consumed, which ensures that heavily demanded resources become costly, and thereby promotes balanced resource consumption. Meanwhile, we still have to ensure (as with any multi-armed bandit problem) that our algorithm explores the different arms frequently enough to gain adequately accurate estimates of the latent structure. We do this by estimating rewards and resource consumption as optimistically as possible, i.e.\ using upper confidence bound (UCB) estimates for rewards and lower confidence bound (LCB) estimates for resource consumption. Although both of these techniques --- multiplicative weights and confidence bounds --- have been successfully applied in previous online learning algorithms, it is far from obvious that this particular hybrid of the two methods should be effective. In particular, the use of multiplicative updates on dual variables, rather than primal ones, distinguishes our algorithm from other bandit algorithms that use multiplicative weights (e.g.\ the $\expthree$ algorithm \citep{bandits-exp3}) and brings it closer in spirit to the literature on stochastic packing algorithms, especially \citep{DevanurJSW-ec11}.

The pseudocode is presented as Algorithm~\ref{alg:pdbwk}. When we refer to the UCB or LCB for a latent parameter
(the reward of an arm, or the amount of some resource that it utilizes),
these are computed as follows. Letting $\hat{\nu}$ denote the empirical
average of the observations of that random variable\footnote{Note
that we initialize the
algorithm by pulling each arm once, so empirical averages
are always well-defined.} and letting $N$ denote the number of times
the random variable has been observed, the lower confidence bound
(LCB) and upper confidence bound (UCB) are the left and right endpoints,
respectively, of the \emph{confidence interval} $[0,1] \cap
[\hat{\nu} - \rad(\hat{\nu},N), \; \hat{\nu} + \rad(\hat{\nu},N)]$.
The UCB or LCB for a vector or matrix are defined componentwise.

\newcommand{\EstCost}{\mathtt{EstCost}}

\begin{algorithm}[ht]
\caption{\pdbwk with parameter $\eps\in(0,1)$}
\label{alg:pdbwk}
\begin{algorithmic}[1]
\STATE {\bf Initialization}
\STATE \TAB In the first $\narms$ rounds, pull each arm once.
\STATE \TAB $v_1 = \ones \in [0,1]^\nrsc$.
\STATE \TAB\TAB
    \COMMENT{$v_t\in [0,1]^\nrsc$ is the round-$t$ estimate of the optimal solution $\eta^*$ to \refeq{lp:dual} in Section~\ref{sec:lp}.}
\STATE \TAB\TAB
    \COMMENT{We interpret $v_t(i)$ as an estimate of the (fictional) unit cost of  resource $i$, for each $i$.}
\STATE \TAB Set $\eps = \sqrt{\ln(\nrsc)/\budg}$.
\FOR{rounds $t = \narms+1,\ldots,\stime \;\; \mbox{\it (i.e., until
resource budget exhausted)}$}
\STATE For each arm $x\in X$,
\STATE \TAB Compute UCB estimate for the expected reward, $\rucb{t,x} \in [0,1]$.
\STATE \TAB Compute LCB estimate for the resource consumption vector,
                $\clcb{t,x} \in [0,1]^{\nrsc}$.
\STATE \TAB \emph{Expected cost} for one pull of arm $x$ is estimated by $\EstCost_x = \clcb{t,x}\cdot v_t$.
\STATE Pull arm $x=x_t\in X$ that maximizes  $\rucb{t,x}/\EstCost_x$,
            the optimistic \emph{bang-per-buck} ratio.
\label{algstep:argmin}
\STATE Update estimated unit cost for each resource $i$:
\begin{align*}
                v_{t+1}(i) = v_t(i)\, (1+\eps)^{\ell},\;  \ell = \clcb{t,x}(i).
\end{align*}
\label{algstep:mwu}
\ENDFOR
\end{algorithmic}
\end{algorithm}

The algorithm is fast: with machine word size of $\log T$ bits or more, the per-round running time is $O(\narms \nrsc)$. Moreover, if each arm $x$ consumes only $\nrsc_x$ resources that are known in advance, then $\clcb{t,x}$ can be implemented as a $\nrsc_x$-dimensional vector, and $\EstCost_x$ can be computed in $O(\nrsc_x)$ time. Then the per-round running time is
    $O(\narms + \nrsc+\sum_x \nrsc_x)$.

\begin{discussion}
\asedit{
The cost update in step~\ref{algstep:mwu} requires some explanation. Let us interpret this step as a separate algorithm which solves a particular problem. The problem is to optimize the total expected payoff when in each round $t>\narms$, one chooses a distribution $y_t = v_t/\|v_t\|_1$ over resources, and receives expected payoff $y_t\cdot \clcb{t,x_t}$. This is the well-known "best-expert" problem in which actions correspond to resources, and each action $i$ is assigned payoff $\clcb{t,x_t}(i)$.  Step~\ref{algstep:mwu} implements a multiplicative-weights algorithm for solving this problem. In fact, we could have used any other algorithm for this problem with a similar performance guarantee, as in Proposition~\ref{prop:hedge}.

But why does solving this particular best-experts problem make sense for \pdbwk? Particularly, why does it make sense to maximize this notion of  expected payoffs? Let us view distribution $y_t$ as a vector of \emph{normalized costs} of resources. Consider the total expected normalized cost consumed by the algorithm after round $\narms$, denote it $W$.
Then
    $W = \sum_{t=\narms+1}^\stime y_t\; c(x_t,\mu)$.
A lower confidence bound on this quantity is
    $W_\LCB = \sum_{t=\narms+1}^\stime y_t\cdot \clcb{t,x_t}$,
which is precisely the total expected payoff in the best-experts problem. In the analysis, we relate $W_\LCB$ and the upper confidence bound on the total expected reward in the same rounds,
    $\Rew_\UCB = \sum_{t=\narms+1}^\stime u_{t,x_t}$
Specifically, we prove that for any implementation of step~\ref{algstep:mwu}, we have
\begin{align}\label{eq:pdbwk-intuition-mw}
\Rew_\UCB \geq W_\LCB\; \LPOPT/B
    \quad \text{with high probability}.
\end{align}
(This follows from \eqref{eq:pdbwk-intuition-mw-proof}.)
Thus, maximizing $W_\LCB$ is a reasonable goal for the cost update rule.}

Step~\ref{algstep:mwu} can also be seen as a
variant of the Garg-K\"onemann width
reduction technique \citep{GargK}. The ratio
$\rucb{t,x}/\EstCost_x$
that we optimize in step~\ref{algstep:argmin} may
be unboundedly large, so in the multiplicative update
in step~\ref{algstep:mwu} we rescale this value to
$\clcb{t,x}(i)$,
which is guaranteed
to be at most 1;
this rescaling is mirrored in the
analysis of the algorithm.
Interestingly, unlike the Garg-K\"onemann algorithm which
applies multiplicative updates to the dual
vectors and weighted averaging to the primal ones,
in our algorithm the multiplicative updates
and weighted averaging are \emph{both} applied to
the dual vectors.
\end{discussion}

\begin{discussion}
\asedit{From the primal-dual point of view, we could distinguish a ``primal" problem in which one chooses among arms, and a ``dual" problem in which one updates the cost vector. In the primal problem, the choice of costs is deemed adversarial, and the goal is to ensure \eqref{eq:pdbwk-intuition-mw}. In the dual problem, the choice of arms is deemed adversarial, and the goal is to maximize $W_\LCB$ so as to obtain Proposition~\ref{prop:hedge}. In both problems, one is agnostic as to how the upper/lower confidence bounds $u_t$ and $L_{t,x}$ are updated over time. As mentioned above, the dual problem falls under a standard setting of the ``best-expert" problem, and is solved via a standard algorithm for this problem. Meanwhile the primal problem is solved via bang-per-buck ratios and an ad-hoc application of the ``optimism under uncertainty" principle.

When the rewards and consumptions are deterministic,%
\footnote{Then the dual problem maximizes $W$ rather than $W_\LCB$, and the primal problem ensures \refeq{eq:pdbwk-det-intuition} rather than \refeq{eq:pdbwk-intuition-mw}, see Section~\ref{sec:pdbwk-determ}.}
the analysis is completely modular: it works no matter which algorithm is used to solve the primal (resp., dual) problem. In the general case, the primal algorithm also needs to ensure that the ``error terms" come out suitably small.
}
\end{discussion}

The following theorem expresses the regret guarantee
for \pdbwk.
\begin{theorem} \label{thm:pdbwk}
Consider an instance of \BwK with $d$ resources, $m=|X|$ arms, and the smallest budget $B = \min_i B_i$.
The regret of algorithm \pdbwk with parameter $\eps=\sqrt{\ln(d)/B}$ satisfies
\begin{equation} \label{eq:pdbwk-regret}
\LPOPT -\algrwd
\leq
O \left( \sqrt{\log(\nrsc \horizon)}\right)
\left( \sqrt{\narms \, \LPOPT} + \LPOPT \, \sqrt{\frac{\narms}{\budg}} \right)\;
+ O(\narms)\, \log(\nrsc \horizon)\log(\horizon).
\end{equation}
Moreover, \eqref{eq:regret-LPOPT-OPT} holds with $f(\LPOPT)$ equal to the right-hand side of \eqref{eq:pdbwk-regret}.
\end{theorem}

The rest of the section proves this theorem. Throughout, it will be useful
to represent the latent values as matrices and vectors. For this purpose, we will number the arms as $\arms = \{ 1 \LDOTS \narms \}$ and let $r \in \R^{\narms}$
denote the vector whose $x$-th component is $r(x,\mu)$, the expected reward, for each arm $x\in X$. Similarly we will let $C \in \R^{\nrsc \times \narms}$ denote the matrix whose
$(i,x)$ entry is $c_i(x,\mu)$, the expected resource consumption, for each resource $i$ and each arm $x$. Let $\sbv_j^d\in \{0,1\}^d$ denote the $d$-dimensional $j$-th coordinate vector.

While \pdbwk uses multiplicative weights update as a general technique, we make use of a specific performance guarantee in our analysis. To this end, let us recall algorithm $\hedge$ ~\citep{FS97} from online learning theory, also known as the multiplicative weights algorithm.  It is an online algorithm for maintaining
a $d$-dimensional probability vector $y$ while
observing a sequence of
$d$-dimensional payoff vectors $\pi_1,\ldots,\pi_\stime$. The version presented below, along with the following performance guarantee, is adapted from \citet{cs683week2}; a self-contained proof appears in Appendix~\ref{app:hedge}.

\begin{algorithm}[H]
\caption{$\mathsf{Hedge}$ with parameter $\eps\in (0,1)$}
\label{alg:hedge}
\begin{algorithmic}[1]
\STATE $v_1 = \ones$
    \TAB\COMMENT{$v_t \in \R^d_+$ for each round $t$.}
\FOR{each round $t=1,2,3,\,\ldots $}
\STATE {\bf Output} distribution $y_t = v_t / \| v_t \|_1$.
\STATE {\bf Input} payoff vector $\pi_t \in [0,1]^d$.
\FOR{each resource $i$}
\STATE $v_{t+1}(i) = v_t(i)\, (1+\eps)^\ell$, $\ell = \pi_{t}(i)$.
\ENDFOR
\ENDFOR
\end{algorithmic}
\end{algorithm}


\begin{proposition} \label{prop:hedge}
Fix any parameter $\eps\in (0,1)$ and any stopping time $\tau$. For any sequence of payoff vectors
    $\pi_1,\ldots,\pi_\stime \in [0,1]^d$, we have
\[
\forall y \in \splx{d} \quad
\sum_{t=1}^\stime y_t^\trans \pi_t \geq (1-\eps) \sum_{t=1}^\stime y^\trans \pi_t
- \frac{\ln d}{\eps}.
\]
\end{proposition}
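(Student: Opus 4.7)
The plan is the standard potential-function analysis of multiplicative weights. I will track the scalar potential $\Phi_t = \ones^\trans v_t$, establish an upper bound on $\Phi_{\stime+1}$ in terms of $\sum_t y_t^\trans \pi_t$, a lower bound in terms of any single expert's cumulative payoff, and then combine them by taking logarithms and extending from a single expert to an arbitrary $y \in \splx{d}$ by convexity.

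First I would derive the upper bound. From the update rule, $v_{t+1,i} = (1+\eps)^{\pi_{ti}} v_{t,i}$, and since $\pi_{ti} \in [0,1]$ and $(1+\eps)^x \leq 1 + \eps x$ on $[0,1]$, we get $v_{t+1,i} \leq (1 + \eps\, \pi_{ti})\, v_{t,i}$. Summing over $i$ and using $y_{t,i} = v_{t,i}/\Phi_t$,
\[
\Phi_{t+1} \leq \Phi_t \cdot \bigl(1 + \eps\, y_t^\trans \pi_t\bigr) \leq \Phi_t \cdot \exp\!\bigl(\eps\, y_t^\trans \pi_t\bigr).
\]
Iterating from $\Phi_1 = d$ gives $\Phi_{\stime+1} \leq d \cdot \exp\!\bigl(\eps \sum_{t=1}^{\stime} y_t^\trans \pi_t\bigr)$.

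Next, the lower bound: for each coordinate $i$,
\[
\Phi_{\stime+1} \geq v_{\stime+1,i} = (1+\eps)^{\sum_{t=1}^\stime \pi_{ti}}.
\]
Taking logarithms of both bounds and using the elementary inequality $\ln(1+\eps) \geq \eps - \eps^2/2 \geq \eps(1-\eps)$ valid for $\eps \in (0,1)$, I would obtain
\[
(\eps - \eps^2) \sum_{t=1}^\stime \pi_{ti} \;\leq\; \ln(1+\eps) \sum_{t=1}^\stime \pi_{ti} \;\leq\; \ln d + \eps \sum_{t=1}^\stime y_t^\trans \pi_t,
\]
and dividing by $\eps$ rearranges to the per-expert guarantee $\sum_t y_t^\trans \pi_t \geq (1-\eps) \sum_t \pi_{ti} - \frac{\ln d}{\eps}$.

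Finally I would extend to arbitrary $y \in \splx{d}$: multiplying the per-expert inequality by $y_i \geq 0$ and summing over $i$ uses only $\sum_i y_i = 1$, so the $\ln d/\eps$ term is preserved and the $\sum_t \pi_{ti}$ term becomes $\sum_t y^\trans \pi_t$. This yields the claimed bound. No step is a real obstacle; the only small care needed is choosing a clean form of the $\ln(1+\eps)$ approximation that cleanly produces the $(1-\eps)$ factor rather than, say, $(1-\eps/2)$, which is why I use $\ln(1+\eps) \geq \eps - \eps^2$ on $(0,1)$.
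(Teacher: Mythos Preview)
Your argument is correct and follows essentially the same potential-function route as the paper: the same potential $\Phi_t=\ones^\trans v_t$, the same upper bound via $(1+\eps)^x\le 1+\eps x$, the same coordinate-wise lower bound, and the same inequality $\ln(1+\eps)\ge \eps-\eps^2$ to produce the $(1-\eps)$ factor. The only cosmetic difference is in the last step: the paper passes to arbitrary $y\in\splx{d}$ by noting the linear functional $y\mapsto \sum_t y^\trans\pi_t$ is maximized at a vertex $\sbv_i$, whereas you average the per-expert inequality with weights $y_i$; these are equivalent.
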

\noindent

\OMIT{ 
Before giving the full proof of Theorem~\ref{thm:pdbwk}, we present a simplified analysis of $\pdbwk$ in a toy model in which the outcome vectors are deterministic, so as to illuminate the key ideas. (This toy model of the \BwK problem is uninteresting as a learning problem since the latent structure does not need to be learned, and the problem reduces to solving a linear program.) Then we proceed with the actual analysis, which we split into ``analysis modulo error terms" and ``analysis of error terms".
} 


\subsection{Warm-up: The deterministic case}
\label{sec:pdbwk-determ}

\ascomment{this subsection has essentially been re-written, for the sake of clarity.}

To present the application of $\hedge$ to \BwK in its purest form, we first consider the ``deterministic case" in which the rewards of the various arms are deterministically equal to the components of a vector $r \in \R^{\narms}$, and the resource consumption vectors are deterministically equal to the columns of a matrix $C \in \R^{\nrsc \times \narms}$. Then there is no need to use upper/lower confidence bounds, so the algorithm can be simplified considerably, see Algorithm~\ref{alg:determ}. In the remainder of this subsection we discuss this algorithm and analyze its regret.

\begin{algorithm}[ht]
\caption{Algorithm \pdbwk for deterministic outcomes, with parameter $\eps\in(0,1)$}
\label{alg:determ}
\begin{algorithmic}[1]
\STATE {\bf Initialization}
\STATE \TAB In the first $\narms$ rounds, pull each arm once.
\STATE \TAB For each arm $x\in X$, let $r_x\in [0,1]$ and $C_x\in[0,1]^\nrsc$
\STATE \TAB\TAB denote the reward and the resource consumption vector revealed in Step 2.
\STATE \TAB $v_1 = \ones \in [0,1]^\nrsc$.
\STATE \TAB\TAB
    \COMMENT{$v_t\in [0,1]^\nrsc$ is the round-$t$ estimate of the optimal solution $\eta^*$ to \refeq{lp:dual} in Section~\ref{sec:lp}.}
\STATE \TAB\TAB
    \COMMENT{We interpret $v_t(i)$ as an estimate of the (fictional) unit cost of  resource $i$, for each $i$.}
\STATE \TAB Set $\eps = \sqrt{\ln(\nrsc)/\budg}$.
\FOR{rounds $t = \narms+1,\ldots,\stime \;\; \mbox{\it (i.e., until
resource budget exhausted)}$}
\STATE For each arm $x\in X$,
\STATE \TAB \emph{Expected cost} for one pull of arm $x$ is estimated by $\EstCost_x = C_x\cdot v_t$.
\STATE Pull arm $x=x_t\in X$ that maximizes  $r_x/\EstCost_x$, the \emph{bang-per-buck} ratio.
\label{algstep:argmin-det}
\STATE Update estimated unit cost for each resource $i$:
\begin{align*}
                v_{t+1}(i) = v_t(i)\, (1+\eps)^{\ell},\;  \ell = C_x(i).
\end{align*}
\label{algstep:mwu-det}
\ENDFOR
\end{algorithmic}
\end{algorithm}


Algorithm~\ref{alg:determ} is an instance of the multiplicative-weights
update method for solving packing linear programs. Interpreting
it through the lens of online learning, as in the survey by \citet{AroraHK},
it is updating a vector $y_t = v_t / \| v_t \|_1$ using
the $\hedge$ algorithm, where the payoff vector in any round $t>\narms$ is given by $\pi_t = C_{x_t}$ and the goal is to optimize the total (expected) payoff
    $W = \sum_{t=\narms+1}^\stime y_t\cdot C_{x_t}$.
Note that $W$ is also the total cost consumed by Algorithm~\ref{alg:determ}.

To see why $W$ is worth maximizing, let us relate it to the total reward collected by the algorithm in rounds $t>\narms$; denote this quantity by
    $\algrwd = \sum_{t=\narms+1}^{\stime-1} r_t$.
We will prove that
\begin{align} \label{eq:pdbwk-det-intuition}
    \algrwd \geq W\cdot \LPOPT/B \;
    \text{for any implementation of Step~\ref{algstep:mwu-det}}.
\end{align}
For this reason, maximizing $W$ also helps maximize $\algrwd$. Proving it is a major step in the analysis.

Let $\xi^*$ denote an optimal solution of the primal linear
program~\refeq{lp:primal} from Section~\ref{sec:lp},
and let $\optlp = r^\trans \xi^*$ denote the optimal
value of that LP.

For each round $t$, let $\x_t = \sbv_{x_t}^\narms$ denote the $x_t$-th coordinate vector.
We claim that
\begin{align}\label{eq:PDBwK-pointmass-det}
z_t \in \argmax_{\x \in \splx{\armset}} \left\{ \frac{r^\trans \x}{y_t^\trans C \x} \right\}.
\end{align}
In words: $z_t$ maximizes the ``bang-per-buck ratio" among all distributions $z$ over arms. Indeed, the  $\argmax$ in \eqref{eq:PDBwK-pointmass-det} is well-defined as that of a continuous function on a compact set. Say it is attained by some distribution $z$ over arms, and let $\rho\in\R$ be the corresponding $\max$. By maximality of $\rho$, the linear inequality
    $ \rho\, y_t^\trans C \x \geq r^\trans \x $
also holds at some extremal point of the probability simplex $\splx{\armset}$, i.e. at some point-mass distribution. For any such point-mass distribution, the corresponding arm maximizes the bang-per-buck ratio in the algorithm. Claim proved.

\begin{proof}[Proof of \eqref{eq:pdbwk-det-intuition}]
It follows that
\begin{align*}
y_t^\trans \, \pi_t
    &= y_t^\trans C z_t
    \leq r_t\, \left(y_t^\trans C \xi^*\right)/\LPOPT \\
W   &= \sum_t y_t^\trans \, \pi_t
    \leq \frac{1}{\LPOPT} \sum_t r_t\, \left(y_t^\trans C \xi^*\right)
    = \frac{1}{\LPOPT} \left(\sum_t r_t\, y_t^\trans\right) C \xi^*.
\end{align*}
Here the sums are over rounds $t$ with $\narms <t <\stime$. Now, letting
    $\bar{y} = \frac{1}{\algrwd} \sum_t r_t\, y_t \in [0,1]^\nrsc$
be the rewards-weighted average of distributions $y_{m+1} \LDOTS y_\tau$, it follows that
\begin{align*}
W   \leq \frac{\algrwd}{\LPOPT}\; \bar{y}^T C \xi^*
    \leq \frac{\algrwd}{\LPOPT}\; B.
\end{align*}
The last inequality follows because all components of $C \xi^*$ are at most $B$ by the primal feasibility of $\xi^*$.
\end{proof}

Now, combining \eqref{eq:pdbwk-det-intuition} and the regret bound for \hedge, we obtain
\begin{align} \label{eq:pdbwk-det-main}
\algrwd
    \geq W\cdot \LPOPT/B
    \geq \left[ (1-\eps) \sum_{\narms < t < \stime} y\, \pi_t
        - \frac{\ln \nrsc}{\eps} \right]
        \cdot \frac{\LPOPT}{B}
    \quad \forall y \in \splx{d}.
\end{align}
To continue this argument, we need to choose an appropriate vector $y$ to make the right-hand side large. Recall that $\pi_t = C z_t$, so
$\sum_{\narms < t < \stime} \pi_t$ is simply the total consumption vector in all rounds $\narms < t < \stime$. We know some resource $i$ must be exhausted by the time the algorithm stops, so the consumption of this resource is at least $B$. In a formula:
    $\textstyle \sum_{t=1}^\stime\; y\, \pi_t \geq B$,
where $y=\sbv_i^\nrsc$ is the identity vector for resource $i$. Plugging in this $y$ into \eqref{eq:pdbwk-det-main}, we obtain:
\begin{align*}
\algrwd
    &\geq \left[ (1-\eps) (B-m-1)
        - \frac{\ln \nrsc}{\eps} \right]
        \cdot \frac{\LPOPT}{B} \\
    &\geq \LPOPT - \left[ \eps B +m+1
        + \frac{\ln \nrsc}{\eps} \right]
        \cdot \frac{\LPOPT}{B} \\
    &= \LPOPT -
        O(\sqrt{B \ln d} +m) \cdot \frac{\LPOPT}{B}
        \qquad \text{if $\eps = \sqrt{\frac{\ln \nrsc}{\budg}}$}.
\end{align*}
This completes regret analysis for the deterministic case.

\subsection{Analysis modulo error terms}
\label{sec:modulo-errors}

We now commence the analysis of Algorithm \pdbwk.
In this subsection we show how to reduce the problem
of bounding the algorithm's regret to a problem
of estimating two error terms that reflect the
difference between the algorithm's confidence-bound estimates
of its own reward and resource consumption with the
empirical values of these random variables.
The error terms will be treated in Section~\ref{sec:error-analysis}.

Recall that the algorithm computes LCBs on expected resource consumption $\clcb{t,x}\in [0,1]^\nrsc$ and UCBs on expected rewards $\rucb{t,x}\in [0,1]$, for each round $t$ and each arm $x$. We also represent the LCBs as a matrix
    $\clcb{t} \in [0,1]^{\nrsc \times \narms}$
whose $x$-th column equals $\clcb{t,x}$, for each arm $x$. We also represent the UCBs as a vector $\rucb{t}\in [0,1]^\narms$ over arms whose $x$-th component equals $\rucb{t,x}$. Let $C_t$ be the resource-consumption matrix for round $t$. That is,
    $C_t \in [0,1]^{\nrsc \times \narms}$
denotes the matrix whose $(i,x)$ entry is the actual consumption of resource $i$ in round $t$ if arm $x$ were chosen in this round.

As in the previous subsection, let $\x_t = \sbv_{x_t}^\narms$ denote the $x_t$-th coordinate vector, and let
    $y_t = v_t / \| v_t \|_1$
be the vector of normalized costs.
Similar to \eqref{eq:PDBwK-pointmass-det}, $z_t$ maximizes the ``bang-per-buck ratio" among all distributions $z$ over arms:
\begin{align}\label{eq:PDBwK-pointmass}
z_t \in \argmax_{\x \in \splx{\armset}} \left\{ \frac{\rucb{t,x}^\trans \x}{y_t^\trans \clcb{t,x}\, \x} \right\}.
\end{align}

By Theorem~\ref{thm:conf-rad} and our choice of $\chernoffC$, it
holds with probability at least $1-T^{-1}$ that the confidence interval
for every latent parameter, in every round of execution, contains
the true value of that latent parameter. We call this high-probability
event a \emph{clean execution} of \pdbwk. Our regret guarantee will
hold deterministically assuming that a clean execution takes place.
The regret can be at most $T$ when a clean execution does not take
place, and since this event has probability at most $T^{-1}$ it
contributes only $O(1)$ to the regret. We will henceforth assume
a clean execution of \pdbwk.

\begin{claim}\label{cl:modulo-errors}
In a clean execution of Algorithm \pdbwk with parameter $\eps=\sqrt{\ln(d)/B}$, the algorithm's
total reward satisfies the bound
\begin{equation} \label{eq:modulo-errors}
\optlp - \algrwd \leq  \left[
  2 \optlp \left( \sqrt{\frac{\ln \nrsc}{\budg}} +
                \frac{\narms + 1}{\budg} \right) +
  \narms + 1 \right] +
  \frac{\optlp}{\budg} \left\| \sum_{\narms<t<\stime} \cdif{t} \x_t \right\|_\infty +
  \left| \sum_{\narms<t<\stime} \rdif{t}^\trans \x_t \right|,
\end{equation}
where
    $\cdif{t} = \crlz{t} - \clcb{t}$
and
    $\rdif{t} = \rucb{t} - \rrlz{t}$
for each round $t>\narms$.
\end{claim}


\begin{proof}
The claim is proven by mimicking the analysis of
Algorithm~\ref{alg:determ} in the preceding section,
incorporating error terms that reflect the differences
between observable values and latent ones.
As before, let $\xi^*$ denote an optimal solution of the primal linear
program~\refeq{lp:primal},
and let $\optlp = r^\trans \xi^*$ denote the optimal value of that LP.
Let $\optucb = \sum_{\narms < t < \stime} \rucb{t}^\trans \x_t$ denote the
total payoff the algorithm would have obtained, after its initialization
phase, if the actual
payoff at time $t$ were replaced with the upper confidence bound.
Let $y=\sbv_i^\nrsc$, where $i$ is a resource exhausted by the algorithm when it stops; then
    $y^\trans \left(\sum_{t=1}^\stime \crlz{t} \x_t \right) \geq \budg$.
As before,
\begin{equation} \label{eq:pdbwk.1}
y^\trans \left( \sum_{\narms < t < \stime} \crlz{t} \x_t \right) \geq \budg - \narms - 1.
\end{equation}
Finally let
\[
\bar{y} = \frac{1}{\optucb} \sum_{\narms < t < \stime} (\rucb{t}^\trans \x_t) y_t.
\]
Assuming a clean execution, we have
{\allowdisplaybreaks
\begin{align}
\budg & \geq \bar{y}^\trans C \xi^*
  && \mbox{\it ($\xi^*$ is primal feasible)}
\nonumber \\
  &= \frac1{\optucb} \sum_{\narms < t < \stime}
     (\rucb{t}^\trans \x_t)(y_t^\trans C \xi^*)
   && \mbox{\it (definition of $\bar{y}$)}
\nonumber \\
  & \geq \frac{1}{\optucb} \sum_{\narms<t<\stime} (\rucb{t}^\trans \x_t) (y_t^\trans \clcb{t} \xi^*)
 && \mbox{\it (clean execution)}
\nonumber \\
  & \geq \frac{1}{\optucb} \sum_{\narms<t<\stime} (\rucb{t}^\trans \xi^*) (y_t^\trans \clcb{t} \x_t)
 && \mbox{\it (by \eqref{eq:PDBwK-pointmass})}
\nonumber \\
    \label{eq:pdbwk-intuition-mw-proof}
  & \geq \frac{1}{\optucb} \sum_{\narms<t<\stime} (r^\trans \xi^*) (y_t^\trans \clcb{t} \x_t)
 && \text{\it (clean execution)} \\
  & \geq \frac{\optlp}{\optucb} \left[
         (1 - \eps) y^\trans \left( \sum_{\narms<t<\stime} \clcb{t} \x_t \right)
         - \frac{\ln \nrsc}{\eps} \right]
 && \mbox{\it (Hedge guarantee)}
\nonumber \\
  & > (1-\eps)\,\frac{\optlp}{\optucb} \left[
           y^\trans \left( \sum_{\narms<t<\stime} \crlz{t} \x_t \right)
         - y^\trans \left( \sum_{\narms<t<\stime} \cdif{t} \x_t \right)
         - \frac{\ln \nrsc}{\eps} \right]
\nonumber \\
  & \geq \frac{\optlp}{\optucb} \left[
           (1-\eps) \budg - \narms - 1
         - (1-\eps) y^\trans \left( \sum_{\narms<t<\stime} \cdif{t} \x_t \right)
         - \frac{\ln \nrsc}{\eps} \right]
 && \mbox{\it (definition of $y$; see eq.~\refeq{eq:pdbwk.1})}
\nonumber \\
\optucb
  & \geq \optlp \left[ 1 - \eps - \frac{\narms+1}{\budg} -
         \frac{1}{\budg} \left\| \sum_{\narms<t<\stime} \cdif{t} \x_t \right\|_{\infty}
         - \frac{\ln \nrsc}{\eps \budg} \right] .
\label{eq:optucb}
\end{align}
} 
The algorithm's actual payoff,
    $\algrwd = \sum_{t=1}^\stime \rrlz{t}^\trans \x_t$,
satisfies the inequality
    $$\algrwd \geq
            \optucb - \sum_{\narms<t<\stime} (\rucb{t} - \rrlz{t})^\trans \x_t
          = \optucb - \sum_{\narms<t<\stime} \rdif{t}^\trans \x_t.
    $$
Combining this with~\refeq{eq:optucb}, and plugging in $\eps=\sqrt{\ln(d)/B}$,
we obtain the bound~\refeq{eq:modulo-errors}, as claimed.
\end{proof}

\subsection{Error analysis}
\label{sec:error-analysis}

We complete the proof of Theorem~\ref{thm:pdbwk}
by proving upper bounds on the terms
$\left\| \sum_{\narms<t<\stime} \cdif{t} \x_t \right\|_\infty$
and $\left| \sum_{\narms<t<\stime} \rdif{t} \x_t \right|$
that appear on the right side of~\refeq{eq:modulo-errors}.
Both bounds follow from a more general lemma which we present below.

\asedit{
The general lemma considers a sequence of vectors $a_1,\ldots,a_\stime $ in $[0,1]^\narms$ and another vector $a_0\in [0,1]^\narms$. Here $a_{t,x}\in [0,1]$ represents a numerical outcome (\ie a reward or a consumption of a given resource) if arm $x$ is pulled in round $t$, and $a_{0,x}$ represents the corresponding expected outcome.  Further, for each round $t>\narms$ we have an estimate $b_t\in [0,1]^\narms$ for the outcome vector $a_t$ . We only assume a clean execution of the algorithm, and we derive an upper bound on
$ \left| \sum_{\narms < t < \stime} (b_t - a_t)^\trans \x_t \right| $.


\begin{lemma} \label{lem:vector-conf}
Consider two sequences of vectors
    $a_1,\ldots,a_\stime$ and $b_1,\ldots,b_\stime$,
in $[0,1]^\narms$, and a vector $a_0\in [0,1]^\narms$. For each arm $x$ and each round $t>\narms$, let $\aemp{t,x}\in [0,1]$ be the average observed outcome up to round $t$, \ie the average outcome $a_{s,x}$ over all rounds $s\leq t$ in which arm $x$ has been chosen by the algorithm; let $\npul_{t,x}$ be the number of such rounds. Assume that for each arm $x$ and all rounds $t$ with $\narms<t<\stime$ we have
\begin{align*}
|b_{t,x} - a_{0,x}|
    &\leq 2 \, \rad(\aemp{t,x},\npul_{t,x})
    \leq 6 \, \rad(a_{0,x}, \npul_{t,x}),\\
|\aemp{t,x} - a_{0,x}|
    &\leq \rad(\aemp{t,x},\npul_{t,x}).
\end{align*}
Let $\asum = \sum_{t=1}^{\stime-1} a_{t,x_t}$ be the total outcome collected by the algorithm. Then
\begin{equation} \label{eq:vector-conf}
\left| \sum_{\narms < t < \stime} (b_t - a_t)^\trans \x_t \right| \leq
O \left( \sqrt{\chernoffC \, \narms \asum} + \chernoffC \, \narms\, \log\horizon \right).
\end{equation}
\end{lemma}
} 


Before proving the lemma, we need to establish
a simple fact about confidence radii.

\begin{claim} \label{cl:confrad-sum}
For any two vectors $a,M \in \R_+^\narms$, we have
\begin{equation} \label{eq:confrad-sum}
\textstyle \sum_{x=1}^{\narms}\; \rad(a_x,M_x) \, M_x \leq
\sqrt{\chernoffC \, m (a^\trans M)} + \chernoffC \, m.
\end{equation}
\end{claim}
\begin{proof}
The definition of $\rad(\cdot,\cdot)$ implies that
$\rad(a_x,M_x) \, M_x \leq \sqrt{\chernoffC \, a_x M_x} + \chernoffC$.
Summing these inequalities and applying Cauchy-Schwarz,
\[
\textstyle  \sum_{x=1}^{\narms} \rad(a_x,M_x) \, M_x \leq
\sum_{x=1}^{\narms} \sqrt{\chernoffC \, a_x M_x} + \chernoffC\, m \leq
\sqrt{m} \cdot \sqrt{\sum_{x \in \arms} \chernoffC \,  a_x M_x} + \chernoffC\, m,
\]
and the lemma follows by rewriting the expression on the right side.
\end{proof}

\begin{proof}[Proof of Lemma~\ref{lem:vector-conf}]
For convenience, denote $\vnpul_t = (\npul_{t,1} \LDOTS \npul_{t,\narms})$, and observe that
\[\textstyle
    \vnpul_t = \sum_{s=1}^t \x_s
\quad\text{and}\quad
    A= \overline{a}_{\stime-1}^\trans \vnpul_{\stime-1} = \sum_{s=1}^{\stime-1} a_s^\trans\, \x_s.
\]

We decompose the left side of~\refeq{eq:vector-conf}
as a sum of three terms,
\begin{equation} \label{eq:vc.1}
\sum_{\narms < t < \stime} (b_t - a_t)^\trans \x_t =
\sum_{t=1}^{\narms} (a_t - b_t)^\trans \x_t +
\sum_{t=1}^{\stime-1} (b_t - a_0)^\trans \x_t +
\sum_{t=1}^{\stime-1} (a_0 - a_t)^\trans \x_t,
\end{equation}
then bound the three terms separately. The
first sum is clearly bounded above by $\narms$.
We next work on bounding the third sum.
Let $s=\stime-1$.
\begin{align}
\left| (a_0 - \aemp{s})^\trans\,\, \vnpul_s \right|
    &\leq \sum_{x \in \arms} \rad(\aemp{s,x},\npul_{s,x})\, \npul_{s,x}
        && \text{{\em (assuming clean execution)}}  \nonumber \\
    &\leq \sqrt{\chernoffC \, \narms \asum} + \chernoffC \, \narms.
   && \text{{\em (by Claim~\ref{cl:confrad-sum})}} \label{eq:vc.2} \\
\sum_{t=1}^{s} (a_0-a_t)^\trans\, \x_t
  &=
    a_0^\trans\, \vnpul_s - \sum_{t=1}^{s} a_t^\trans\, \x_t
   =
    (a_0 - \aemp{s})^\trans\, \vnpul_s.
 \nonumber \\
\left| \sum_{t=1}^{s} (a_0-a_t)^\trans\, \x_t  \right|
    &= \left| (a_0 - \aemp{s})^\trans\,\, \vnpul_s \right|
    \leq \sqrt{\chernoffC \, \narms \asum} + \chernoffC \, \narms. \nonumber
\end{align}

Finally we bound the middle sum in~\refeq{eq:vc.1}.
\begin{align}
      \left| \sum_{t=1}^{s} (b_t - a_0)^\trans \x_t \right|
  &\leq
      6 \sum_{t=1}^{s} \sum_{x \in \arms} \rad(a_{0,x}, \npul_{t,x}) \x_{t,x}
\nonumber \\
  &=
      6 \sum_{x \in \arms} \sum_{\ell=1}^{\npul_{s,x}} \rad(a_{0,x}, \ell)
\nonumber \\
  &=
      O \left( \sum_{x \in \arms}
            \sqrt{\chernoffC\, a_{0,x}\,\npul_{s,x}} + \chernoffC \log(\npul_{s,x})
        \right)
\nonumber \\
  & \leq
      O \left( \sqrt{\chernoffC \, \narms\, a_0^\trans\, \vnpul_s} + \chernoffC\, \narms\,            \log \horizon
        \right). \label{eq:vc.3}
\end{align}

We would like to replace the expression $a_0^\trans\, \vnpul_s$ on
the last line with the expression $\aemp{s}^\trans \,\vnpul_s = \asum$.
To do so, recall \eqref{eq:vc.2} and apply the following calculation:
\begin{align*}
a_0^\trans\, \vnpul_s
&\leq
    \aemp{s}^\trans \vnpul_s + \sqrt{\chernoffC \, \narms \asum}
         + \chernoffC \, \narms \\
&=
\asum + \sqrt{\chernoffC \, \narms \asum} + \chernoffC \, \narms \\
&\leq
\left( \sqrt{\asum} + \sqrt{\chernoffC \, \narms} \right)^2
\\
\sqrt{\chernoffC \, \narms a_0^\trans\, \vnpul_s}
   &\leq
\sqrt{\chernoffC \, \narms}
\left( \sqrt{\asum} + \sqrt{\chernoffC \, \narms} \right) =
\sqrt{\chernoffC \, \narms \asum} + \chernoffC \, \narms.
\end{align*}
Plugging this into \eqref{eq:vc.3}, we bound the middle sum in~\refeq{eq:vc.1} as
\begin{align}
      \left| \sum_{t=1}^{s} (b_t - a_0)^\trans \x_t \right|
 \leq
      O \left( \sqrt{\chernoffC \, \narms\asum} + \chernoffC\, \narms\,\log \horizon
        \right). \label{eq:vc.4}
\end{align}

Summing up the upper bounds for the three terms
on the right side of~\refeq{eq:vc.1}, we obtain~\refeq{eq:vector-conf}.
\end{proof}

\begin{corollary} \label{cor:error-terms}
In a clean execution of \pdbwk,
$$\left| \sum_{\narms<t<\stime} \rdif{t} \x_t \right| \leq
O \left( \sqrt{\chernoffC \, \narms  \algrwd} + \chernoffC \, \narms\, \log\horizon \right)$$
and
$$\left\| \sum_{\narms<t<\stime} \cdif{t} \x_t \right\|_{\infty} \leq
O \left( \sqrt{\chernoffC \, \narms  \budg} + \chernoffC \, \narms \, \log\horizon \right).$$
\end{corollary}
\begin{proof}
The first inequality is obtained by applying Lemma~\ref{lem:vector-conf}
with vector sequences $a_t = r_t$ and $b_t = u_t$, \asedit{and vector $a_0=r$. In other words, $a_0$ is the vector of expected rewards across all arms.}

The second inequality is obtained by applying the same lemma separately for each resource $i$, with vector sequences $a_t = (\sbv^\nrsc_i)^\trans\, C_t$ and
$b_t = \sbv^\nrsc_i\, L_t$, \asedit{and vector $a_0$ being the $i$-th row of matrix $C$. In other words, $a_0$ is the vector of expected consumption of resource $i$ across all arms.}
\end{proof}

\begin{proofof}{Theorem~\ref{thm:pdbwk}}
If $\narms \geq \budg/\log(\nrsc\horizon)$, then the regret bound in Theorem~\ref{thm:pdbwk} is trivial. Therefore we can assume without loss of generality that $\narms \leq \budg/\log(\nrsc\horizon)$.
Therefore, recalling \eqref{eq:modulo-errors}, we observe that
\[
2 \, \optlp \left( \sqrt{\frac{\ln \nrsc}{\budg}} + \frac{\narms+1}{\budg} \right)
=
O \left( \sqrt{\narms \log(\nrsc \narms \horizon)} \;
\frac{\optlp}{\sqrt{\budg}} \right).
\]
The term $m+1$ on the right side of~\eqref{eq:modulo-errors}
is bounded above by $m \log(\nrsc \narms \horizon)$.
Finally, using Corollary~\ref{cor:error-terms} we
see that the sum of the final two terms on the right side
of~\refeq{eq:modulo-errors} is bounded by
\begin{align*}
O \left( \sqrt{\chernoffC \, \narms} \left( \frac{\optlp}{\sqrt{\budg}} + \sqrt{\optlp}\right)
    + \chernoffC \, \narms \, \log\horizon   \right).
\end{align*}
The theorem follows by plugging in
    $\chernoffC = \Theta(\log(\nrsc \narms \horizon)) = O(\log(\nrsc \horizon)) $
(because $\narms\leq\budg\leq\horizon$).
\end{proofof}

%
%


\section{Lower Bound}
\label{sec:LB}

\newcommand{\Fam}{\mathcal{G}}  

\newcommand{\FamT}[1][]{\ensuremath{\Fam^{#1}_{(p,\eps)}}\xspace}
\newcommand{\FamInf}{\FamT[\infty]}

\newcommand{\KL}[3][]{\ensuremath{\mathtt{KL}_{#1}(#2\,\|\,#3)}}
\newcommand{\Wt}{\vec{w}_t}

\newcommand{\Id}[1]{{\bf 1}_{\{#1\}}}

We prove that regret~\refeq{eq:intro-regret} obtained by algorithm \pdbwk is optimal up to polylog factors. Specifically, we prove that any algorithm for \BwK must, in the worst case, incur regret
\begin{align}\label{eq:thm:lb}
\Omega\left( \min\left( \OPT,\;
    \OPT \sqrt{\frac{\narms}{\budg}}
    +\sqrt{\narms\,\OPT}\right)\right),
\end{align}
where $\narms = |X|$ is the number of arms and $\budg = \min_i \budg_i$ is the smallest budget.

\begin{theorem}\label{thm:lb}
Fix any $\narms\geq 2$, $d\geq 1$, $\OPT\geq \narms$, and
    $(\budg_1 \LDOTS \budg_d)\in [2,\infty)$.
Let $\Fam$ be the family of all \BwK problem instances with $\narms$ arms, $d$ resources, budgets $(\budg_1 \LDOTS \budg_d)$ and optimal reward $\OPT$. Then any algorithm for \BwK must incur regret~\refeq{eq:thm:lb} in the worst case over $\Fam$.
\end{theorem}

We treat the two summands in~\eqref{eq:thm:lb} separately:

\begin{claim}\label{cl:lb}
Consider the family $\Fam$ from Theorem~\ref{thm:lb}, and let \ALG be some algorithm for \BwK.
\begin{OneLiners}
\item[(a)] \ALG incurs regret
    $\Omega\left(\min\left( \OPT,\,\sqrt{\narms\,\OPT}\,\right)\right)$
in the worst case over $\Fam$.
\item[(b)] \ALG incurs regret
    $\Omega\left(\min\left( \OPT,\,\OPT \sqrt{\frac{\narms}{\budg}}\,\right)\right)$
in the worst case over $\Fam$.
\end{OneLiners}
\end{claim}

Theorem~\ref{thm:lb} follows from Claim~\ref{cl:lb}(ab). For part (a), we use a standard lower-bounding example for MAB. For part (b), we construct a new example, specific to \BwK, and analyze it using KL-divergence.

\begin{proof}[Proof of Claim~\ref{cl:lb}(a)]
Fix $\narms\geq 2$ and $\OPT\geq \narms$.
Let $\Fam_0$ be the family of all MAB problem instances with $\narms$ arms and time horizon $T = \flr{2\,\OPT}$, where the ``best arm'' has expected reward $\mu^* = \OPT/T$ and all other arms have reward $\mu^*-\eps$ with $\eps = \tfrac14 \sqrt{\narms/T}$.
Note that
    $\mu^*\in[\tfrac12, \tfrac34]$
and
    $\eps \leq \tfrac14$.
It is well-known \citep{bandits-exp3} that any MAB algorithm incurs regret
    $\Omega(\,\sqrt{\narms\,\OPT}\,)$
in the worst case over $\Fam_0$.

To ensure that $\Fam_0\subset \Fam$, let us treat each MAB instance in $\Fam_0$ as a \BwK instance with $d$ resources, budgets $(\budg_1 \LDOTS \budg_d)$, and no resource consumption.
\end{proof}

\subsection{The new lower-bounding example: proof of Claim~\ref{cl:lb}(b)}

Our lower-bounding example is very simple. There are $\narms$ arms. Each arm gives reward $1$ deterministically. There is a single resource with budget $B$.%
\footnote{More formally, other resources in the setting of Theorem~\ref{thm:lb} are not consumed. For simplicity, we leave them out.} The resource consumption, for each arm and each round, is either $0$ or $1$. The expected resource consumption is $p-\eps$ for the ``best arm" and $p$ for all other arms, where $0<\eps<p<1$. There is time horizon $T<\infty$. Let \FamT denote the family of all such problem instances, for fixed parameters $(p,\eps)$. We analyze this family in the rest of this section.

We rely on the following fact about stopping times of random sums. For the sake of completeness, we provide a proof in Section~\ref{app:facts}.

\begin{fact}\label{fact:stopping}
Let $S_t$ be the sum of $t$ i.i.d. 0-1 variables with expectation $q$. Let $\tau^*$ be the first time this sum reaches a given number $B\in \N$. Then $\E[\tau^*] = B/q$. Moreover, for each $T>\E[\tau^*]$ it holds that
$$ \textstyle \sum_{t>T} \; \Pr[\tau^*\geq t] \leq \E[\tau^*]^2/T.
$$
\end{fact}

\xhdr{Infinite time horizon.}
It is convenient to consider the family of problem instances which is the same as \FamT except that it has the \emph{infinite} time horizon; denote it \FamInf. We will first prove the desired lower bound for this family, then extend it to $\FamT$.

The two crucial quantities that describe algorithm's performance on an instance in \FamInf is the stopping time and the total number of plays of the best arm. (Note that the total reward is equal to the stopping time minus 1.) The following claim connects these two quantities.

\begin{claim}[Stopping time]\label{cl:LB-Fam-stopping}
Fix an algorithm \ALG for \BwK and a problem instance in \FamInf. Consider an execution of \ALG on this problem instance. Let $\tau$ be the stopping time of \ALG. For each round $t$, let $N_t$ be the number of rounds $s\leq t$ in which the best arm is selected. Then
\begin{align*}
   p \E[\tau] - \eps \E[N_\tau] = \flr{\budg+1}.
\end{align*}
\end{claim}
\begin{proof}
Let $C_t$ be the total resource consumption after round $t$. Note that
    $\E[C_t] = pt-\eps N_t$.
We claim that
\begin{align}\label{eq:pf:cl:LB-Fam-regret}
    \E[C_\tau] = \E[ p\tau -\eps N_\tau].
\end{align}
Indeed, let $Z_t = C_t-(pt-\eps N_t)$. It is easy to see that $Z_t$ is a martingale with bounded increments, and moreover that $\Pr[\tau<\infty]=1$. Therefore the Optional Stopping Theorem applies to $Z_t$ and $\tau$, so that $\E[Z_\tau] = E[Z_0]=0$. Therefore we obtain \eqref{eq:pf:cl:LB-Fam-regret}.

To complete the proof, it remains to show that $C_\tau = \flr{B+1}$. Recall that \ALG stops if and only if $C_t>B$. Since resource consumption in any round is either $0$ or $1$, it follows that $C_\tau = \flr{B+1}$.
 \end{proof}

\begin{corollary} \label{cor:LB-stopping}
Consider the setting in Claim~\ref{cl:LB-Fam-stopping}. Then:
\begin{OneLiners}
\item[(a)] If \ALG always chooses the best arm then
    $ \E[\tau] = \flr{B+1}/(p-\eps)$.
\item[(b)] $\OPT= \flr{B+1}/(p-\eps)-1$ for any problem instance in \FamInf.
\item[(c)] $p \E[\tau] - \eps \E[N_\tau] = (p-\eps)\, (1+\OPT)$.
\end{OneLiners}
\end{corollary}


\begin{proof}
For part(b), note that we have
    $\E[\tau] \leq \flr{B+1}/(p-\eps)$,
so
    $\OPT \leq \flr{B+1}/(p-\eps) -1$.
By part (a), the equality is achieved by the policy that always selects the best arm.
\end{proof}

The heart of the proof is a KL-divergence argument which bounds the number of plays of the best arm. This argument is encapsulated in the following claim, whose proof is deferred to Section~\ref{sec:KL-div}.

\begin{lemma}[best arm]\label{lm:best-arm}
Assume $p\leq \tfrac12$ and $\tfrac{\eps}{p} \leq \tfrac{1}{16} \sqrt{\tfrac{\narms}{B}}$.
Then for any \BwK algorithm there exists a problem instance in \FamInf such that the best arm is chosen at most $\tfrac34\, \OPT$ times in expectation.
\end{lemma}

Armed with this bound and Corollary~\ref{cor:LB-stopping}(c), it is easy to lower-bound regret over \FamInf.

\begin{claim}[regret]\label{cl:LB-Fam-regret-infinite}
If $p\leq \tfrac12$ and
$\tfrac{\eps}{p} \leq \tfrac{1}{16} \sqrt{\tfrac{\narms}{B}}$ then
any \BwK algorithm incurs regret
    $\tfrac{\eps}{4p}\, \OPT$
over \FamInf.
\end{claim}
\begin{proof}
Fix any algorithm \ALG for \BwK. Consider the problem instance whose existence is guaranteed by Lemma~\ref{lm:best-arm}. Let $\tau$ be the stopping time of \ALG, and let $N_t$ be the number of rounds $s\leq t$ in which the best arm is selected. By Lemma~\ref{lm:best-arm} we have
    $\E[N_\tau] \leq \tfrac34\, \OPT$.
Plugging this into Corollary~\ref{cor:LB-stopping}(c) and rearranging the terms, we obtain
    $ \E[\tau] \leq (1+\OPT)(1-\tfrac{\eps}{4p})$.
Therefore, regret of \ALG is
    $\OPT-(\E[\tau]-1) \geq \tfrac{\eps}{4p}\, \OPT$.
\end{proof}

Thus, we have proved the lower bound for the infinite time horizon.

\xhdr{Finite time horizon.}
Let us ``translate'' a regret bound for $\FamInf$ into a regret bound for $\FamT$.

We will need a more nuanced notation for $\OPT$. Consider the family of problem instances in $\FamT\cup\FamInf$ with a particular time horizon $T\leq \infty$. Let $\OPT_{(p,\eps,T)}$ be the optimal expected total reward for this family (by symmetry, this quantity does not depend on which arm is the best arm). We will write
    $\OPT_T = \OPT_{(p,\eps,T)}$
when parameters $(p,\eps)$ are clear from the context.

\begin{claim}\label{cl:KL-OPT-T}
For any fixed $(p,\eps)$ and any $T>\OPT_\infty$ it holds that
$\OPT_T \geq \OPT_\infty - \OPT^2_\infty/T$.
\end{claim}
\begin{proof}
Let $\tau^*$ be the stopping time of a policy that always plays the best arm on a problem instance in \FamInf.
\begin{align*}
\OPT_\infty - \OPT_T
    &= \E[\tau^*] - \E[\min(\tau^*,T)] \\
    &= \textstyle \sum_{t>T}\; (t-T)\; \Pr[\tau^*=t] \\
    &= \textstyle \sum_{t>T}\; \Pr[\tau^*\geq t] \\
    &\leq \E[\tau^*]/T^2 = \OPT_\infty^2/T.
\end{align*}
The inequality is due to Fact~\ref{fact:stopping}.
\end{proof}

\begin{claim}\label{cl:LB-Fam-regret}
Fix $(p,\eps)$ and fix algorithm \ALG. Let $\reg_T$ be the regret of \ALG over the problem instances in $\FamT\cup\FamInf$ with a given time horizon $T\leq \infty$. Then
$ \reg_T \geq \reg_\infty - \OPT_\infty^2/T.$
\end{claim}

\begin{proof}
For each problem instance $\adv{}\in\FamInf$, let $\Rew_T(\adv{})$ be the expected total reward of \ALG on $\adv{}$, if the time horizon is $T\leq \infty$. Clearly,
    $\Rew_\infty(\adv{}) \geq \Rew_T(\adv{})$.
Therefore, using Claim~\ref{cl:KL-OPT-T}, we have:
\begin{align*}
\reg_T
    &= \OPT_T - \inf_{\adv{}}  \Rew_T(\adv{}) \\
    &\geq \OPT_T - \inf_{\adv{}} \; \Rew_\infty(\adv{}) \\
    &= \reg_\infty +\OPT_T - \OPT_\infty \\
    &\geq \reg_\infty - \OPT_\infty^2/T. \qquad \qedhere
\end{align*}
\end{proof}

\begin{lemma}[regret: finite time horizon]\label{lm:LB-Fam-regret}
Fix
    $p\leq \tfrac12$
and
    $\eps = \tfrac{p}{16}\,\min(1,\sqrt{\narms/B})$.
Then for any time horizon $T>\tfrac{8p}{\eps}\, \OPT_\infty$ and any \BwK algorithm \ALG there exists a problem instance in \FamT with time horizon $T$ for which \ALG incurs regret
    $\Omega(\OPT_T)\,\min(1,\sqrt{\narms/B})$.
\end{lemma}

\begin{proof}
By Claim~\ref{cl:LB-Fam-regret-infinite}, \ALG incurs regret at least $\tfrac{\eps}{4p}\, \OPT_\infty$ for some problem instance in \FamInf. By Claim~\ref{cl:LB-Fam-regret}, \ALG incurs regret at least $\tfrac{\eps}{8p}\, \OPT_\infty$ for the same problem instance in \FamT with time horizon $T$. Since
    $\OPT_\infty\geq \OPT_T$,
this regret is at least
    $\tfrac{\eps}{8p}\, \OPT_T =  \Omega(\OPT_T)\,\min(1,\sqrt{\narms/B})$.
\end{proof}

\OMIT{Let $\OPT_T$ be the optimal expected total reward for a problem instance in \FamT with time horizon $T$.}

Let us complete the proof of Claim~\ref{cl:lb}(b). Recall that Claim~\ref{cl:lb}(b) specifies the values for $(\narms,\budg,\OPT)$ that our problem instance must have. Since we have already proved Claim~\ref{cl:lb}(a) and
    $\OPT \sqrt{\frac{\narms}{\budg}} \leq O(\sqrt{\narms\, \OPT})$
for $\OPT<3\budg$, it suffices to assume $\OPT\geq 3\budg$.

Let
    $\eps(p) = \tfrac{p}{16}\,\min(1,\sqrt{\narms/B})$,
as prescribed by Lemma~\ref{lm:LB-Fam-regret}. Then taking $\eps=\eps(p)$ we obtain regret
    $\Omega(\OPT_T)\,\min(1,\sqrt{\narms/B})$
for any parameter $p\leq \tfrac12$ and any time horizon
    $T> \tfrac{8p}{\eps}\,\OPT_{(p,\eps,\infty)}$.
It remains to pick such $p$ and $T$ so as to ensure that
    $f(p,T) = \OPT$,
where
    $f(p,T) = \OPT_{(p,\eps(p),T)}$.

Recall from Corollary~\ref{cor:LB-stopping}(b) that
    $\OPT_{(p,\eps,\infty)} = \tfrac{\Gamma}{p}-1$,
where
$$
 \Gamma = \flr{B+1}/\left( 1-\tfrac{1}{16}\, \min(1,\sqrt{\narms/B})\right)
$$
is a ``constant" for the purposes of this argument, in the sense that it does not depend on $p$ or $T$. So we can state the sufficient condition for proving  Claim~\ref{cl:lb}(b) as follows:
\begin{align}\label{eq:LB-proof-endgame-goal}
\text{Pick $p\leq \tfrac12$ and $T\geq \tfrac{8\Gamma}{\eps(p)}$ such that
    $f(p,T) = \OPT$.}
\end{align}

Recall that
    $\OPT_{(p,\eps,\infty)} \geq \OPT_{(p,\eps,T)}$ for any $T$,
and
    $\OPT_{(p,\eps,T)} \geq \tfrac12\,\OPT_{(p,\eps,\infty)}$
for any $T>2\,\OPT_{(p,\eps,\infty)}$ by Claim~\ref{cl:KL-OPT-T}. We summarize this as follows: for any $T>2 (\tfrac{\Gamma}{p}-1)$,
\begin{align}\label{eq:KL-bound-OPT}
    \tfrac{\Gamma}{p}-1 \geq  \OPT_{(p,\eps,T)} \geq  \tfrac12 (\tfrac{\Gamma}{p}-1).
\end{align}

Define $p_0 = \Gamma/\OPT$. Since $\OPT\geq 3\budg$,
    $\Gamma \leq \tfrac{16}{15}(B+1)$
and $B\geq 4$, it follows that $p_0 \geq \tfrac12$.
Let $T = \tfrac{8\Gamma}{\eps(p_0)}$. Then \eqref{eq:KL-bound-OPT} holds for all $p\in [p_0/4, \tfrac12]$. In particular,
$$ f(p_0, T) \leq \Gamma/p_0 = \OPT \leq f(p_0/4, T).
$$
Since $f(p,T)$ is continuous in $p$, there exists $p\in [p_0/4, p_0]$ such that
    $f(p,T) = \OPT$.
Since $p\leq p_0$, we have $T\geq \tfrac{8\Gamma}{\eps(p)}$, satisfying all requirements in \eqref{eq:LB-proof-endgame-goal}. This completes the proof of Claim~\ref{cl:lb}(b), and therefore the proof of Theorem~\ref{thm:lb}.

\OMIT{ 

Finally, let us use this regret bound to prove Claim~\ref{cl:lb}(b). We consider a minor generalization of \FamInf where each reward is deterministically equal to $r$, for some parameter $r\in(0,1]$. This generalization simply rescales all rewards (and hence, the value of $\OPT$) by the factor of $r$. Thus, the regret bound (Claim~\ref{lm:LB-Fam-regret}) holds as is, and the only change is that now
    $\OPT = r\, \flr{B+1}/(p-\eps)$.

Given $(\budg,\OPT,\narms)$, we need to pick the parameters $r\in (0,1]$, $p\in(0,1)$ and $\eps\in (0,p)$ to achieve the desired regret and match the desired value for $\OPT$. To achieve the desired regret, we set
    $\eps = \gamma p$,
where
    $\gamma = \tfrac{p}{8}\, \sqrt{\tfrac{\min(\narms,B)}{B}}$,
as prescribed by Lemma~\ref{lm:LB-Fam-regret}
Then
    $\OPT = \tfrac{r}{p}\, \flr{B+1}/(1-\gamma)$.
We can adjust $r$ and $p$ to achieve any desired positive value for the ratio $\tfrac{r}{p}$, so in particular we can achieve the desired value for $\OPT$.

This completes the proof of Claim~\ref{cl:lb}(b) and Theorem~\ref{thm:lb}.

} 

\subsection{Background on KL-divergence (for the proof of Lemma~\ref{lm:best-arm})}
\label{sec:KL-bg}

The proof of Lemma~\ref{lm:best-arm} relies on the concept of KL-divergence. Let us provide some background to make on KL-divergence to make this proof self-contained. We use a somewhat non-standard notation that is tailored to the needs of our analysis.

The \emph{KL-divergence} (a.k.a. \emph{relative entropy}) is defined as follows. Consider two distributions $\mu,\nu$ on the same finite universe $\Omega$.%
\footnote{We use $\mu,\nu$ to denote distributions throughout this section, whereas $\mu$ denotes the latent structure elsewhere in the paper.}
Assume $\mu \ll \nu$ (in words, $\mu$ is \emph{absolutely continuous} with respect to $\nu$), meaning that $\nu(w)=0 \Rightarrow \mu(w)=0$ for all $w\in \Omega$.
Then KL-divergence of $\mu$ given $\nu$ is
$$ \KL{\mu}{\nu}
    \triangleq \E_{w\sim (\Omega,\,\mu)} \log\left(\frac{\mu(w)}{\nu(w)}\right)
    = \sum_{w\in\Omega} \log\left(\frac{\mu(w)}{\nu(w)}\right) \mu(w).
$$
In this formula we adopt a convention that $\tfrac{0}{0}=1$. We will use the fact that
\begin{align}\label{eq:KL-norm}
    \KL{\mu}{\nu}\geq \tfrac12\; \|\mu-\nu\|_1^2.
\end{align}

Henceforth, let $\mu,\nu$ be distributions on the universe $\Omega^\infty$, where $\Omega$ is a finite set. For
    $\vec{w}=(w_1, w_2,\; \ldots) \in \Omega^\infty$ and $t\in \N$,
let us use the notation
    $\Wt = (w_1 \LDOTS w_t) \in \Omega^t$.
Let $\mu_t$ be a restriction of $\mu$ to $\Omega^t$: that is, a distribution on $\Omega^t$ given by
$$ \mu_t(\Wt) \triangleq \mu\left( \{ \vec{u}\in \Omega^\infty:\; \vec{u}_t = \Wt \} \right).
$$
The \emph{next-round conditional distribution} of $\mu$ given $\Wt$, $t<T$ is defined by
$$ \mu\left( w_{t+1} \,|\, \Wt \right)
    \triangleq \frac{ \mu_{t+1}(\vec{w}_{t+1}) }{  \mu_t(\Wt)  }.
$$
Note that $\mu( \cdot  \,|\Wt)$ is a distribution on $\Omega$ for every fixed $\Wt$.

The \emph{conditional KL-divergence} at round $t+1$ is defined as
$$ \KL[t+1]{\mu}{\nu}
    \triangleq \E_{\Wt\sim (\Omega^t,\;\mu_t)}
        \KL{ \mu(\cdot\,|\Wt) }{ \nu(\cdot\,|\Wt) }.
$$
In words, this is the KL-divergence between the next-round conditional distributions
    $\mu(\cdot\,|\Wt)$ and $\nu(\cdot\,|\Wt)$,
in expectation over the random choice of $\Wt$ according to distribution $\mu_t$.

We will use the following fact, known as the \emph{chain rule} for KL-divergence:
\begin{align}\label{eq:KL-chain-rule}
\KL{\mu_T}{\nu_T} = \sum_{t=1}^T \KL[t]{\mu}{\nu},
    \quad \text{for each $T\in\N$}.
\end{align}
Here for notational convenience we define
    $\KL[1]{\mu}{\nu} \triangleq \KL{\mu_1}{\nu_1}$.

\subsection{The KL-divergence argument: proof of Lemma~\ref{lm:best-arm}}
\label{sec:KL-div}

Fix some \BwK algorithm \ALG and fix parameters $(p,\eps)$. Let \adv{x} be the problem instance in $\FamInf$ in which the best arm is $x$. For the analysis, we also consider an instance \adv{0} which coincides with \adv{x} but has no best arm: that is, all arms have expected resource consumption $p$. Let $\tau(\adv{})$ be the stopping time of \ALG for a given problem instance \adv{}, and let $N_x(\adv{})$ be the expected number of times a given arm $x$ is chosen by \ALG on this problem instance.

Consider problem instance \adv{0}. Since all arms are the same, we can apply Corollary~\ref{cor:LB-stopping}(a) (suitably modified to the non-best arm) and obtain
    $\E[\tau(\adv{0})] = \flr{B+1}/p$.
We focus on an arm $x$ with the smallest $N_x(\adv{0})$. For this arm it holds that
\begin{align}\label{eq:KL-div-x}
 N_x(\adv{0}) \textstyle
    \leq \tfrac{1}{\narms} \sum_{x\in X} \; N_x(\adv{0})
    = \tfrac{1}{\narms} \E[\tau(\adv{0})]
    \leq \tfrac{\flr{B+1}}{p\,\narms}.
\end{align}
In what follows, we use this inequality to upper-bound $N_x(\adv{x})$. Informally, if arm $x$ is not played sufficiently often in \adv{0}, \ALG cannot tell apart \adv{0} and \adv{x}.

\newcommand{\Diff}{\mathtt{diff}}

The \emph{transcript} of \ALG on a given problem instance \adv{} is a sequence of pairs
    $\{(x_t,c_t)\}_{t\in \N}$,
where for each round $t\leq \tau(\adv{})$ it holds that $x_t$ is the arm chosen by \ALG and $c_t$ is the realized resource consumption in that round. For all $t> \tau(\adv{})$, we define
    $(x_t,c_t) = (\Null,0) $.
To map this to the setup in Section~\ref{sec:KL-bg}, denote
    $\Omega = (X\cup \{\Null\}) \times \{0,1\}$.
Then the set of all possible transcripts is a subset of $\Omega^{\infty}$.

Every given problem instance \adv{} induces a distribution over $\Omega^{\infty}$. Let $\mu,\nu$ be the distributions over $\Omega^{\infty}$ that are induced by \adv{0} and \adv{x}, respectively. We will use the following shorthand:
$$ \Diff[T_0,T_*]
    \triangleq \sum_{t=T_0}^{T_*} \nu(x_t=x)-\mu(x_t=x),
        \quad \text{ where } 1\leq T_0\leq T_* \leq \infty.
$$
For any $T\in \N$ (which we will fix later), we can write
\begin{align}\label{eq:KL-div-diff}
N_x(\adv{x})-N_x(\adv{0})
    =\Diff[1,\infty]
    = \Diff[1,T] + \Diff[T+1,\infty].
\end{align}
We will bound $\Diff[1,T]$ and $\Diff[T+1,\infty]$ separately.

\xhdr{Upper bound on $\Diff[1,T]$.}
This is where we use KL-divergence. Namely, by \eqref{eq:KL-norm} we have
\begin{align}\label{eq:KL-div-Diff-1-T}
\Diff[1,T]
    \leq \tfrac{T}{2} \; \|\mu_T-\nu_T\|_1
    \leq T \sqrt{\tfrac12 \; \KL{\mu_T}{\nu_T}}.
\end{align}
Now, by the chain rule (\eqref{eq:KL-chain-rule}), we can focus on upper-bounding the conditional KL-divergence
    $\KL[t]{\mu}{\nu}$
at each round $t\leq T$.

\begin{claim}\label{cl:KL-div-cond}
For each round $t\leq T$ it holds that
\begin{align}\label{eq:cl:KL-div-cond}
\KL[t]{\mu}{\nu} = \mu( x_t = x) \;
   \left(
        p\,\log (\tfrac{p}{p-\eps}) + (1-p)\,\log (\tfrac{1-p}{1-p+\eps})
   \right).
\end{align}
\end{claim}

\begin{proof}
The main difficulty here is to carefully ``unwrap" the definition of $\KL[t]{\mu}{\nu}$.

Fix $t\leq T$ and let $\Wt \in \Omega^t$ be the partial transcript up to and including round $t$. For each arm $y$, let
    $f(y| \Wt )$
be the probability that \ALG chooses arm $y$ in round $t$, given the partial transcript  $\Wt$. Let $c(y|\adv{})$ be the expected resource consumption for arm $y$ under a problem instance \adv{}. The transcript for round $t+1$ is a pair $w_{t+1} = (x_{t+1}, c_{t+1})$, where $x_{t+1}$ is the arm chosen by \ALG in round $t+1$, and $c_{t+1}\in \{0,1\}$ is the resource consumption in that round. Therefore if $c_{t+1}=1$ then
\begin{align*}
\mu(w_{t+1}\,|\Wt)
    &= f(x_{t+1}| \Wt )\;  c(x_{t+1}|\adv{0})
    = f(x_{t+1}| \Wt )\; p, \\
\nu(w_{t+1}\,|\Wt)
    &= f(x_{t+1}| \Wt )\;  c(x_{t+1}|\adv{x})
    = f(x_{t+1}| \Wt )\; \left( p-\eps\,\Id{x_{t+1}=x}\right).
\end{align*}
Similarly, if $c_{t+1}=0$ then
\begin{align*}
\mu(w_{t+1}\,|\Wt)
    &= f(x_{t+1}| \Wt )\;  (1-c(x_{t+1}|\adv{0}))
    = f(x_{t+1}| \Wt )\; (1-p), \\
\nu(w_{t+1}\,|\Wt)
    &= f(x_{t+1}| \Wt )\;  (1-c(x_{t+1}|\adv{x}))
    = f(x_{t+1}| \Wt )\; \left( 1-p+\eps\,\Id{x_{t+1}=x}\right).
\end{align*}

It follows that
\begin{align*}
\log \frac{ \mu(w_{t+1}\,|\Wt) }{ \nu(w_{t+1}\,|\Wt) }
    &= \Id{x_t=x}\; \left(
        \log (\tfrac{p}{p-\eps})\; \Id{c_{t+1}=1}
        +\log (\tfrac{1-p}{1-p+\eps})\; \Id{c_{t+1}=0}
    \right).
\end{align*}
Taking expectations over
    $w_{t+1}=(x_t,c_t)\sim \mu(\cdot\,|\Wt) $,
we obtain
\begin{align*}
\KL{ \mu(\cdot\,|\Wt) }{ \nu(\cdot\,|\Wt) }
   = f(x|\Wt)\;
    \left(
        p\,\log (\tfrac{p}{p-\eps}) + (1-p)\,\log (\tfrac{1-p}{1-p+\eps})
   \right).
\end{align*}
Taking expectations over
    $\Wt \sim \mu_t$,
we obtain the conditional KL-divergence $\KL[t]{\mu}{\nu}$. \eqref{eq:cl:KL-div-cond} follows because
$$ \E_{\Wt\sim \mu_t}\; f(x|\Wt) = \mu(x_t=x). \qquad\qedhere
 $$
\end{proof}

We will use the following fact about logarithms, which is proved using standard quadratic approximations for the logarithm. The proof is in Section~\ref{app:facts}.

\begin{fact}\label{fact:logs}
Assume
    $\tfrac{\eps}{p}\leq \tfrac12 $ and $p\leq \tfrac12$.
Then
$$ p\,\log (\tfrac{p}{p-\eps}) + (1-p)\,\log (\tfrac{1-p}{1-p+\eps})
    \leq \tfrac{2 \eps^2}{p}.
$$
\end{fact}

Now we can put everything together and derive an upper bound on $\Diff[1,T]$.

\begin{claim}
Assume
    $\tfrac{\eps}{p}\leq \tfrac12 $ and $p\leq \tfrac12$.
Then
    $\Diff[1,T] \leq T\, \frac{\eps}{p}\, \sqrt{\tfrac{B+1}{m}}$.
\end{claim}
\begin{proof}
By Claim~\ref{cl:KL-div-cond} and Fact~\ref{fact:logs}, for each round $t\leq T$ we have
$$\KL[t]{\mu}{\nu} \leq \frac{2 \eps^2}{p}\; \mu( x_t = x).$$
By the chain rule (\eqref{eq:KL-chain-rule}), we have
\begin{align*}
\KL{\mu_T}{\nu_T}
    &\leq \frac{2\eps^2}{p}\; \sum_{t=1}^T\, \mu(x_t=x)
    \leq \frac{2\eps^2}{p} \; N_x(\adv{0})
    \leq 2\; \frac{B+1}{m}\; \left(\frac{\eps}{p}\right)^2.
\end{align*}
The last inequality is the place where we use our choice of $x$, as expressed by
\eqref{eq:KL-div-x}.

Plugging this back into \eqref{eq:KL-div-Diff-1-T}, we obtain
    $\Diff[1,T] \leq T\, \frac{\eps}{p}\, \sqrt{\tfrac{B+1}{m}}$.
\end{proof}

\xhdr{Upper bound on $\Diff[T,\infty]$.}
Consider the problem instance \adv{x}, and consider the policy that always chooses the best arm. Let $\nu^*$ be the corresponding distribution over transcripts $\Omega^\infty$, and let $\tau$ be the corresponding stopping time. Note that
    $\nu^*(x_t=x)$ if and only if $\tau>t$.
Therefore:
\begin{align*}
\Diff[T,\infty]
    \leq  \sum_{t=T}^\infty \nu(x_t=x)
    \leq  \sum_{t=T}^\infty \nu^*(x_t=x)
    = \sum_{t=T}^\infty \nu^*(\tau>t)
    \leq \OPT^2/T.
\end{align*}
The second inequality can be proved using a simple ``coupling argument''. The last inequality follows from Fact~\ref{fact:stopping}, observing that $\E[\tau]=\OPT$.

\xhdr{Putting the pieces together.}
Assume
    $p\leq \tfrac12$ and $\tfrac{\eps}{p}\leq \tfrac12$.
Denote
    $\gamma = \tfrac{\eps}{p} \sqrt{\tfrac{B+1}{m}}$.
Using the upper bounds on
    $\Diff[1,T]$ and $\Diff[T+1,\infty]$
and plugging them into \eqref{eq:KL-div-diff}, we obtain
$$ N_x(\adv{x}) - N_x(\adv{0})
    \leq \gamma T + \OPT^2/T
    \leq \OPT \sqrt{\gamma}
$$
for $T = \OPT /\sqrt{\gamma}$. Recall that
    $N_x(\adv{0}) < \OPT/m$.
Thus, we obtain
$$ N_x(\adv{x}) \leq (\tfrac{1}{m} + \sqrt{\gamma})\, \OPT.
$$
Recall that we need to conclude that $N_x(\adv{x})\leq \tfrac34 \OPT$. For that, it suffices to have $\gamma\leq \tfrac{1}{16}$.

\section{\BwK with \preDiscr}
\label{sec:discretization}

\newcommand{\DError}{\mathtt{Err}}  
\newcommand{\Cov}{\mathtt{cov}}     

\asedit{In this section we develop a general technique for \preDiscr, and apply it to dynamic pricing with a single product and dynamic procurement with a single budget. For both applications, our regret bounds significantly improve over prior work. While the dynamic pricing application is fairly straightforward given the general result, the dynamic procurement application takes some work and uses a non-standard mesh of prices. We also obtain an initial result for dynamic pricing with multiple products. The main technical challenge is to upper-bound the discretization error; we can accomplish this whenever the expected resource to expected consumption ratio of each arm can be expressed in a particularly simple way.}

\subsection{\PreDiscr as a general technique}

\asedit{The high-level idea behind \preDiscr is to apply an existing \BwK algorithm with a restricted, finite action space $S\subset X$ that is chosen in advance. Typically $S$ is, in some sense, ``uniformly spaced" in $X$, and its ``granularity" is tuned in advance so as to minimize regret.}

Consider a problem instance with action space restricted to $S$. Let $\Rew(S)$ be the algorithm's reward on this problem instance, and let $\LPOPT(S)$ be the corresponding value of $\LPOPT$, as defined in Section~\ref{sec:lp}. $\OPT(X)$ and $\LPOPT(X)$ will refer to the corresponding quantities for the original action space $X$. The key two quantities in our analysis of \preDiscr are
\begin{align}
R(S)        &= \LPOPT(S) - \Rew(S) &\text{(\emph{$S$-regret})} \nonumber\\
\DError(S|X)&= \LPOPT(X)-\LPOPT(S) &\text{(\emph{discretization error of $S$ relative to $X$})}. \label{eq:preadjusted-DE}
\end{align}
Note that algorithm's regret can be expressed as
\begin{align*}
\OPT(X)-\Rew(S) \leq \LPOPT(X) - \Rew(S) = R(S) + \DError(S|X).
\end{align*}

Now, suppose $S$ is parameterized by $\eps>0$ which controls its ``granularity". Adjusting the $\eps$ involves balancing $R(S)$ and $\DError(S|X)$: indeed, decreasing $\eps$ tends to increase $R(S)$ but decrease $\DError(S|X)$. We upper-bound the $S$-regret via our main algorithmic result;%
\footnote{We need to use the regret bound in terms of the best known upper bound on $\OPT$, rather than $\OPT$ itself, because the latter is not known to the algorithm. For example, for dynamic pricing one can use $\OPT\leq \budg$.}
the challenge is to upper-bound $\DError(S|X)$.

A typical scenario where one would want to apply \preDiscr is when an algorithm chooses among prices. More formally, each arm includes a real-valued vector of prices in $[0,1]$ (and perhaps other things, such as the maximal number of items for sale). The restricted action set $S$ consists of all arms such that all prices belong to a suitably chosen mesh $M\subset [0,1]$ with granularity $\eps$. There are several types of meshes one could consider, depending on the particular BwK domain. The most natural ones are the \emph{\eps-additive mesh}, with prices that are integer multiples of $\eps$, and \emph{\eps-multiplicative mesh mesh}, with prices of the form $(1-\eps)^{\ell}$, $\ell\in\N$. Both have been used in the prior work on MAB in metric spaces \citep{Bobby-nips04,Hazan-colt07,LipschitzMAB-stoc08,Pal-Bandits-aistats10}) and dynamic pricing
(e.g., \citep{Bobby-focs03,Blum03,BZ09,DynPricing-ec12}). Somewhat surprisingly, for dynamic procurement we find it optimal to use a very different mesh, called \emph{\eps-hyperbolic mesh}, in which the prices are of the form $\tfrac{1}{1+\eps\ell}$, $\ell\in\N$.

\asedit{While in practice the action set $X$ is usually finite (although possibly very large), it is mathematically more elegant to consider infinite $X$. For example, we prefer to allow arbitrary fractional prices, even though in practice they may have to be rounded to whole cents. However, recall that $\LPOPT$ in Section~\ref{sec:lp} is only defined for a finite action space $X$. To handle infinite $X$, we define}
\begin{align}\label{eq:LPOPT-infinite}
\LPOPT(X)   = \sup_{\text{finite $X'\subset X$}} \LPOPT(X').
\end{align}

In line with Lemma~\ref{lem:lp-relax}, let us argue that
    $\LPOPT(X)\geq \OPT(X)$
even when $X$ is infinite. Specifically, we prove this for all versions of dynamic pricing and dynamic procurement, and more generally for any \BwK domain  such that for each arm there are only finitely many possible outcome vectors.

\begin{lemma}
Consider a \BwK domain with infinite action space $X$, such that for each arm there are only finitely many possible outcome vectors. Then
    $\LPOPT(X) \geq \OPT(X)$.
\end{lemma}

\begin{proof}
Fix a problem instance, and consider an \OptPolicy for this instance. W.l.o.g. this policy is deterministic.%
\footnote{A randomized policy can be seen as a distribution over deterministic policies, so one of these deterministic policies must have same or better expected total reward.} For each round, this policy defines a deterministic mapping from histories to arms to be played in this round. Since there are only finitely many possible histories, the policy can only use a finite subset of arms, call it $X'\subset X$. By Lemma~\ref{lem:lp-relax}, we have
\begin{align*}
    \LPOPT(X)\geq \LPOPT(X') \geq \OPT(X') = \OPT(X). \qquad \qedhere
\end{align*}
\end{proof}

\OMIT{ 
We note in passing that
$    \LPOPT(X) \geq \sup_{\text{finite $X'\subset X$}} \OPT(X')$,
so one could also use the $\sup$ on the right-hand side as a very reasonable alternative benchmark, instead of $\OPT(X)$.
} 

\subsection{A general bound on discretization error}
\label{sec:eps-discretization}

We develop a general bound on discretization error $\DError(S|X)$, as defined in \eqref{eq:preadjusted-DE}. To this end, we consider the expected reward to expected consumption ratios of arms (and the differences between them), whereas in the work on MAB in metric spaces it suffices to consider the difference in expected rewards.

To simplify notation, we suppress $\mu$, the (actual) latent structure: e.g., we will write $c_i(\D) = c_i(\D,\mu)$, $r(\D) = r(\D,\mu)$, and $\LP(\D,\mu) = \LP(\D)$
for distributions $\D$ and resources $i$.

\begin{definition}\label{def:eps-covers}
We say that arm $x$ \emph{\eps-covers} arm $y$ if the following two properties are satisfied for each resource $i$ such that $c_i(x)+c_i(y)>0$:
\begin{itemize}
\item[(i)] $r(x)/c_i(x)\geq r(y)/c_i(y)-\eps$.
\item[(ii)] $c_i(x)\geq c_i(y)$.
\end{itemize}
A subset $S\subset X$ of arms is called an \emph{\eps-discretization} of $X$ if each arm in $X$ is \eps-covered by some arm in $S$.
\end{definition}

\begin{theorem}[\preDiscr]\label{thm:discretization}
Fix a BwK domain with action space $X$. Let $S\subset X$ be an \eps-discretization of $X$, for some $\eps\geq 0$. Then the discretization error $\DError(S|X)$ is at most $\eps dB$. Consequently, for any algorithm with $S$-regret $R(S)$ we have
$\LPOPT(X) - \Rew(S) = R(S) + \eps dB$.
\end{theorem}

\begin{proof}
We need to prove that $\DError(S|X)\leq \eps dB$. If $X$ is infinite, then (by \eqref{eq:LPOPT-infinite}) it suffices to prove $\DError(S|X')\leq \eps dB$ for any finite subset of $X'\subset X$. Let $\D$ be the distribution over arms in $X'$ which maximizes $\LP(\D,\mu)$.  We use $\D$ to construct a distribution $\D_S$ over $S$ which is nearly as good.

We define $\D_S$ as follows. Since $S$ is an \eps-discretization of $X$, there exists a family of subsets $(\Cov(x)\subset X:\, x\in S)$ so that each arm $x\in S$ \eps-covers all arms in $\Cov(x)$, the subsets are disjoint, and their union is $X$. Fix one such family of subsets, and define
\begin{align*}
\D_S(x)&=\sum_{y\in \Cov(x)} \D(y)\, \min_{i:\, c_i(x)>0}\;\frac{c_i(y)}{c_i(x)}, \quad x\in S.
\end{align*}
Note that $\sum_{x\in S} \D_S(S)\leq 1$ by Definition~\ref{def:eps-covers}(ii). With the remaining probability, the null arm is chosen (i.e., the algorithm skips a given round).

To argue that $\LP(\D_S,\mu)$ is large, we upper-bound the resource consumption $c_i(\D_S)$, for each resource $i$, and lower-bound the reward $r(\D_S)$.
\begin{align}
c_i(\D_S)&=\textstyle \sum_{x\in S}\, c_i(x)\,\D_S(x) \nonumber \\
&\leq\sum_{x\in S}c_i(x)\quad \sum_{y\in \Cov(x):\; c_i(x)>0} \quad \D(y) \frac{c_i(y)}{c_i(x)}  \nonumber \\
&=\sum_{x\in S} \quad \sum_{y\in \Cov(x):\; c_i(x)>0} \quad\D(y)\, c_i(y) \nonumber \\
&= \sum_{y\in X} \D(y)\, c_i(y) \nonumber \\
&=c_i(\D) \label{eq:lm:BwK-discretization-c}
\end{align}
(Note that the above argument did not use the property (i) in Definition~\ref{def:eps-covers}.)

In what follows, for each arm $x$ define
    $I_x = \{i:\, c_i(x)>0\}$.
\begin{align}
r(\D_S)&= \textstyle \sum_{x\in S}\,r(x)\,\D_S(x)  &\nonumber \\
&=\sum_{x\in S} r(x)\sum_{y\in \Cov(x)}\D(y)\, \min_{i\in I_x}\;\frac{c_i(y)}{c_i(x)} & \nonumber \\
&=\sum_{x\in S}\; \sum_{y\in \Cov(x)}\D(y)\, \min_{i\in I_x}\;\frac{c_i(y)\,r(x)}{c_i(x)} & \nonumber \\
&\geq\sum_{x\in S}\; \sum_{y\in \Cov(x)}\D(y)\, \min_{i\in I_x}\; r(y)-\eps c_i(y)
    &\text{(by Definition~\ref{def:eps-covers}(i))}\nonumber \\
&=  \sum_{y\in X}\D(y)\, \min_i\; r(y)-\eps c_i(y) &\nonumber \\
&\geq \textstyle  \sum_{y\in X}\D(y)\,\left( r(y)-\eps\, \sum_i c_i(y) \right)&\nonumber \\
&= \textstyle r(\D)-\eps\sum_i c_i(\D).
\label{eq:lm:BwK-discretization-r}
\end{align}

Let
    $\tau(\D) = \min_i \tfrac{B}{c_i(\D)} $
be the stopping time in the linear relaxation, so that
    $\LP(\D) = \tau(\D)\,r(\D)$.
By \eqref{eq:lm:BwK-discretization-c} we have
    $\tau(\D_S) \geq \tau(\D)$.
We are ready for the final computation:
\begin{align*}
\LP(\D_S)
    &= \tau(\D_S)\; r(\D_S)  &\\
    &\geq \tau(\D)\; r(\D_S) &\\
    &\geq \textstyle \tau(\D)  \left(r(\D)-\eps\sum_i c_i(\D)\right)
        & \text{(by \eqref{eq:lm:BwK-discretization-r})}   \\
    &\geq \textstyle r(\D)\,\tau(\D)-\eps\, \tau(\D) \sum_i c_i(\D) \\
    &\geq \LP(\D) - \eps\, d\, B.\qedhere
\end{align*}
\end{proof}

\subsection{\PreDiscr for dynamic pricing}
\label{sec:discretization-pricing}

\asedit{We apply the machinery developed above to handle the basic version of dynamic pricing, as defined in Section~\ref{sec:problem-description}. In fact, our technique easily generalizes to multiple products, in a particular scenario which we call \emph{dynamic bundle-pricing}. We present the more general result directly. }

The dynamic bundle-pricing problem is defined as follows. There are $d$ products, with limited supply of each, and $T$ rounds. In each round, a new buyer arrives, an algorithm chooses a bundle of products and a price, and offers this bundle for this price. The offer is either accepted or rejected. The bundle is a vector $(b_1 \LDOTS b_d)$, so that $b_i\in \N$ units of each product $i$ are offered. We assume that the bundle must belong to a fixed collection $\mF$ of allowed bundles. Buyers' valuations over bundles can be arbitrary (in particular, not necessarily additive); they are drawn independently from a fixed distribution over valuations. For normalization, we assume that each buyer's valuation for any bundle of $\ell$ units lies in the interval $[0,\ell]$; accordingly, the offered price for such bundle can w.l.o.g. be restricted to the same interval.

\begin{theorem}\label{thm:DynPricing-discretization}
Consider the dynamic bundle-pricing problem such that there are $d$ products, each with supply $B$. Assume each allowed bundle consists of at most $\ell$ items, and prices are in $[0,\ell]$. Algorithm $\pdbwk$ with an \eps-additive mesh, for some $\eps=\eps(B,|\mF|,\ell)$, has regret
    $\Otilde(d\,B^{2/3} \, (|\mF|\ell)^{1/3})$.
\end{theorem}

\asedit{The basic version from Section~\ref{sec:problem-description} is a special case with a single product and a single allowed bundle which consists of one unit of this product. Taking
    $d=\ell=|\mF| = 1$
in Theorem~\ref{thm:DynPricing-discretization}, we obtain regret
    $\Otilde(B^{2/3})$.
This regret bound is optimal for any pair $(B,T)$, as proved in \citet{DynPricing-ec12}.}

\ascomment{Added explicit corollary statement.}


\begin{corollary}\label{cor:DynPricing-discretization}
Consider the dynamic pricing problem, as defined Section~\ref{sec:problem-description}. Algorithm $\pdbwk$ with an \eps-additive mesh, for a suitably chosen $\eps=\eps(B)$, has regret
    $\Otilde(B^{2/3})$.
\end{corollary}

\begin{proof}[Proof of Theorem~\ref{thm:DynPricing-discretization}]
First, let us cast this problem as a \BwK domain. To ensure that per-round rewards and per-round resource consumptions lie in $[0,1]$, we scale them down by the factor of $\ell$. Accordingly, the rescaled supply constraint is $B' = B/\ell$. In what follows, consider the scaled-down problem instance.

An arm is a pair $x=(b,p)$, where $b\in \mF$ is a bundle and $p\in [0,1]$ is the offered price. Let $\SalesRate(x)$ be the probability of a sale for this arm, divided by $\ell$; this probability is non-increasing in $p$ for a fixed bundle $b$. Then expected per-round reward is $r(x)=p\,\SalesRate(x)$, and expected per-round consumption of product $i$ is $c_i(x) = b_i\,\SalesRate(x)$. Therefore,
\begin{align}\label{eq:DynPricing-ratio}
\frac{r(x)}{c_i(x)} = \frac{p}{b_i}, \quad \text{for each arm $x=(b,p)$ and product $i$}.
\end{align}
This is a crucial domain-specific property that enables \preDiscr. It follows that for any arm $x=(b,p)$, this arm \eps-covers any arm $x'=(b,p')$ such that
    $p'-\eps \leq p\leq p'$.
Therefore an \eps-additive mesh $S$ is an \eps-discretization, for any $\eps>0$.

Consider algorithm $\pdbwk$ with action space $S$. Using Theorem~\ref{thm:pdbwk} and observing that $\LPOPT\leq dB'$, we obtain $S$-regret
    $$R(S) \leq \Otilde(d \sqrt{B'\,|S|}) = \Otilde(d\sqrt{B'\, |\mF|/\eps}). $$
By Theorem~\ref{thm:discretization} discretization error is $\DError(S|X)\leq \eps d B'$.
So, regret relative to $\LPOPT$ is
\begin{align*}
R(S) + \DError(S|X) \leq \Otilde(d\sqrt{B'\, |\mF|/\eps} + \eps dB')
    \leq \Otilde(d)\,(B/\ell)^{2/3} \, |\mF|^{1/3}
\end{align*}
for a suitably chosen $\eps = (B/\ell)^{-1/3}\; |\mF|^{1/3} $.
Recall that this is regret for the rescaled problem instance. For the original problem instance, rewards are scaled up by the factor of $\ell$, so regret is scaled up by $\ell$, too.
\end{proof}

One can easily extend Theorem~\ref{thm:DynPricing-discretization} to a setting where in each round an algorithm offers several copies of the same bundle for the same per-bundle price, and an agent can choose how many copies to buy (if any). More precisely, in each round an algorithm chooses two things: a bundle from $\mF$ and the number of copies of this bundle. The latter is restricted to be at most $\Lambda$, where $\Lambda$ is a known parameter. We call this setting \emph{dynamic bundle-pricing with multiplicity $\Lambda$}.
The algorithm and analysis is essentially the same.

\ascomment{Added explicit theorem statement.}

\begin{theorem}\label{thm:DynPricing-discretization-multiplicity}
Consider dynamic bundle-pricing with multiplicity $\Lambda$. Assume that each product has supply $B$, and each allowed bundle consists of at most $\ell$ items. Algorithm $\pdbwk$ with an \eps-additive mesh, for a suitably chosen $\eps=\eps(B,|\mF|,\ell,\Lambda)$, has regret
    $\Otilde(d\,(B\Lambda)^{2/3} \, (|\mF|\ell)^{1/3})$.
\end{theorem}

\OMIT{In particular, we recover \emph{dynamic pricing with non-unit demands}, as defined in Section~\ref{sec:apps-DynPricing}, as a special case with a single product and a single allowed bundle of unit size.}

\subsection{\PreDiscr for dynamic procurement}
\label{sec:discretization-procurement}

Application to dynamic procurement takes a little more work \asedit{and results in a weaker regret bound, compared to the application to dynamic pricing. The main reason is that the natural mesh for dynamic procurement is \eps-hyperbolic (rather than \eps-additive). One needs to bound this mesh from below to make it finite, which increases the mesh size and the discretization error.}

While our main goal here is to handle the basic version of dynamic procurement, as defined in Section~\ref{sec:problem-description}, the same technique easily extends to a generalization where the algorithm can buy multiple items in each round. The generalization is defined as follows. In each round $t$, the algorithm offers to buy up to $\Lambda$ units at price $p_t$ per unit, where $p_t\in [0,1]$ is chosen by the algorithm. The outcome is summarized by the number $k_t$ of items bought, where $k_t$ is an independent sample from some fixed (but unknown) distribution parameterized by $p_t$ and $\Lambda$. The algorithm is constrained by the time horizon $T$, budget $B$, and per-round supply constraint $\Lambda$. We prove the following:

\begin{theorem}\label{thm:DynProcurement-discretization}
Consider dynamic procurement with up to $\Lambda$ items bought per round. Algorithm $\pdbwk$ with a suitably chosen action space $S$ yields regret
   $\tilde{O}(\Lambda^{5/4} T/B^{1/4})$.
Specifically, $S = [p_0,1] \cap M$, where $M$ is the \eps-hyperbolic mesh, for some parameters
    $\eps,p_0\in (0,1)$
that depend only on $B$, $T$ and $\Lambda$.
\end{theorem}

Let us model this problem as a \BwK domain. The action space is $X = [0,1]$: the arms correspond to all possible prices. (The zero price corresponds to the ``null arm".) To ensure that rewards and consumptions lie in $[0,1]$, we scale them down by a factor of $\Lambda$, as in Section~\ref{sec:discretization-pricing}, so that reward in any round $t$ is $k_t/\Lambda$, and budget consumption is $p_t\,k_t/\Lambda$. Accordingly, budget is rescaled to $B' = B/\Lambda$. Henceforth, consider the scaled-down problem instance, unless specified otherwise.

Let $\SalesRate(p)$ be the expected per-round number of items sold for a given price $p$, divided by $\Lambda$; note that it is non-decreasing in $p$. Then expected budget consumption is $c(p)=p\, \SalesRate(p)$, and expected reward is simply
    $r(p) = \SalesRate(p)$.
It follows that
\begin{align*}
\frac{r(p)}{c(p)} = \frac{1}{p},
    \quad \text{for each price $p$}.
\end{align*}
Like \eqref{eq:DynPricing-ratio}, this is a crucial domain-specific property that enables \preDiscr.

By Definition~\ref{def:eps-covers} price $p$ \eps-covers price arm $q$ if and only if
    $q<p$ and $\tfrac{1}{p}\geq \tfrac{1}{q}-\eps$.
This makes the hyperbolic mesh a natural mesh for this problem, rather than additive or multiplicative ones. It is easy to see that the \eps-hyperbolic mesh $S$ on $X$ is an \eps-discretization of $X$: namely, each price $q$ is \eps-covered by the smallest price $p\geq q$ that lies in $S$.

Unfortunately, this mesh has infinitely many points. In fact, it is easy to see that any \eps-discretization on $X$ must be infinite, even for $\Lambda=1$. To obtain a finite \eps-discretization,
we only consider prices $p\geq p_0$, for some parameter $p_0$ to be tuned later. Below we argue that this restriction is not too damaging:

\begin{claim}\label{lm:non-unit-demand-truncate}
Consider dynamic procurement with non-unit supply. Then for any $p_0\in (0,1)$ it holds that
$$\LPOPT([p_0,1])\geq \LPOPT([0,1]) - p_0\, T^2/B' $$
\end{claim}
\begin{proof}
When $p_0>B'/T$ the bound is trivial and for the rest of the proof we assume that $p_0\leq B'/T$.

By \eqref{eq:LPOPT-infinite} it suffices to replace $\LPOPT([0,1])$ in the claim with $\LPOPT(X_0)$, for any given finite subset $X_0\subset [0,1]$. Let $\D$ be an LP-perfect distribution for the problem instance restricted to $X_0$; such $\D$ exists by Claim~\ref{cl:LP-properties}. Thus,
    $\LP(\D)=\LPOPT(X)$ and $c(\D)\leq \tfrac{B'}{T}$.
Furthermore, $\D$ has a support of size at most $2$; denote it as arms $p_1,p_2 \in[0,1]$, $p_1\leq p_2$, where the null arm would correspond to $p_1=0$. If $p_1\geq p_0$ then $\D$ has support in the interval $[p_0,1]$, and we are done; so from here on we assume $p_1<p_0$. Note that
\begin{align*}
    \LP(\D) = r(\D)\,\min \left( \tfrac{B'}{c(\D)},T \right) = T\, r(\D).
\end{align*}

To prove the desired lower bound on $\LPOPT([p_0,1])$, we construct a distribution $\D'$ with support in $\{0\}\cup [p_0,1]$ and a sufficiently large LP-value. (Here the zero price corresponds to the null arm.)

Suppose $p_2\leq \tfrac{B'}{T}$. Define $\D'$ by putting probability mass on price $\tfrac{B'}{T}$. Since $c(\tfrac{B'}{T}) \leq \tfrac{B'}{T}$, we have
$$ \LP(\D')
    = T\, r(\D')
    = T\, \SalesRate(\tfrac{B'}{T})
    \geq T\,\SalesRate(p_2)
    \geq T\, r(\D)
    = \LP(\D),
$$
and we are done. From here on, assume $p_2> \tfrac{B'}{T}$.

Now consider the main case:
        $p_1 \leq p_0 \leq \tfrac{B'}{T} < p_2$.
Define distribution $\D'$ as follows:
\begin{align*}
\D'(p_0) &=\D(p_1) \\
\D'(p_2) &=\max\left( 0,\D(p_2)-p_0/p_2 \right)\\
\D'(0) &=1-\D'(p_0)-\D'(p_2).
\end{align*}

We claim that $c(\D') \leq \tfrac{B'}{T}$. If $\D'(p_2)=0$ then
    $c(\D') = c(p_0) \leq p_0 \leq \tfrac{B'}{T}$.
If $\D'(p_2)>0$ then $\D'(p_2) = \D(p_2)-p_0/p_2$, and therefore,
\begin{align*}
c(\D')-c(\D)
    &= \D(p_1)\,\left( p_0 \SalesRate(p_0) - p_1 \SalesRate(p_1) \right) - p_2 \SalesRate(p_2) \tfrac{p_0}{p_2} \\
    &\leq p_0 \SalesRate(p_0) - p_0 \SalesRate(p_2) \leq 0.
\end{align*}
Then
    $c(\D') \leq c(\D) \leq \tfrac{B'}{T}$.
Claim proved.

Therefore, $\LP(\D') = T\, r(\D')$. To complete the proof:
\begin{align*}
r(\D')-r(\D)
    &\geq \D(p_1) \left(  \SalesRate(p_0) - \SalesRate(p_1) \right) - \SalesRate(p_2)\, p_0/p_2 \\
    &\geq - p_0 / p_2
     \geq - p_0 T/B' . \\
\LPOPT([p_0,1]) - \LPOPT(X_0)
    &= \LP(\D') - \LP(\D) \\
    & = T (r(\D')-r(\D)) \leq - p_0 T^2/B'. \qquad \qedhere
\end{align*}
\end{proof}

Suppose algorithm $\pdbwk$ is applied to a problem instance with a finite action space $S$. Then by Theorem~\ref{thm:pdbwk} the $S$-regret is
\begin{align*}
    R(S) = \Otilde(\sqrt{mT} + T\sqrt{m/B'}), \quad m=|S|.
\end{align*}

Let $S = [p_0,1] \cap M$, where $M$ is the \eps-hyperbolic mesh, for some $\eps,p_0\in (0,1)$. Then $m=|S| \leq \tfrac{1}{\eps p_0} $. Moreover, $S$ is an \eps-discretization for action space $X'=[p_0,1]$, for the same reason that $M$ is an \eps-discretization for the original action space $X=[0,1]$. Therefore:
\begin{align*}
\DError(S|X') &\leq \eps B'      & \text{(by Theorem~\ref{thm:discretization})} \\
\DError(X'|X) &\leq p_0\, T^2/B' & \text{(by Claim~\ref{lm:non-unit-demand-truncate})} \\
\LPOPT(X) -\Rew(S)
    &= R(S)+\DError(S|X')+\DError(X'|X) \\
    &\leq R(S) + \eps B' + p_0\, T^2/B'.
\end{align*}
Optimizing the choice of $\eps$ and $p_0$, we obtain the final regret bound of
    $\tilde{O}(T\, (B')^{-1/4})$.
Recall that this is the regret bound for the rescaled problem instance. Going back to the original problem instance, regret is multiplied by a factor of $\Lambda$. This completes the proof of Theorem~\ref{thm:DynProcurement-discretization}.

\section{Applications and corollaries}
\label{sec:apps}

\ascomment{re-wrote the "intro" and the first two subsections.}

We systematically overview various applications of \BwK and corresponding corollaries. This section can be read independently of the technical material in the rest of the paper.

\xhdr{Some technicalities.} In applications with very large or infinite action space $X$ we apply a \BwK algorithm with a restricted, finite action space $S\subset X$, where $S$ is chosen in advance. Immediately, we obtain a bound on the \emph{$S$-regret}: regret with respect to the value of $\LPOPT$ on the restricted action space (such bound depends on $|S|$). Instantiating such regret bounds is typically straightforward once one precisely defines the setting. In some applications we can choose $S$ using \preDiscr, as discussed in Section~\ref{sec:discretization}.

In some of the applications, per-round reward and resource consumption may be larger than $1$. Then one needs to scale them down to fit the definition of \BwK and apply our regret bounds, and scale them back up to obtain regret for the original (non-rescaled) version. We encapsulate this argument as follows:

\begin{lemma}\label{lm:BwK-rescaled}
Consider a version of \BwK with finite action set $S$, in which per-round rewards are upper-bounded by $r_0$, and per-round consumption of each resource is at most $c_0$. Then one can achieve regret
\begin{align}\label{eq:regret-rescaled}
\Otilde\left( \sqrt{r_0\, |S|\, \optrwd} + \optrwd \sqrt{c_0\,|S|/\budg}
\;\;\right)
\end{align}
by applying algorithm \pdbwk with suitably rescaled rewards, resource consumption, and budgets.
\end{lemma}

\begin{proof}
Denote
    $R(\OPT,B) = \sqrt{|S|\, \OPT} + \OPT \sqrt{|S|/B}$,
as in the main regret bound.

To cast this problem as an instance of \BwK, consider a rescaled problem instance in which all rewards are divided by $r_0$, and all consumptions and budgets are divided by $c_0$. Now we can apply regret bound \eqref{eq:intro-regret} for the scaled-down problem instance; we obtain regret
    $\Otilde(R(\OPT/r_0,B/c_0))$.
Multiply this regret bound by $r_0$ to obtain a regret bound for the original problem instance.
\end{proof}


\subsection{Dynamic pricing with limited supply}
\label{sec:apps-DynPricing}

In dynamic pricing, the algorithm is a monopolist seller that interacts with $T$ agents (potential buyers) arriving one by one. In each round, a new agent arrives, the algorithm makes an offer, the agent chooses among the offered alternatives, and leaves. The offer specifies which goods are offered for sale at which prices. The agent has valuations over the offered bundles of goods, and chooses an alternative which maximizes her utility: value of the bundle minus the price. An agent is characterized by her \emph{valuation function}: function from all possible bundles of goods that can be offered to their respective valuations. For each arriving agent, the valuation function is \emph{private}: not known to the algorithm. It is assumed to be drawn from a fixed (but unknown) distribution over the possible valuation functions, called the \emph{demand distribution}. Algorithm's objective is to maximize the total revenue; there is no bonus for left-over inventory.

\xhdr{Basic version.}
In the basic version from Section~\ref{sec:problem-description}, the algorithm has $B$ identical items for sale. In each round, the algorithm chooses a price $p_t$ and offers one item for sale at this price, and an agent either buys or leaves. The agent has a fixed private value $v_t\in [0,1]$ for an item, and buys if and only if $p_t\geq v_t$. Recall from Corollary~\ref{cor:DynPricing-discretization} that we obtain regret $\Otilde(B^{2/3})$, which is optimal according to \citep{DynPricing-ec12}.

\xhdr{Extension: non-unit demands.} Agents may be interested in buying more than one unit of the product, and may have valuations that are non-linear in the number of products bought. Accordingly, let us consider an extension where an algorithm can offer each agent multiple units. More specifically: in each round $t$, the algorithm offers up to $\lambda_t$ units at a fixed price $p_t$ per unit, where the pair $(p_t,\lambda_t)$ is chosen by the algorithm, and the agent then chooses how many units to buy, if any. We restrict $\lambda_t\leq \Lambda$, where $\Lambda$ is a fixed parameter. We obtain regret
        $\Otilde(B\, \Lambda)^{2/3}$
by Theorem~\ref{thm:DynPricing-discretization-multiplicity} (considering a special case when there is a single product and a single allowed bundle with one unit of this product). One can also consider a version with $\lambda_t=\Lambda$, so that the algorithm only chooses prices; then a very similar argument gives regret
    $\Otilde(B^{2/3} \Lambda^{1/3})$.

\xhdr{Extension: multiple products.}
When multiple products are offered for sale, it often makes sense to price them jointly. Formally, the algorithm has $d$ products for sale, with $B_i$ units of each product $i$. (To simplify regret bounds, let us assume $B_i=B$.) In each round $t$, the algorithm chooses a vector of prices
        $(p_{t,1} \LDOTS p_{t,d})\in [0,1]^d$
and offers at most one unit of each product $i$ at price $p_{t,i}$. The agent then chooses the subset of products to buy. We allow arbitrary demand distributions; we do not restrict correlations between valuations of different products and/or subsets of products.

Given a finite set $S$ of allowed price vectors, such as an \eps-additive mesh for some specific $\eps>0$, we obtain $S$-regret
    $\Otilde(d\sqrt{B\,|S|})$.
This follows from Lemma~\ref{lm:BwK-rescaled}, observing that per-round rewards are at most $r_0=d$, per-round consumption of each resource is at most $c_0=1$, and the optimal value is $\OPT\leq dB$.

There may also be a fixed collection of subsets that agents are allowed to buy, \eg agents may be restricted to buying at most three items in total. This does not affect our analysis and the regret bound.

Joint pricing is \emph{not} needed in the special case when each agent can buy an arbitrary subset $I$ of products, and her valuations are additive: $v(I) = \sum_{i\in I} v(i)$. Then she buys each product $i$ if and only if the offered price for this product exceeds $v(i)$. Therefore the problem is equivalent to a collection of $d$ separate per-product problems, and one can run a separate \BwK algorithm for each product. Using Corollary~\ref{cor:DynPricing-discretization} separately for each product, one obtains regret
        $\Otilde(d\,B^{2/3})$.

\xhdr{Extension: network revenue management.} More generally, an algorithm may have $d$ products for sale which may be produced on demand from limited \emph{primitive resources}, so that each unit of each product $i$ consumes a fixed and known amount $c_{ij}\in [0,1]$ of each primitive resource $j$. This generalization is known as network revenue management problem (see \citet{BesbesZeevi-or12} and references therein). All other details are the same as above; for simplicity, let us focus on a version in which each agent buys at most one item. Given a finite set $S$ of allowed price vectors, we obtain $S$-regret given by
    \refeq{eq:regret-rescaled}
with $r_0=c_0=d$.

In particular, if all resource constraints (including the time horizon) are scaled up by factor $\gamma$, regret scales as $\sqrt{\gamma}$. This improves over the main result in \citet{BesbesZeevi-or12}, where (essentially) regret is stated in terms of $\gamma$ and scales as $\gamma^{2/3}$.

\xhdr{Extension: bundling and volume pricing.}
When selling to agents with non-unit demands, an algorithm may use discounts and/or surcharges for buying multiple units of a product (the latter may make sense for high-valued products such as tickets to events at the Olympics). More generally, an algorithm can may use discounts and/or surcharges for some \emph{bundles} of products, where each bundle can include multiple units of multiple products, \eg two beers and one snack. In full generality, there is a collection $\mF$ of allowed bundles. In each round an algorithm offers a \emph{menu} of options which consists of a price for every allowed bundle in $\mF$ (and the ``none" option), and the agent chooses one option from this menu. Thus, in each round the algorithm needs to choose a price vector over the allowed bundles.

For a formal result, assume there is a finite set $S$ of allowed price vectors, each bundle in $\mF$ can contain at most $\ell$ units total, and the per-bundle prices are restricted to lie in the range $[0,\ell]$. Then we obtain $S$-regret
    $\Otilde(d\ell \sqrt{\ell\,B\,|S|})$.
This follows from Lemma~\ref{lm:BwK-rescaled}, observing that per-round rewards are at most $r_0=\ell$, per-round consumption of each resource is at most $c_0=\ell$, and the optimal value is $\OPT\leq d\ell B$.

The action space here is $|\mF|$-dimensional, which may result in a prohibitively large number of allowed price vectors. One can reduce the ``dimensionality" of the action space by restricting how the bundles may be priced. For example, each bundle may be priced at a volume discount $x\%$ compared to buying each unit separately, where $x$ depends only on the number of items in the bundle.

Moreover, we can analyze \preDiscr for a version where in each round the algorithm chooses only one bundle to offer. By Theorem~\ref{thm:DynPricing-discretization}, we obtain regret
    $\Otilde(B^{2/3} \, (|\mF|\ell)^{1/3})$.

\xhdr{Extension: buyer targeting.} Suppose there are $\ell$ different types of buyers (say {\em men} and {\em women}), and the demand distribution of a buyer depends on her type. The buyer type is modeled as a sample from a fixed but unknown distribution. In each round the seller observes the type  of the current buyer (e.g., using a \emph{cookie} or a user profile), and can choose the price depending on this type.

    This can be modeled as a BwK domain where arms correspond to functions from buyer types to prices. For example, with $\ell$ buyer types and a single product, the (full) action space is $X = [0,1]^\ell$. Assuming we are given a restricted action space $S\subset X$, we obtain $S$-regret $\Otilde(\sqrt{B\,|S|})$.

\OMIT{
Our regret guarantees for the above extensions are with respect to the \OptPolicy on the restricted action space. It is worth emphasizing that the benchmark equivalence result (the best fixed action is almost as good as the \OptPolicy, for regular demand distributions) applies only to the basic version of dynamic pricing. For example, no such result is known for multiple types of goods, even if the demand distribution for each individual type is regular. (See Appendix~\ref{ap:example} for a related discussion.)
} 

\subsection{Dynamic procurement and crowdsourcing markets}
\label{sec:apps-DynProcurement}

A ``dual" problem to dynamic pricing is \emph{dynamic procurement}, where the algorithm is buying rather than selling. In the basic version, the algorithm has a budget $B$ to spend, and is facing $T$ agents (potential sellers) that are arriving sequentially. In each round $t$, a new agent arrives, the algorithm chooses a price $p_t\in [1]$ and offers to buy one item at this price. The agent has private value $v_t\in [0,1]$ for an item (unknown to the algorithm), and sells if and only if $p_t \geq v_t$. The value is an independent sample from some fixed (but unknown) distribution. Algorithm's goal is to maximize the number of items bought. Recall from
    Theorem~\ref{thm:DynProcurement-discretization} that
we obtain regret $\Otilde\left( T/B^{1/4} \right)$ for this version.

\xhdr{Application to crowdsourcing markets.}
The problem is particularly relevant to the emerging domain of \emph{crowdsourcing}, where agents correspond to the (relatively inexpensive) workers on a crowdsourcing platform such as Amazon Mechanical Turk, and ``items" bought/sold correspond to simple jobs (``microtasks") that can be performed by these workers. The algorithm corresponds to the ``requester": an entity that submits jobs and benefits from them being completed. The (basic) dynamic procurement model captures an important issue in crowdsourcing that a requester interacts with multiple users with unknown values-per-item, and can adjust its behavior (such as the posted price) over time as it learns the distribution of users. While this basic model ignores some realistic features of crowdsourcing environments (see a survey \citet{Crowdsourcing-PositionPaper13} for background and discussion), some of these limitations are addressed by the generalizations which we present below.

\xhdr{Extension: non-unit supply.} We consider an extension where agents may be interested in more than one item, and their valuations may be non-linear. For example, a worker may be interested in performing several jobs. In each round $t$, the algorithm offers to buy up to $\Lambda$ units at a fixed price $p_t$ per unit, where the price $p_t$ is chosen by the algorithm and $\Lambda$ is a fixed parameter. The $t$-th agent then chooses how many units to sell. Recall from
    Theorem~\ref{thm:DynProcurement-discretization} that
we obtain regret
    $\tilde{O}(\Lambda^{5/4} T/B^{1/4})$
for this extension.

\xhdr{Extension: multiple types of jobs}. We can handle an extension in which there are $d$ types of jobs requested on the crowdsourcing platform, with a separate budget $B_i$ for each type. Each agent $t$ has a private cost $v_{t,i}\in [0,1]$ for each type $i$; the vector of private costs comes from a fixed but unknown distribution over $d$-dimensional vectors (note that  arbitrary correlations are allowed). The algorithm derives reward $u_i\in [0,1]$ from each job of type $i$. In each round $t$, the algorithm offers a vector of prices
    $(p_{t,1} \LDOTS p_{t,d})$,
where $p_{t,i}$ is the price for one job of type $i$. For each type $i$, the agent performs one job of this type if and only if $p_{t,i}\geq v_{t,i}$, and receives payment $p_{t,i}$ from the algorithm.

Here arms correspond to the $d$-dimensional vectors of prices, so that the action space is $X = [0,1]^d$. Given the restricted action space $S\subset X$, we obtain $S$-regret
        $\Otilde(d)(\sqrt{T\,|S|} + T\sqrt{d\,|S|/B})$,
where $B$ is the smallest budget. This follows from Lemma~\ref{lm:BwK-rescaled}, observing that per-round rewards are at most $r_0=d$, per-round consumption of each budget is at most $c_0=1$, and the optimal value is $\OPT\leq dT$.

\xhdr{Extension: additional features.} We can also model more complicated ``menus" so that each agent can perform several jobs of the same type. Then in each round, for each type $i$, the algorithm specifies the maximal offered number of jobs of this type and the price per one such job. We can also incorporate constraints on  the maximal number of jobs of each type that is needed by the requester, and/or the maximal amount of money spend on each type.

\xhdr{Extension: competitive environment.} There may be other requesters in the system, each offering its own vector of prices in each round. (This is a realistic scenario in crowdsourcing, for example.) Each seller / worker chooses the requester and the price that maximize her utility. One standard way to model such a competitive environment is to assume that the ``best offer" from the competitors is a vector of prices which comes from a fixed but unknown distribution. This can be modeled as a \BwK instance with a different distribution over outcomes which reflects the combined effects of the demand distribution of agents and the ``best offer" distribution of the environment.

\subsection{Other applications to Electronic Markets}
\label{sec:apps-other}

\xhdr{Ad allocation with unknown click probabilities.}
Consider \emph{pay-per-click} (PPC) advertising on the web (in particular, this is a prevalent model in sponsored search auctions). The central premise in PPC advertising is that an advertiser derives value from her ad only when the user clicks on this ad. The ad platform allocates ads to users that arrive over time.

Consider the following simple (albeit highly idealized) model for PPC ad allocation. Users arrive over time, and the ad platform needs to allocate an ad to each arriving user. There is a set $X$ of available ads. Each ad $x$ is characterized by the payment-per-click $\pi_x$ and click probability $\mu_x$; the former quantity is known to the algorithm, whereas the latter is not. If an ad $x$ is chosen, it is clicked on with probability $\mu_x$, in which case payment $\pi_x$ is received. The goal is to maximize the total payment. This setting and various extensions thereof that incorporate user/webpage context have received a considerable attention in the past several years (starting with \citep{yahoo-bandits07,yahoo-bandits-icml07,Langford-nips07}). In fact, the connection to PPC advertising has been one of the main motivations for the recent surge of interest in MAB.

We enrich the above setting by incorporating advertisers' \emph{budgets}. In the most basic version, for each ad $x$ there is a budget $B_x$ --- the maximal amount of money that can be spent on this ad. More generally, an advertiser can have an ad campaign which consists of a subset $S$ of ads, so that there is a per-campaign budget $S$. Even more generally, an advertiser can have a more complicated budget structure: a family of overlapping subsets $S\subset X$ and a separate budget $B_S$ for each $S$. For example, BestBuy can have a total budget for the ad campaign, and also separate budgets for ads about TVs and ads about computers. Finally, in addition to budgets (i.e., constraints on the number of times ads are clicked), an advertiser may wish to have similar constraints on the number of times ads are shown. \BwK allows us to express all these constraints.

\xhdr{Adjusting a repeated auction.}
An auction is held in every round, with a fresh set of participants. The number of participants and a vector of their types come from a fixed but unknown distribution. The auction is \emph{adjustable}: it has some parameter that the auctioneer adjust over time so as to optimize revenue. For example, \citet{RepeatedAuctions-soda13} studies a repeated second price auction with an adjustable reserve price, with unlimited inventory of a single product. \BwK framework allows to incorporate limited inventory of items to be sold at the auction, possibly with multiple products.

\xhdr{Repeated bidding.} A bidder participates in a repeated auction, such as a sponsored search auction. In each round $t$, the bidder can adjust her bid $b_t$ based on the past performance. The outcome for this bidder is a vector $(p_t, u_t)$, where $p_t$ is the payment and $u_t$ is the utility received. We assume that this vector comes from a fixed but unknown distribution. The bidder has a fixed budget. Similar setting have been studied in \citep{AminK-uai12,TranThanh-uai14}, for example.

We model this as a \BwK problem where arms correspond to the possible bids, and the single resource is money. Note that (the basic version of) dynamic procurement corresponds to this setting with two possible outcome vectors $(p_t,u_t)$: $(0,0)$ and $(b_t,1)$.

The \BwK setting also allows to incorporate more complicated constraints. For example, an action can result in several different types of outcomes that are useful for the bidder (e.g., an ad shown to a \emph{male} or an ad shown to a \emph{female}), but the bidder is only interested in a limited quantity of each outcome.

\subsection{Application to network routing and scheduling}
\label{sec:apps-scheduling}

In addition to applications to Electronic Markets, we describe two applications to network routing and scheduling. In both applications an algorithm chooses between different feasible policies to handle arriving ``service requests'', such as connection requests in network routing and jobs in scheduling.

\xhdr{Adjusting a routing protocol.} Consider the following stylized application to routing in a communication network. Connection requests arrive one by one. A connection request consists of a pair of terminals; assume the pair comes from a fixed but unknown distribution. The system needs to choose a \emph{routing protocol} for each connection, out of several possible routing protocols. The routing protocol defines a path that connects the terminals; abstractly, each protocol is simply a mapping from terminal pairs to paths. Once the path is chosen, a connection between the terminals is established. Connections persist for a significant amount of time. Each connection uses some amount of bandwidth. For simplicity, we can assume that this amount is fixed over time for every connection, and comes from a fixed but unknown distribution (although even a deterministic version is interesting). Each edge in the network (or perhaps each node) has a limited capacity: the total bandwidth of all connections that pass though this edge or node cannot exceed some value. A connection which violates any capacity constraint is terminated.  The goal is to satisfy a maximal number of connections.

We model this problem as \BwK as follows: arms correspond to the feasible routing protocols, each edge/node is a limited resource, each satisfied connection is a unit reward.

Further, if the time horizon is partitioned in epochs, we can model different bandwidth utilization in each phase; then a resource in \BwK is a pair (edge,epoch).
\OMIT{\footnote{This form of the problem generalizes a network routing problem that was communicated to us by Luyi Gui~\cite{Gui-comm}.}}

\xhdr{Adjusting a scheduling policy.} An application with a similar flavor arises in the domain of scheduling long-running jobs to machines. Suppose jobs arrive over time. Each job must be assigned to one of the machines (or dropped); once assigned, a job stays in the system forever (or for some number of ``epochs"), and consumes some resources.  Jobs have multiple ``types" that can be observed by the scheduler. For each type, the resource utilization comes from a fixed but unknown distribution. Note that there may be multiple resources being consumed on each machine: for example, jobs in a datacenter can consume CPU, RAM, disk space, and network bandwidth. Each satisfied job of type $i$ brings utility $u_i$.  The goal of the scheduler is to maximize utility given the constrained resources.

The mapping of this setting to \BwK is straightforward. The only slightly subtle point is how to define the arms: in \BwK terms, arms correspond to all possible mappings from job types to machines.

One can also consider an alternative formulation where there are several allowed scheduling policies (mappings from types and current resource utilizations to machines), and in every round the scheduler can choose to use one of these policies. Then the arms in \BwK correspond to the allowed policies.

\hspace*{1cm} \\[1em]
\noindent {\Large \bf Acknowledgements} \\[1em]
The authors wish to thank Moshe Babaioff,
Peter Frazier, Luyi Gui, Chien-Ju Ho and Jennifer Wortman Vaughan
for helpful discussions related to this work. In particular, the application to routing protocols generalizes a network routing problem that was communicated to us by Luyi Gui. The application of dynamic procurement to crowdsourcing have been suggested to us by Chien-Ju Ho and Jennifer Wortman Vaughan. We are grateful to anonymous JACM referees for their thorough and insightful feedback.

\addcontentsline{toc}{section}{\bibname}

\bibliographystyle{plainnat}
\bibliography{bib-abbrv,bib-slivkins,bib-bandits,bib-AGT}

\appendix
\section{The \OptPolicy beats the best fixed arm}
\label{ap:example}

Let us provide additional examples of \BwK problem instances in which the \OptPolicy (in fact, the best fixed distribution over arms) beats the best fixed arm.

\xhdr{Dynamic pricing.}
Consider the basic setting of ``dynamic pricing with limited supply'': in each round a potential buyer arrives, and the seller offers him one item at a price; there are $k$ items and $n>k$ potential buyers. One can easily construct distributions for which
offering a mixture of two prices is strictly superior to offering
any fixed price. In fact this situation arises whenever the ``revenue
curve'' (the mapping from prices to expected revenue) is non-concave and its value at the quantile $k/n$ lies below its concave hull.

Consider a simple example: fix $\eps=k^{\delta-1/2}$ with $\delta\in(0,\tfrac12)$, and assume that the buyer's value for an item is $v=1$ with probability $\eps\,\tfrac{k}{n}$ and $v=\eps$ with the remaining probability, for some fixed $\eps\in(0,1)$.

To analyze this example, let $\Rew(\D)$ be the expected total reward (i.e., the expected total revenue) from using a fixed distribution $\D$ over prices in each round; let $\Rew(p)$ be the same quantity when $\D$ deterministically picks a given price $p$.

\begin{itemize}
\item Clearly, if one offers a fixed price in all rounds, it only makes sense to offer prices $p=\eps$ and $p=1$.
It is easy to see that
 $\Rew(\eps)=\eps k$
and
$\Rew(1) \leq n\cdot \Pr[\text{sale at price $1$ }] = \eps k$.

\item Now consider a distribution $\D$ which picks price $\eps$ with probability $(1-\eps)\tfrac{k}{n}$,
and picks price $1$ with the remaining probability. It is easy to show that
        $\Rew(D) \geq \eps k(2-o(1))$.
\end{itemize}

So, $\Rew(\D)$ is essentially twice as large compared to the total expected revenue of the best fixed arm.

\xhdr{Dynamic procurement.}
A similar example can be constructed in the domain of dynamic procurement. Consider the basic setting thereof: in each round a potential seller arrives, and the buyer offers to buy one item at a price; there are $T$ sellers and the buyer is constrained to spend at most budget $B$. The buyer has no value for left-over budget and each sellers value for the item is drawn i.i.d from an unknown distribution. Then a mixture of two prices is strictly superior to offering
any fixed price whenever the ``sales curve'' (the mapping from prices to probability of selling) is non-concave and its value at the quantile $B/T$ lies below its concave hull.

Let us provide a specific example. Fix any constant $\delta>0$, and let $\eps=B^{1/2+\delta}$. Each seller has the following two-point demand distribution: the seller's value for item is $v=\asedit{0}$ with probability $\tfrac{B}{T}$, and $v=\asedit{1}$ with the remaining probability. We use the notation $\Rew(\D)$ and $\Rew(p)$ as defined above.

\begin{itemize}
\item Clearly, if one offers a fixed price in all rounds, it only makes sense to offer prices $p=0$ and $p=1$.
It is easy to see that
 $\Rew(0)\leq T\cdot \Pr[\text{selling at price $0$}]=B$
and
$\Rew(1)=B$.

\item Now consider a distribution $\D$ which picks price $0$ with probability $1-\frac{B-\eps}{T}$, and picks price $1$ with the remaining probability. It is easy to show that
        $\Rew(\D) \geq (2-o(1))\, B$.
\end{itemize}

Again, $\Rew(\D)$ is essentially twice as large compared to the total expected sales of the best fixed arm.

\section{\kMAB beats \pdbwk sometimes}
\label{app:Balance-wins}

We provide a simple example in which \kMAB achieves much better regret than (what we can prove for) \pdbwk. The reason is that \kMAB is aware of the BwK domain, whereas \pdbwk is not. More precisely, \kMAB is parameterized by $\domain$, the set of all latent structures that are feasible for the BwK domain.

The example is a version of the deterministic example from Section~\ref{sec:problem-description}. There is a time horizon $T$ and two other resources, both with budget $B<T/2$. There are $m$ arms, partitioned into two same-size subsets, $X_1$ and $X_2$. Per-round rewards and per-round resource consumptions are deterministic for all arms. All arms get per-round reward $1$. For each resource $i$, each arm in $X_i$ only consumes this resource. Letting $c_i(x)=c_i(x,\mu)$ denote the (expected) per-round consumption of resource $i$ by arm $x$, one of the following holds:
\begin{itemize}
\item[(i)] $c_1(x_1) =1$ and $c_2(x_2)=\tfrac12$ for all arms $x_1\in X_1, x_2\in X_2$, or
\item[(ii)] $c_1(x_1) =\tfrac12$ and $c_2(x_2)=1$ for all arms $x_1\in X_1, x_2\in X_2$.
\end{itemize}

\xhdr{Analysis.} Note that an \OptPolicy alternates the two arms in proportion, $1:2$ or $2:1$, depending on the case, and an LP-optimal distribution over arms samples them in the same proportion.

The key argument is that, informally, \kMAB can tell (i) from (ii) after the initial $O(m\log T)$ rounds. In the specification of \kMAB, consider the confidence radius for the per-round consumption of resource $i$, as defined in \eqref{eq:algo-estimate-c}. After any one arm $x$ is played at least $C\,\log T$ rounds, for a sufficiently large absolute constant $C$, this confidence radius goes below $1/4$. Therefore any two latent structures $\mu$, $\mu'$ in the confidence interval satisfy
    $|c_i(x,\mu) - c_i(x,\mu')| <\tfrac12$.
It follows that the confidence interval consists of a single latent structure, either the one corresponding to (i) or the one corresponding to (ii), which is the correct latent structure for this problem instance. Accordingly, the chosen distribution over arms, being ``potentially perfect'' by design, is LP-optimal. Thus, \kMAB uses the LP-optimal distribution over arms after the initial $O(m\log T)$ rounds.

The resulting regret is
    $\tilde{O}(m + \sqrt{B})$,
where the $\sqrt{B}$ term arises because the empirical frequencies of the two arms can deviate by $O(\sqrt{B})$ from the optimal values. Whereas with algorithm $\pdbwk$ we can only guarantee regret
    $\tilde{O}(\sqrt{mB})$.

\section{Analysis of the Hedge Algorithm}
\label{app:hedge}

We provide a self-contained proof of Proposition~\ref{prop:hedge}, the performance guarantee for the $\mathsf{Hedge}$ algorithm from \citet{FS97}. The presentation is adapted from \citet{cs683week2}.

For the sake of convenience, we restate the algorithm and the proposition. It is an online algorithm for maintaining
a $d$-dimensional probability vector $y$ while
observing a sequence of
$d$-dimensional payoff vectors $\pi_1,\ldots,\pi_\stime$.
The algorithm is initialized with a parameter
$\eps \in (0,1)$.
\begin{algorithm}[H]
\myCaption{$\mathsf{Hedge}(\eps)$}
\begin{algorithmic}[1]
\STATE $v_1 = \ones$
\FOR{$t=1,2,\ldots,\stime$}
\STATE $y_t = v_t / (\ones^\trans v_t)$.
\STATE $v_{t+1} = \diag{(1+\eps)^{\pi_{ti}}} v_t$.
\ENDFOR
\end{algorithmic}
\end{algorithm}

The performance guarantee of the algorithm is expressed
by the following proposition.
\begin{proposition*}[Proposition~\ref{prop:hedge}, restated]
\label{prop:hedge-app}
For any $0 < \eps < 1$ and any sequence of payoff
vectors $\pi_1,\ldots,\pi_\stime \in [0,1]^d$, we have
\[
\forall y \in \splx{d} \quad
\sum_{t=1}^\stime y_t^\trans \pi_t \geq (1-\eps) \sum_{t=1}^\stime y^\trans \pi_t
- \frac{\ln d}{\eps}.
\]
\end{proposition*}
\begin{proof}
The analysis uses the potential function $\Phi_t = \ones^\trans v_t$.
We have
\begin{align*}
\Phi_{t+1} &= \ones^\trans \diag{(1+\eps)^{\pi_{ti}}} v_t \\
&= \sum_{i=1}^d (1+\eps)^{\pi_{ti}} v_{t,i} \\
&\leq \sum_{i=1}^d (1 + \eps \pi_{ti}) v_{t,i} \\
&= \Phi_t \left( 1 + \eps y_t^\trans \pi_t \right) \\
\ln(\Phi_{t+1}) & \leq \ln(\Phi_t) + \ln(1 + \eps y_t^\trans \pi_t) \leq
\ln(\Phi_t) + \eps y_t^\trans \pi_t.
\end{align*}
On the third line, we have used the inequality $(1+\eps)^x \leq 1+\eps x$
which is valid for $0 \leq x \leq 1$. Now, summing over $t=1,\ldots,\stime$
we obtain
\[
\sum_{t=1}^\stime y_t^\trans \pi_t \geq
\tfrac{1}{\eps} \left( \ln \Phi_{\stime+1} - \ln \Phi_1 \right) =
\tfrac{1}{\eps} \ln \Phi_{\stime+1} - \frac{\ln d}{\eps}.
\]
The maximum of
$y^\trans \left( \sum_{t=1}^\stime \pi_t \right)$
over $y \in \splx{d}$ must be attained at one
of the extreme points of $\splx{d}$, which are simply
the standard basis vectors of $\R^d$. Say that the
maximum is attained at $\sbv_i$. Then we have
\begin{align*}
\Phi_{\stime+1} & = \ones_\trans v_{\stime+1} \geq v_{\stime+1,i}
= (1+\eps)^{\pi_{1i} + \cdots + \pi_{\stime i}} \\
\ln \Phi_{\stime+1} & \geq \ln(1+\eps) \sum_{t=1}^\stime \pi_{ti} \\
\sum_{t=1}^\stime y_t^\trans \pi_t &\geq
\frac{\ln(1+\eps)}{\eps} \sum_{t=1}^\stime \pi_{ti}
 - \frac{\ln d}{\eps} \\
& \geq
(1-\eps) \sum_{t=1}^\stime y^\trans \pi_t
 - \frac{\ln d}{\eps}.
\end{align*}
The last line follows from two observations. First, our
choice of $i$ ensures that $\sum_{t=1}^\stime \pi_{ti} \geq
\sum_{t=1}^\stime y^\trans \pi_t$ for every $y \in \splx{d}$.
Second, the inequality $\ln(1+\eps) > \eps-\eps^2$ holds
for every $\eps > 0$. In fact,
\begin{align*}
-\ln(1+\eps) &=
\ln \left( \frac{1}{1+\eps} \right) =
\ln \left( 1 - \frac{\eps}{1+\eps} \right) <
- \frac{\eps}{1+\eps} \\
\ln(1+\eps) &> \frac{\eps}{1+\eps} > \frac{\eps(1-\eps^2)}{1+\eps} =
\eps- \eps^2. \qedhere
\end{align*}
\end{proof}

\section{Facts for the proof of the lower bound}
\label{app:facts}

For the sake of completeness, we provide self-contained proofs for the two facts used in Section~\ref{sec:LB}.

\begin{fact*}[Fact~\ref{fact:stopping}, restated]
Let $S_t$ be the sum of $t$ i.i.d. 0-1 variables with expectation $q$. Let $\tau$ be the first time this sum reaches a given number $B\in \N$. Then $\E[\tau] = B/q$. Moreover, for each $T>\E[\tau]$ it holds that
\begin{align}\label{eq:app-fact-int}
    \textstyle \sum_{t>T} \; \Pr[\tau\geq t] \leq \E[\tau]^2/T.
\end{align}
\end{fact*}
\begin{proof}
$\E[\tau] = B/q$ follows from the martingale argument presented in the proof of Claim~\ref{cl:LB-Fam-stopping}. Formally, take $q=p-\eps$ and $N_\tau = \tau$.

\OMIT{Define $b_t$ as the random variable which denotes the amount of resource used at the end of time $t$.}

Assume $T>\E[\tau]$. The proof of \eqref{eq:app-fact-int} uses two properties, one being that a geometric random variable is memoryless and other being Markov's inequality.
Let us first bound the random variable $\tau-T$ conditional on the event that $\tau>T$.
\begin{align*}
\E[\tau-T|\tau>T]
    &= \textstyle \sum_{t=1}^B\; \Pr[S_T=t] \; \E[\tau-T|\tau>T,S_T=t] \\
    &=\textstyle  \sum_{t=1}^B\; \Pr[S_T=t] \; \E[\tau-T|S_T=t] \\
    &\leq \textstyle  \sum_{t=1}^B\; \Pr[S_T=t] \; \E[\tau-T|S_T=0] \\
    &\leq\E[\tau-T|S_T=0]=\E[\tau].
\end{align*}

By Markov's inequality we have
    $\Pr[\tau\geq T]\leq \frac{\E[\tau]}{T}$.
Combining the two inequalities, we have
\begin{align*}
\textstyle \sum_{t>T}\;\Pr[\tau\geq t]
    &= \textstyle \sum_{t>T}\; \Pr[\tau\geq T]\;\; \Pr[\tau\geq t| \tau>T] \\
    &=\Pr[\tau\geq T] \;\; \E[\tau-T|\tau> T] \\
    &\leq \E[\tau]^2/T. \qquad\qquad \qedhere
\end{align*}
\end{proof}

\begin{fact*}[Fact~\ref{fact:logs}, restated]
Assume
    $\tfrac{\eps}{p}\leq \tfrac12 $ and $p\leq \tfrac12$.
Then
$$ p\,\log \left( \frac{p}{p-\eps} \right) +
    (1-p)\,\log \left( \frac{1-p}{1-p+\eps} \right)
    \leq \frac{2 \eps^2}{p}.
$$
\end{fact*}
\begin{proof}
To prove the inequality we use the following standard inequalities:
\begin{align*}
\log(1+x) &\geq x-x^2/2    &\forall x\in[0,1]\\
\log(1-x) &\geq -x-x^2     &\forall x\in[0,\tfrac12].
\end{align*}
It follows that:
\begin{align*}
p\,\log \left( \frac{p}{p-\eps} \right) +
    (1-p)\,\log \left( \frac{1-p}{1-p+\eps} \right)
&= - p\, \log\left(1-\frac{\eps}{p}\right) - (1-p)\, \log\left(1+\frac{\eps}{1-p}\right) \\
&\leq p\, \left(\frac{\eps}{p}+\frac{\eps^2}{p^2}\right)+(1-p)\, \left(-\frac{\eps}{1-p}+\frac{\eps^2}{(1-p)^2}\right) \\
&= \frac{\eps^2}{p}+\frac{\eps^2}{1-p}
\leq \frac{2 \eps^2}{p} \qedhere
\end{align*}
\end{proof}

\end{document}